\newcommand{\maxdeg}{{\ensuremath{\Delta}}\xspace}
\newcommand{\mopt}{\ensuremath{{M^*}}\xspace}
\newcommand{\m}{\ensuremath{M}\xspace}
\newcommand{\oneOverTwo}{\ensuremath{{\frac{1}{2}}}\xspace}
\newcommand{\twoOverThree}{\ensuremath{{\frac{2}{3}}}\xspace}
\newcommand{\desiredRatio}{\ensuremath{{\frac{\maxdeg-1}{2\maxdeg-3}}}\xspace}
\newcommand{\weakerRatio}{\ensuremath{{\frac{\maxdeg-1/2}{2\maxdeg-2}}}\xspace}
\newcommand{\transferred}{\ensuremath{\theta}\xspace}
\newcommand{\mingreedy}{\textsc{MinGreedy}\xspace}
\newcommand{\karpsipser}{\textsc{KarpSipser}\xspace}
\newcommand{\mrg}{\textsc{MRG}\xspace}
\newcommand{\greedy}{\textsc{Greedy}\xspace}
\newcommand{\ranking}{\textsc{Ranking}\xspace}
\newcommand{\shuffle}{\textsc{Shuffle}\xspace}
\newcommand{\greedymatching}{\ensuremath{\mathcal{APV}}\xspace}
\newcommand{\g}{{\ensuremath{G}}\xspace}
\newcommand{\mg}{{\ensuremath{H}}\xspace}
\newcommand{\w}[1]{{\ensuremath{m_{#1}}}\xspace}
\newcommand{\wopt}[1]{{\ensuremath{m_{#1}^*}}\xspace}
\newcommand{\mgapprox}{{\ensuremath{\alpha}}\xspace}
\newcommand{\edge}[1]{\ensuremath{\{#1\}}\xspace}
\newcommand{\dedge}[1]{\ensuremath{(#1)}\xspace}
\newcommand{\inlineheading}[1]{\smallskip\textbf{#1}}
\newcommand{\oneonepath}{singleton\xspace}
\newcommand{\oneonepaths}{singletons\xspace}
\newcommand{\onetwopath}{\oneOverTwo-path\xspace}
\newcommand{\mmoptpath}{augmenting path\xspace}
\newcommand{\mmoptpaths}{augmenting paths\xspace}
\newcommand{\onetwomingreedy}{\mbox{\textsc{1-2-MinGreedy}}\xspace}
\newcommand{\degreeone}{\mbox{degree-1}\xspace}
\newcommand{\don}[1]{\ensuremath{\boldsymbol{(#1)}}\xspace}
\newtheorem{observation}{Observation}
\tikzset{
 every node/.style={
  circle,
  draw=black,
  minimum size=4mm,
  inner sep=0mm
 },
 opt/.style={
  double=white,
  double distance=2pt
 },
 mg/.style={
  postaction={
   decorate,
   decoration={
    markings,
    mark=at position 0.5 with {
     \draw (-0.04,0.15) -- (-0.04,-0.15)
           ( 0.04,0.15) -- ( 0.04,-0.15);
    }
   }
  }
 },
 transfer/.style={
  >=triangle 45,
  ->
 },
 nonMcov/.style={
  fill=lightgray
 },
 altpath/.style={
  decoration={snake},
  decorate
 },
 alabel/.style={
  rectangle,
  fill=none,
  draw=none
 }
}
\crefname{section}{Section}{Sections}
\crefname{remark}{Remark}{Remarks}
\crefname{proposition}{Proposition}{Propositions}
\crefname{lemma}{Lemma}{Lemmas}
\crefname{theorem}{Theorem}{Theorems}
\crefname{definition}{Definition}{Definitions}
\crefname{figure}{Figure}{Figures}
\crefname{appendix}{Appendix}{Appendices}
\crefname{table}{Table}{Tables}
\crefname{equation}{\unskip}{\unskip}
\crefname{corollary}{Corollary}{Corollaries}
\crefname{observation}{Observation}{Observations}
\crefname{enumi}{\unskip}{\unskip}
\title{\large Approximation Bounds For Minimum Degree Matching}
\author{Bert Besser\thanks{Partially supported by DFG SCHN 503/6-1.}}
\institute{Institut f\"ur Informatik, Goethe-Universit\"at Frankfurt am Main, Germany}
\begin{document}

\maketitle

\begin{abstract}
\vspace{-.5cm}
We consider the \mingreedy strategy for Maximum Cardinality Matching. \mingreedy repeatedly selects an edge incident with a node of minimum degree. For graphs of degree at most $ \maxdeg $ we show that \mingreedy achieves approximation ratio at least $ \frac{\maxdeg-1}{2\maxdeg-3} $ in the worst case and that this performance is optimal among adaptive priority algorithms in the vertex model, which include many prominent greedy matching heuristics.

Even when considering expected approximation ratios of randomized greedy strategies, no better worst case bounds are known for graphs of small degrees.
\\[.4\baselineskip]
{\bf Keywords:} matching, greedy, approximation, priority algorithm.
\end{abstract}

\section{Introduction}
\label{intro}

In the \inlineheading{Maximum Cardinality Matching} Problem a node disjoint subset of edges of maximum size is to be determined.
Matching problems have many applications, e.g. image feature matching in computer vision or protein structure comparison in computational biology.

A maximum matching can be found in polynomial time, e.g. by the algorithm of Micali and Vazirani \cite{DBLP:conf/focs/MicaliV80} running in time $ O(\sqrt{|V|}\cdot |E|\cdot\alpha(|E|,|V|)) $ where $ \alpha(|E|,|V|) $ is the inverse Ackermann function \cite{DBLP:journals/corr/abs-1210-4594}.
The algorithm of Mucha and Sankowski \cite{DBLP:conf/focs/MuchaS04} runs asymptotically faster on dense graphs, its runtime is $ O(|V|^\omega) $ where $ \omega<2.38 $ is the exponent needed to perform matrix multiplication.

However, there are much faster greedy algorithms that in practice compute very large matchings, even near optimal ones.
Very good approximate solutions may already be satisfactory in some applications.
If maximum matchings are needed, one can save lots of runtime when feeding large greedy matchings into optimal algorithms which iteratively improve an initial solution.

\begin{figure}
\centering
\begin{minipage}{.8\textwidth}
\footnotesize
\begin{alltt}
\(\m=\emptyset\)
repeat until all edges removed from input graph:
   select (random) node \(u\) of minimum non-zero degree
   select (random) neighbor \(v\) of \(u\)
   pick edge \(\{u,v\}\), i.e.\ set \(\m=\m\cup\{\{u,v\}\}\)
   remove all edges incident with \(u\) and \(v\) from input graph
return \(\m\)
\end{alltt}
\end{minipage}
\caption{The (randomized) \mingreedy algorithm}
\label{algo:mingreedy}
\end{figure}

\inlineheading{\mingreedy.}
The (randomized) \mingreedy algorithm computes a matching \m by repeatedly picking edges incident to nodes of currently minimum degree, see \cref{algo:mingreedy}.
\mingreedy can be implemented in linear time $ O(|V|+|E|) $.

The experimentally observed approximation performance is quite impressive.
Tinhofer \cite{tinhofer} observed that on random graphs of varying density \mingreedy performed superior to \greedy (which randomly selects an edge) and to~\mrg (which randomly selects a node and subsequently an incident edge).
In experiments of Frieze et al. \cite{Frieze:1993:ASG:313559.313800} on random cubic graphs \mingreedy left only about 10 out of $ 10^6 $ nodes unmatched.
On random graphs of small constant average degree Magun \cite{Magun97greedymatching} observed that \mingreedy produces extremely few ``lost edges'' in comparison with an optimal solution.

In an involved argument, Frieze et al. \cite{Frieze:1993:ASG:313559.313800} showed that $c_1\cdot  n^{\frac{1}{5}}\leq\lambda_n\leq c_2\cdot n^{\frac{1}{5}}\cdot\ln(n)$ holds ($c_1,c_2$ being constants) for the expected number $\lambda_n$ of nodes not being matched by \mingreedy on random $ n $-node cubic graphs.

Whereas the performance of \mingreedy on random instances is very good, its worst case performance is poor.
Poloczek \cite{matthiasDiss} constructed hard input instances on which \mingreedy (and common variations of the algorithm) achieves approximation ratio at most $ \oneOverTwo+o(1) $ w.h.p.

\inlineheading{\mrg, \greedy \& \shuffle.}
The Modified Random Greedy algorithm, abbreviated~\mrg in literature, ignores node degrees and repeatedly selects a node and then a neighbor uniformly at random.
The expected approximation ratio was shown to be at least $\oneOverTwo+\frac{1}{400.000}$ by Aronson et al. \cite{RSA:RSA3240060107}. 

The random edge algorithm \greedy repeatedly selects an edge uniformly at random.
For graphs with degrees bounded by \maxdeg an expected lower bound on the approximation ratio of $\frac{\maxdeg}{2\maxdeg-1}$ was shown by Dyer and Frieze \cite{DBLP:journals/rsa/DyerF91} and later improved by Miller and Pritikin \cite{DBLP:journals/rsa/MillerP97} to $ \oneOverTwo(\sqrt{(\maxdeg-1)^2+1}-\maxdeg+2) $.
If \greedy prefers edges of degree-1 nodes, the \karpsipser algorithm is obtained, which is asymptotically optimal w.h.p. on large sparse random graphs \cite{DBLP:conf/focs/KarpS81}.

The \shuffle algorithm, proposed by Goel and Tripathi \cite{DBLP:conf/focs/GoelT12}, is an adaptation of the \ranking algorithm of Karp et al. \cite{DBLP:conf/stoc/KarpVV90} to non-bipartite graphs.
\shuffle selects a random permutation $ \pi $ of the nodes and repeatedly matches the, according to $ \pi $, first non-isolated node to its first unmatched neighbor.
Chan et al. \cite{DBLP:conf/soda/ChanCWZ14} showed that \shuffle achieves an approximation ratio of at least $ 2\cdot(5-\sqrt{7})/9\approx0.523 $.

\inlineheading{Inapproximability.}
To show performance bounds for greedy algorithms, Borodin et al. \cite{DBLP:conf/soda/BorodinNR02} proposed the model of \emph{adaptive priority algorithms}.
The model formalizes the greedy nature of an algorithm:
while gathering knowledge about the input, irrevocable decisions have to be made to construct a solution.

Davis and Impagliazzo \cite{DBLP:conf/soda/DavisI04} introduced the \emph{vertex model} to study adaptive priority algorithms for graph problems.
Adaptive priority algorithms in the vertex model for the matching problem, which we call \greedymatching-algorithms, implement powerful node and edge selection routines. 
In particular, in each step a node $ v $ and an incident edge is not picked arbitrarily but based on all knowledge already gathered about $ v $ and its neighbors, e.g. is a neighbor matched or unmatched, what is the degree of a neighbor, what are the neighbors of a neighbor, etc.
\greedymatching-algorithms include~\greedy, \karpsipser, \mrg, \mingreedy, \shuffle and all vertex iterative algorithms, a class of algorithms defined in \cite{DBLP:conf/focs/GoelT12} as a generalization of \shuffle.

Despite the strength of \greedymatching-algorithms, Poloczek \cite{matthiasDiss} constructed rather simple graphs with worst case approximation ratio at most \twoOverThree.
(He also showed an inapproximability bound of $ \frac{5}{6} $ for randomized priority algorithms.)
Presented in the same thesis, for graphs with arbitrarily large degree Besser and Poloczek showed that no \greedymatching-algorithm achieves worst case approximation ratio better than $\frac{1}{2} + o(1)$.

\inlineheading{Contributions.}
From now on we reserve the name \mingreedy for the deterministic version of \mingreedy in which a node of minimum degree and an incident edge is picked by a worst case adversary.

\smallskip

We show that \mingreedy approximates an optimal matching within a factor of \desiredRatio for graphs in which degrees are bounded by at most \maxdeg.
In the proof we analyze a variant of \mingreedy which is also related with the \karpsipser algorithm.

\smallskip

We also show that the worst case approximation performance of \mingreedy is optimal for (deterministic) \greedymatching-algorithms.
In particular, we improve the construction of Besser and Poloczek given in \cite{matthiasDiss} and present hard input instances of degree at most \maxdeg for which any \greedymatching-algorithm computes a matching of size at most $ \desiredRatio+o(1) $ times optimal.

\smallskip

Our worst case performance guarantees are stronger than the best known bounds on the expected performance of \mrg and \shuffle ($ \oneOverTwo{+}\frac{1}{400.000} $ resp.\ $ {\approx}0.523 $), for small \maxdeg, and of \greedy ($ {\approx}0.618 $ for $ \maxdeg{=}3 $), for all~\maxdeg.

\inlineheading{Techniques.}
For our performance guarantees for \mingreedy we study the matching graph composed of the edges of a matching \m, computed by \mingreedy, and of a maximum matching \mopt.
The connected components are alternating paths and cycles.
Only paths of length three have poor ``local'' approximation ratio (of \m-edges to \mopt-edges).
To obtain a global performance guarantee, we balance local approximation ratios by transferring ``\m-funds'' from rich to poor components using edges of the input graph.

Incorporating the properties of \mingreedy within an amortized analysis is our technical contribution.

\inlineheading{Overview.}
In \cref{sect:theChargingScheme} we present the charging scheme used to prove the performance guarantees for \mingreedy.
In \cref{tightPerfBounds} we show our \twoOverThree bound for graphs of degree at most~$ \maxdeg=3 $.
For graphs of degree at most $ \maxdeg\geq4 $ we present in \cref{sect:onetwomingreedy4} our performance guarantee of \desiredRatio.
Our inapproximability results for \greedymatching-algorithms are given in \cref{hardinstances}.

\pagebreak
\section{The Charging Scheme}
\label{sect:theChargingScheme}

\inlineheading{The Matching Graph.}
Let $\g=(V,E)$ be a connected graph, \mopt a maximum matching in \g and \m a matching computed by \mingreedy when applied to input \g.
To analyze the worst case approximation ratio of \mingreedy we investigate the graph
$$ \mg=(V,\m\cup\mopt)\,. $$
The connected components of \mg are paths and cycles composed of edges of \m and \mopt.
For example, \mg contains so-called \emph{(\m-)\mmoptpaths}, or \emph{paths} for short:
an \mmoptpath $ X $ has $ \w{X} $ edges of \m and $\wopt{X}=\w{X}+1$ edges of \mopt and starts and ends with an \mopt-edge:
\begin{center}

\reflectbox{\scalebox{.75}{
\begin{tikzpicture}
\node (mopt1) {};
\node[right of=mopt1] (mopt2) {};
\node[right of=mopt2,right of=mopt2] (mopt3) {};
\node[right of=mopt3] (mopt4) {};
\node[right of=mopt4] (mopt5) {};
\node[right of=mopt5] (mopt6) {};
\node[right of=mopt6] (mopt7) {};

\draw
(mopt1) edge[opt] (mopt2)
(mopt2) edge[altpath] (mopt3)
(mopt3) edge[mg] (mopt4)
(mopt4) edge[opt] (mopt5)
(mopt5) edge[mg] (mopt6)
(mopt6) edge[opt] (mopt7)
;
\end{tikzpicture}
}}

\end{center}
We call the two nodes of a path~$ X $ which do not have an incident~\m-edge the \emph{endpoints} of~$ X $ (the leftmost and the rightmost node in the figure).
Other connected components of \mg are edges of $ \m\cap\mopt $
\begin{center}
\reflectbox{\scalebox{.75}{
\begin{tikzpicture}

\node (s1) {};
\node[right of=s1] (s2) {};

\draw (s1) edge[opt,mg] (s2);
\end{tikzpicture}
}},
\end{center}
which we call \emph{\oneonepaths}.
For a \oneonepath $ X $ we have $ \w{X}=\wopt{X}=1 $.
We may focus on these two component types:
\begin{lemma}
There is a maximum matching \mopt in \g such that each connected component of \mg is a \oneonepath or an \mmoptpath.
\end{lemma}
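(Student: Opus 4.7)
The plan is to start from an arbitrary maximum matching in \g and then to \emph{replace it by a maximum matching \mopt that maximises the overlap $|\m\cap\mopt|$}. I claim that this choice alone forces every edge-containing connected component of $\mg=(V,\m\cup\mopt)$ to be a \oneonepath or an \mmoptpath.

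Since \m and \mopt are both matchings, every vertex of \mg has degree at most two, so every component containing an edge is an alternating simple path or an alternating even cycle. I first rule out the easy bad cases by matching-size arguments. A component consisting of a single \m-edge $\{u,v\}$ would leave $u,v$ unmatched in \mopt, so $\mopt\cup\{\{u,v\}\}$ would be strictly larger, contradicting maximality of \mopt. Symmetrically, a lone \mopt-edge would contradict the fact that \mingreedy leaves \m maximal. A path whose two terminal edges both lie in \m is \mopt-augmenting, again contradicting \mopt maximum. What remains to exclude are alternating cycles and ``mixed'' paths whose two terminal edges belong to different matchings.

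Suppose for contradiction that \mg has a component $Q$ that is either a cycle or a mixed path. In both situations the alternation along $Q$ forces $|Q\cap\m|=|Q\cap\mopt|$, and the edges of $Q\cap\m$ and $Q\cap\mopt$ are disjoint. Set
\[
 \widetilde\mopt \;=\; (\mopt\setminus(Q\cap\mopt))\;\cup\;(Q\cap\m).
\]
No new collision is created, because any conflict would require a vertex of $Q$ to have an \mopt-edge leaving $Q$, which is impossible since $Q$ is an entire connected component of \mg. Hence $\widetilde\mopt$ is a matching, and the equality $|Q\cap\m|=|Q\cap\mopt|$ gives $|\widetilde\mopt|=|\mopt|$, so $\widetilde\mopt$ is again maximum. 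But the overlap with \m grew by all $|Q\cap\m|>0$ edges (and no overlap was lost, as the removed edges lay in $\mopt\setminus\m$), contradicting the choice of \mopt.

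The main thing to be careful about is precisely this swap step: one must check that replacing \mopt locally on $Q$ really yields a matching of the same size. That relies essentially on $Q$ being a full component (ensuring no outside \mopt-conflict) together with the equality $|Q\cap\m|=|Q\cap\mopt|$, which is exactly what distinguishes cycles and mixed paths from \mmoptpaths, for which the analogous swap would shrink the matching and therefore is (correctly) blocked.
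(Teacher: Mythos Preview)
Your proof is correct and rests on the same key observation as the paper: on a cycle or a mixed path $Q$ one has $|Q\cap\m|=|Q\cap\mopt|$, so swapping the $\mopt$-edges of $Q$ for its $\m$-edges yields another maximum matching. The paper frames this constructively---start from an arbitrary maximum matching $M'$ and perform the swap on each offending component, turning it into a collection of singletons---whereas you frame it extremally, fixing $\mopt$ to maximise $|\m\cap\mopt|$ and deriving a contradiction from any swap. Your version is arguably a touch cleaner since you never need to describe what the swapped component becomes. One minor point: a lone $\mopt$-edge already qualifies as an augmenting path (the degenerate case $m_X=0$), so there is no need to rule it out separately, though your observation that $\m$ is maximal does correctly show such components cannot occur.
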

\begin{proof}
Let $ M' $ be a maximum matching in $\g=(V,E)$ and let \mingreedy compute the matching \m.
We show how to transform $ M' $ into \mopt.
The connected components of the graph $ (V,\m\cup M') $ are \oneonepaths and alternating paths and cycles, where a path starts and ends with an \m-edge or $ M' $-edge and a cycle does not have path endpoints.
To prove the statement we eliminate paths starting and ending both with an \m-edge, paths starting and ending with different types of edges, and cycles.
There is no path $ X $ of the first type, since if there was, then we could replace the $ m_{X}' $ many $ M' $-edges of $ X $ with the $ m_X'+1 $ many \m-edges of $ X $ to obtain a matching larger than $ M' $.
In a component $ X $ of the latter types we replace the $ M' $-edges of $ X $ with the \m-edges of $ X $:
component $ X $ is replaced by $ m_X $ many \oneonepaths.
\qed
\end{proof}

\inlineheading{Local Approximation Ratios.}
To bound the approximation ratio
$$ \mgapprox=|\m|/|\mopt| $$
of \mingreedy, our approach is to bound \emph{local approximation ratios}
$$ \mgapprox_X=\w{X}/\wopt{X}\,, $$
of components $ X $ of \mg.
Observe that a \emph{\onetwopath} $ X $
\begin{center} \reflectbox{\scalebox{.75}{
\begin{tikzpicture}

\node (bad1) {};
\node[right of=bad1] (bad2) {};
\node[right of=bad2] (bad3) {};
\node[right of=bad3] (bad4) {};

\draw
(bad1) edge[opt] (bad2)
(bad2) edge[mg] (bad3)
(bad3) edge[opt] (bad4)
;

\end{tikzpicture}
}},
\end{center}i.e.\ an \mmoptpath with $ \w{X}=1 $ edge of \m and $ \wopt{X}=2 $ edges of \mopt, has local approximation ratio $ \mgapprox_X=\oneOverTwo $, while all other components have local approximation ratios at least \twoOverThree.
In particular, a \oneonepath $ X $ has optimal local approximation ratio $ \mgapprox_X=1 $.
Thus we have to balance local approximation ratios.
For any component $ X $, we say that $ X $ \emph{has \m-funds} \w{X} and we introduce a change $ t_X $ to the \m-funds of $ X $ such that the changed local approximation ratio of $ X $ is lower bounded by
$$ \mgapprox_X=\frac{\w{X}+t_X}{\wopt{X}}\stackrel{!}{\geq}\beta$$
for appropriately chosen $ \beta $.
If $ \sum_Xt_X=0 $ holds, then the total \m-funds $ \sum_X \w{X}+t_X=\sum_X\w{X}=|\m|$ are unchanged and hence \mingreedy achieves approximation ratio at least
$$ \mgapprox=|\m|/|\mopt|=\left(\sum_X\w{X}+t_X\right)/|\mopt|\geq\left(\sum_X\beta\wopt{X}\right)/|\mopt|=\beta\,. $$

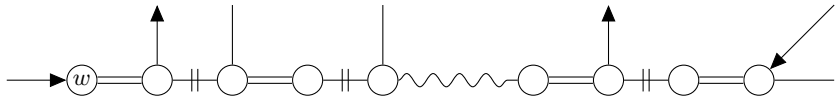
\begin{figure}[htbp!]
\centering
\scalebox{1}{
\begin{tikzpicture}
\node              (v1) {$ w $};
\node[right of=v1] (v2) {};
\node[right of=v2] (v3) {};
\node[right of=v3] (v4) {};
\node[right of=v4] (v5) {};
\node[right of=v5,right of=v5] (v6) {};
\node[right of=v6] (v7) {};
\node[right of=v7] (v8) {};
\node[right of=v8] (v9) {};

\draw[opt] (v1) -- (v2) (v3) -- (v4) (v6) -- (v7) (v8) -- (v9);
\draw[mg] (v2) -- (v3);
\draw[mg] (v4) -- (v5);
\draw[mg] (v7) -- (v8);

\draw[transfer,->] ($(v1)+(-1,0)$) -- (v1);
\draw[transfer,->] ($(v9)+(1,1)$) -- (v9);
\draw[] ($(v9)+(1,0)$) -- (v9);
\draw[transfer,->] (v2) -- ($(v2)+(0,1)$);
\draw[] (v3) -- ($(v3)+(0,1)$);
\draw[] (v5) -- ($(v5)+(0,1)$);
\draw[transfer,->] (v7) -- ($(v7)+(0,1)$);

\draw[altpath] (v5) -- (v6);
\end{tikzpicture}
}
\caption{
An \mmoptpath with transfers to its path endpoints and from its \m-covered nodes.
Not all $ F $-edges are used to move \m-funds.
}
\label{exampleCreditsDebits}
\end{figure}

\inlineheading{Transferring ${\m}$-Funds.}
The idea is to transfer \m-funds from rich components to poor components:
as an example, we could transfer \m-funds from a \oneonepath to a \onetwopath.
But which components should be involved in a transfer?

Observe that an augmenting path endpoint $ w $ is detrimental to its \mmoptpath $ X $, since $ w $ decreases the local approximation ratio of $ X $.
So a transfer should push \m-funds towards poor $ w $ from a rich \m-covered node.

Since the \g-neighbors of $ w $ are \m-covered (otherwise \m would not be maximal), our approach is to move \m-funds to $ w $ from \g-neighbors of $ w $ which belong to other components:
\m-funds are moved over certain edges in
$$ F:=E\setminus(\m\cup\mopt) $$
which connect the components of \mg, see \cref{exampleCreditsDebits} for an illustration.
We verify that a poor component, in particular a \onetwopath, is able to receive \m-funds:
\pagebreak
\begin{lemma}
\label{obs:potentialTransfer}
An \mmoptpath endpoint $ w $ is incident with $ F $-edge, since the degree of $ w $ in \g is at least
$$ d_\g(w)\geq2\,. $$
\end{lemma}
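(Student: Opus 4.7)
The plan is to first establish $d_\g(w) \geq 2$; incidence with an $F$-edge will then follow easily. Since $w$ is an endpoint of an \mmoptpath, it is \m-uncovered and incident with exactly one \mopt-edge $\{w,v\}$, so $d_\g(w) \geq 1$ holds trivially. I would assume $d_\g(w) = 1$ for contradiction. Under this assumption $v$ is the only \g-neighbor of $w$, and $w$ retains current degree $1$ throughout \mingreedy's execution as long as $v$ is still present. Because $w$ is never matched in \m, the edge $\{w,v\}$ is removed only when $v$ is first matched, say at step $t^*$, so the minimum current degree equals $1$ at every step up to and including $t^*$.

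Write $X = x_0, x_1, \ldots, x_{2k+1}$ with $x_0 = w$, $x_{2k+1} = w^*$ the other endpoint, and \m-edges $e_i = \{x_{2i-1}, x_{2i}\}$ for $1 \le i \le k$; let $\sigma(i)$ denote the step in which $e_i$ is added to \m. The heart of the argument is an induction on $i$ showing that the minimum-degree vertex selected by \mingreedy at step $\sigma(i)$ is $x_{2i}$, whose only current neighbor at that moment is $x_{2i-1}$, and that $\sigma(1) > \sigma(2) > \cdots > \sigma(k)$. In the base case $i = 1$, if instead $x_1 = v$ were the minimum-degree node at $\sigma(1) = t^*$, then $v$ would have current degree $1$; but both $w$ and $x_2$ are still $v$-neighbors just before $t^*$, giving $v$ current degree at least $2$, a contradiction. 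Hence $x_2$ is the minimum-degree node with $v$ as its only current neighbor, so the \mopt-edge $\{x_2, x_3\}$ has already been removed, forcing $x_3$ to have been matched earlier and $\sigma(2) < \sigma(1)$. The inductive step repeats this reasoning: $x_{2i-1}$ cannot be the minimum-degree node at $\sigma(i)$ because the \mopt-edge $\{x_{2i-1}, x_{2i-2}\}$ is still present at $\sigma(i)$ (its endpoint $x_{2i-2}$ is matched only at $\sigma(i-1) > \sigma(i)$), so $x_{2i}$ must be the minimum-degree node and $\sigma(i+1) < \sigma(i)$.

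Applying the inductive statement at $i = k$ forces the \mopt-edge $\{x_{2k}, x_{2k+1}\}$ to have been removed before $\sigma(k)$, which in turn requires $x_{2k+1} = w^*$ to have been matched by \mingreedy earlier; this contradicts $w^*$ being the second, \m-uncovered endpoint of $X$. Hence $d_\g(w) \geq 2$, and $w$ has some incident \g-edge $e$ distinct from $\{w,v\}$; since $w$ is \m-uncovered we have $e \notin \m$, and since \mopt already uses $\{w,v\}$ at $w$ we have $e \notin \mopt$, so $e \in F$. The principal delicate point in the plan is the induction itself: at each step along $X$ one must rule out that \mingreedy picks the path-vertex closer to $w$ as its minimum-degree node, and this uses the decreasing ordering $\sigma(1) > \cdots > \sigma(k)$ to guarantee that the relevant \mopt-edge pointing back towards $w$ has not yet been destroyed.
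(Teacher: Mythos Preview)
Your argument is correct, but it is far more elaborate than what the paper does. The paper's proof is essentially one line: consider the step at which the \emph{first} $M$-edge of the augmenting path $X$ is picked. At that moment the selected minimum-degree node $u$ is an $M$-covered node of $X$, and both its $M$-edge (the one being picked) and its $M^*$-edge are still present --- the $M^*$-neighbor of $u$ is either another $M$-covered node of $X$ (not yet matched, since this is the first $M$-edge of $X$) or a path endpoint (never matched). Hence $d(u)\geq 2$. Since no $M$-edge of $X$ has been picked before, the $M^*$-edge at $w$ is still present too, so $w$ is non-isolated and therefore $d(w)\geq d(u)\geq 2$, which gives $d_\g(w)\geq 2$.

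Your route instead assumes $d_\g(w)=1$ and runs an induction along the path, showing that the $M$-edges of $X$ must be picked in strictly decreasing order of distance from $w$ and that at each such step the node further from $w$ is the degree-$1$ selector; the contradiction comes only at the far endpoint $w^*$. This is valid --- the key point that each $x_{2i-1}$ still has its $M^*$-edge back toward $w$ at time $\sigma(i)$ (because $x_{2i-2}$ is matched only later, at $\sigma(i-1)>\sigma(i)$) is exactly what makes the induction go through. The trade-off is clear: the paper's argument is a one-shot observation about a single carefully chosen step, while yours extracts a complete structural description of the order in which the edges of $X$ are matched under the hypothesis $d_\g(w)=1$. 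That extra structure is not needed for the lemma, so the paper's approach is preferable here, but your method would be the natural one if you ever needed to control the full sequence of steps along a path.
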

\begin{proof}
When \mingreedy picks the first \m-edge of the \mmoptpath of $ w $, a node of degree at least two is selected (with an incident $ M $-edge and $ M^* $-edge).
So $ w $ also has degree at least two and is incident not only with its \mopt-edge.
\qed
\end{proof}

When transferring $ M $-funds over all $ F $-edges we face the danger of \mmoptpath endpoints having large degree and drawing very large amounts of \m-funds:
rich components might become poor and now have too small local approximation ratio.
The following definition prevents high degree path endpoints from wasting \m-funds.
\begin{definition}
\label{def:transfer}
Let $ w $ be an \mmoptpath endpoint. 
Edge $ \{v,w\}\in F$ is a \emph{transfer} if in the step of \mingreedy matching $ v $ the degree of $ w $ drops to at most $ d(w)\leq1 $.
A transfer moves one \emph{coin} with value~\transferred from the component of~$ v $ to the component of~$ w $.
\end{definition}

We frequently denote a transfer $ \{v,w\} $ as $ (v,w) $ to stress its direction.
In order to refer to transfers from/to a given component $ X $, we also call $ (v,w) $ a \emph{debit from} $ v $ and a \emph{credit to} $ w $.
I.e., debits from a component are transfers directed from its \m-covered nodes, credits to an \mmoptpath are transfers directed to its path endpoints.

\inlineheading{Bounding Local Approximation Ratios.}
Let $ X $ be a component of \mg and denote the numbers of debits and credits of nodes of $ X $ by $ d_X $ respectively $ c_X $.
We call $ c_X-d_X $ the \emph{balance} of $ X $.\footnote{Previous versions of this work use the notion of \emph{total debits} of a component~$ X $, which are defined as~$ d_X{-}c_X $.
This notion is equivalent to the notion of the balance of~$ X $.}
Given an upper bound on the debits from $ X $ and a lower bound on the credits to $ X $, we obtain a bound on the balance of $ X $ of at least
$$ c_X-d_X\geq T_X\,. $$
Whenever we move \m-funds over an edge $ \{v,w\}\in F $, we transfer an amount $\transferred$ of \m-funds to the \mmoptpath endpoint $ w $ using one coin.
Hence the local approximation ratio of $ X $ is at least
$$\mgapprox_X=
\frac{\w{X}+\transferred c_X-\transferred d_X}{\wopt{X}}
\geq\frac{\w{X}+\theta T_X}{\wopt{X}}\,.$$
In the analysis we find $ T_X $ and \transferred such that $ \mgapprox_X\geq\beta $ for all components.
Hence \mingreedy computes matchings of size at least $ \beta $ times optimal.

\pagebreak
\subsection{Balance Bounds}

For each path~$ X $ we demand the following:
both nodes of each~\m-edge~\edge{x,x'} of~$ X $ pay at most
\begin{align}
\label{def:edgedeb}
d_{\edge{x,x'}}\leq2(\maxdeg-2)\tag{$ - $Edge}
\end{align}
coins and~$ X $ receives at least
\begin{align}
\label{eqn:mintwocreds}
c_X\geq2\tag{$ + $Endpoints}
\end{align}
coins.
Therefore, the minimum balance of path~$ X $ is~$ 2-D_X $ for~$ D_X\!:=2\w{X}(\maxdeg-2) $.
In particular, bounds \cref{def:edgedeb,eqn:mintwocreds} 
are bounds on numbers of coins and not on numbers of transfers:
we demand that both hold w.r.t.\ transfers as well as \emph{donations}, the latter of which are another way to move coins that we introduce later.

To show that all components have local approximation ratio at least~$ \frac{\maxdeg-1}{2\maxdeg-3} $ we will verify balance bounds
\begin{align}
-d_X&~\geq-2(\maxdeg-1)~~+~~2&\text{for any singleton~$ X $ and}\tag{$ \pm $Singleton}\label{eqn:balanceboundsinglereg}\\
c_X-d_X&~\geq~ 2-D_X~~~~~\>\!+~~2(\maxdeg-2)&\text{for any path~$ X $.}\tag{$ \pm $Path}\label{eqn:balanceboundaugreg}
\end{align}

\begin{lemma}
If balance bounds \cref{eqn:balanceboundsinglereg,eqn:balanceboundaugreg} hold then all components have local approximation ratio at least~$ \frac{\maxdeg-1}{2\maxdeg-3} $.
\end{lemma}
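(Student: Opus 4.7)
The plan is to choose the coin value~$\transferred$ so that the singleton bound becomes tight, and then to verify that this same~$\transferred$ automatically yields the desired ratio for paths as well.

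First I would translate \cref{eqn:balanceboundsinglereg,eqn:balanceboundaugreg} into explicit lower bounds~$T_X$ on $c_X-d_X$, to feed into the inequality $\mgapprox_X\geq(\w{X}+\transferred T_X)/\wopt{X}$ from \cref{sect:theChargingScheme}. A singleton receives no credits (credits are to path endpoints), so \cref{eqn:balanceboundsinglereg} gives $T_X=-2(\maxdeg-2)$ and local ratio $1-2(\maxdeg-2)\transferred$. Demanding that this be at least~$\frac{\maxdeg-1}{2\maxdeg-3}$ forces $\transferred\leq\frac{1}{2(2\maxdeg-3)}$, so I would set $\transferred:=\frac{1}{2(2\maxdeg-3)}$ and note that the singleton bound then holds with equality.

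For a path~$X$ with $\w{X}$ edges of~$\m$ I have $T_X=2-D_X+2(\maxdeg-2)=2-2(\w{X}-1)(\maxdeg-2)$ from \cref{eqn:balanceboundaugreg}. Substituting this and the chosen~$\transferred$ into $(\w{X}+\transferred T_X)/(\w{X}+1)$ and clearing the factor~$2\maxdeg-3$, the coefficient of~$\w{X}$ in the numerator collapses to $(2\maxdeg-3)-(\maxdeg-2)=\maxdeg-1$ and the constant term to $1+(\maxdeg-2)=\maxdeg-1$. Hence the numerator factors as $(\maxdeg-1)(\w{X}+1)$, and the ratio equals exactly~$\frac{\maxdeg-1}{2\maxdeg-3}$ independently of~$\w{X}$. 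In particular the bottleneck~\onetwopath\ attains the bound with equality, matching the intuition from \cref{sect:theChargingScheme} that all longer paths enjoy a better local ratio and only need to be pulled up by a small amount.

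There is essentially no technical obstacle: the balance bounds \cref{def:edgedeb,eqn:mintwocreds} were engineered precisely so that a single value of~$\transferred$ handles singletons and paths simultaneously. The proof reduces to the clean algebraic identity above; no case split on~$\w{X}$ and no further properties of the underlying graph~$\g$ are needed, since the whole burden of the argument has been shifted into establishing the balance bounds themselves, which is the task of the subsequent sections.
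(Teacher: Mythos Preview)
Your proposal is correct and follows essentially the same route as the paper: choose $\transferred=\frac{1}{2(2\maxdeg-3)}$ so that the singleton bound is tight, then verify the path bound algebraically. The only cosmetic difference is that the paper rewrites the path ratio as $1-2\transferred(\maxdeg-2)+\frac{2\transferred(2\maxdeg-3)-1}{\w{X}+1}$ and observes the last fraction vanishes, whereas you factor the numerator directly as $(\maxdeg-1)(\w{X}+1)$; these are the same computation.
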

\begin{proof}
Choose~$ \transferred=\frac{1}{2(2\maxdeg-3)} $.
Using \cref{eqn:balanceboundsinglereg}, the local approximation ratio of a singleton~$ X $ is at least~$ \frac{1+\transferred(c_X-d_X)}{1}\geq1-2\transferred(\maxdeg-2) $, which in turn equals
\begin{align}
\label{eqn:nonbipApproxProof}
1{-}2\transferred(\maxdeg-2)~=~1{-}\frac{2(\maxdeg-2)}{2(2\maxdeg-3)}~=~\frac{\maxdeg-1}{2\maxdeg-3}\,.
\end{align}
Therefore each singleton has local approximation ratio at least~$ \frac{\maxdeg-1}{2\maxdeg-3} $.

Using \cref{eqn:balanceboundaugreg}, we get that the local approximation ratio of a path~$ X $ is at least
\begin{align*}
\frac{\w{X}+\transferred(c_X-d_X)}{\wopt{X}}&\geq\frac{\w{X}-2\transferred\w{X}(\maxdeg-2)+2\kappa(\maxdeg-1)}{\w{X}+1}\\
&=1-2\transferred(\maxdeg-2)+\frac{2\transferred(2\maxdeg-3)-1}{\w{X}+1}\\
&=1-2\transferred(\maxdeg-2)\,,
\end{align*}
which by \cref{eqn:nonbipApproxProof} is bounded by at least~$ \frac{\maxdeg-1}{2\maxdeg-3} $ as well, i.e.\ all paths have local approximation ratio at least~$ \frac{\maxdeg-1}{2\maxdeg-3} $.
\qed
\end{proof}

\subsection{Organisation of the Proof of \cref{eqn:balanceboundsinglereg,eqn:balanceboundaugreg}}

Consider the \onetwomingreedy algorithm given in \cref{algo:onetwomingreedy}.
Observe that a sequence of edges picked by the (deterministic) \mingreedy algorithm is a valid sequence of edges to be picked by \onetwomingreedy.
Therefore \onetwomingreedy achieves no better worst case approximation ratio than \mingreedy and obtaining performance guarantees for \onetwomingreedy implies the same approximation performance for \mingreedy.

Note that \onetwomingreedy can also be understood as a refined variant of the (deterministic) \karpsipser algorithm, which selects an arbitrary edge whenever all degrees are at least two and otherwise selects an arbitrary node of minimum degree, i.e.\ of degree one, and matches it with an arbitrary neighbor.

\begin{figure}[htbp!]
\centering
\begin{minipage}{.8\textwidth}
\footnotesize
\begin{alltt}
\(\m=\emptyset\)
repeat until all edges removed from input graph:
   if each node has degree at least 3:
      select (arbitrary) edge \(\{u,v\}\)
   else:
      select (arbitrary) node \(u\) of minimum non-zero degree
      select (arbitrary) neighbor \(v\) of \(u\)
   pick edge \(\{u,v\}\), i.e.\ set \(\m=\m\cup\{\{u,v\}\}\)
   remove all edges incident with \(u\) and \(v\) from input graph
return \(\m\)
\end{alltt}
\end{minipage}
\caption{The (deterministic) \onetwomingreedy algorithm}
\label{algo:onetwomingreedy}
\end{figure}

\medskip

We prepare the proof with some basic observations in \cref{sect:preparations}.
Then, for \onetwomingreedy we verify bounds~\cref{def:edgedeb}, \cref{eqn:mintwocreds}, \cref{eqn:balanceboundsinglereg}, and \cref{eqn:balanceboundaugreg} for graphs of maximum degree~$ \maxdeg=3 $ in \cref{tightPerfBounds}, where our argument will rely on transfers only.
In \cref{sect:onetwomingreedy4} we analyze graphs of maximum degree~$ \maxdeg\geq4 $, where we will introduce \emph{donations} as an additional way to move coins.

\subsection{Preparations}
\label{sect:preparations}
Since transfers move coins from~\m-covered nodes over~$ F $-edges to adjacent path endpoints, the following bounds on the number of outgoing and incoming transfers are an immediate consequence.

In particular, the following bounds imply \cref{def:edgedeb} as well as \cref{eqn:mintwocreds} in case coins are moved only in transfers but not in donations.
\begin{lemma}
\label{lemma:thebasicboundshold}
Let an~\m-edge~\edge{u,v} be given.\\
If~\edge{u,v} is the edge of a singleton, then both nodes pay at most~$2(\maxdeg-1) $ debits.
If~\edge{u,v} is a path~\m-edge, then both nodes pay at most~$ 2(\maxdeg-2) $ debits.
\end{lemma}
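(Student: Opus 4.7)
The plan is to prove this by a direct counting argument on the $F$-edges incident with~$u$ and~$v$. By \cref{def:transfer}, a debit from a node~$x$ is a transfer $(x,w)$ along some $F$-edge $\edge{x,w}$, so the number of debits paid by~$x$ is at most the number of $F$-edges incident with~$x$ in~$\g$. Since~$\g$ has maximum degree~$\maxdeg$, to bound the debits at~$x$ it suffices to identify the edges at~$x$ that do \emph{not} belong to~$F=E\setminus(\m\cup\mopt)$, namely those that lie in~$\m\cup\mopt$.

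First I would dispose of the singleton case: if~\edge{u,v} is the edge of a singleton then $\edge{u,v}\in\m\cap\mopt$, so among the at most~$\maxdeg$ edges incident with~$u$ exactly one, the edge~$\edge{u,v}$ itself, lies in $\m\cup\mopt$. Hence~$u$ is incident with at most~$\maxdeg-1$ edges of~$F$, and the same holds for~$v$, giving at most $2(\maxdeg-1)$ debits in total. For the path case, I would use that in an \mmoptpath every \m-covered node is an internal node and so is incident with an \mopt-edge besides its \m-edge. Thus both~$u$ and~$v$ carry two distinct edges of $\m\cup\mopt$, namely~\edge{u,v} and their respective incident \mopt-edges; consequently each is incident with at most~$\maxdeg-2$ edges of~$F$, and the total number of debits is at most $2(\maxdeg-2)$.

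There is no real obstacle here: once one unpacks the definition of a transfer, the claim reduces to the trivial count $\maxdeg - |\{\text{incident edges in }\m\cup\mopt\}|$, and the only thing to check is that for path \m-edges both endpoints really do have an incident \mopt-edge, which is immediate from the structure of an \mmoptpath described in \cref{sect:theChargingScheme}.
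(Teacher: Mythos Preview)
Your proposal is correct and follows essentially the same approach as the paper: both arguments reduce the debit count to the number of incident $F$-edges and then subtract the one (singleton) or two (path) edges lying in $\m\cup\mopt$ from the degree bound~$\maxdeg$. Your write-up simply makes explicit the structural facts (that a singleton edge lies in $\m\cap\mopt$, and that path $\m$-covered nodes carry distinct $\m$- and $\mopt$-edges) which the paper leaves implicit.
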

\begin{proof}
Since only~$ F $-edges can be transfers, it suffices to bound the number of~$ F $-edges incident with each of~$ u $ and~$ v $.
Each~\m-covered node of a singleton is incident with at most~$ \maxdeg-1 $ many~$ F $-edges.
Each~\m-covered node of a path is incident with at most~$ \maxdeg-2 $ many~$ F $-edges.
\qed
\end{proof}

\begin{lemma}
\label{prop:numcreditsreg}
Let~$ w $ be an endpoint of path~$ X $.
Endpoint~$ w $ receives at least one credit.
Path~$ X $ receives at least~$ c_X\geq2 $ credits.
\end{lemma}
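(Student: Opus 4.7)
The plan is to exhibit, for each endpoint $w$, an explicit transfer directed to $w$, and then combine the contributions of the two endpoints of $X$.

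First, I establish $d_\g(w)\ge 2$, which guarantees that $w$ has at least one $F$-neighbor distinct from its $\mopt$-partner. The argument of \cref{obs:potentialTransfer} carries over to \onetwomingreedy: when the first $\m$-edge $\edge{u,v}$ of $X$ is picked, no interior node of $X$ has yet been matched, so both $u$ and $v$ still carry their $\mopt$-edges in $X$ in addition to $\edge{u,v}$, and each therefore has current degree at least $2$. If \onetwomingreedy takes its ``if'' branch, the minimum degree is at least $3$; in the ``else'' branch it selects a min-degree node, and since the picked edge is $\edge{u,v}$ this node must be $u$ or $v$, so the minimum degree is at least $2$. In either case $w$'s current degree is at least $2$, and since degrees only decrease during the algorithm, $d_\g(w)\ge 2$.

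Next, let $v_1,\dots,v_k$ ($k\ge 1$) be the $F$-neighbors of $w$ in $\g$. Because $w$ is never matched by \onetwomingreedy, each edge $\edge{v_i,w}$ is removed exactly in the step that matches $v_i$; let $v_k$ denote the $F$-neighbor whose matching step comes last. Just before that step, the only edges of $\g$ still incident to $w$ are $\edge{v_k,w}$ and possibly the $\mopt$-edge $\edge{v,w}$ (the latter is present only if $w$'s $\mopt$-partner $v$ has not yet been matched). Removing $\edge{v_k,w}$ in that step therefore drops $w$'s remaining degree to at most $1$, so by \cref{def:transfer}, $\edge{v_k,w}$ is a transfer, i.e.\ a credit to $w$.

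Applying this argument to both endpoints of $X$ produces two credits directed to two distinct nodes of $X$, yielding $c_X\ge 2$. The only delicate step is the analog of \cref{obs:potentialTransfer} for \onetwomingreedy, which I handle by splitting on the two branches of the algorithm and using that the two endpoints of the first $\m$-edge of $X$ carry both an $\m$- and an $\mopt$-edge at the moment of their selection; once that is in place, the ``last $F$-edge to be removed at $w$'' construction immediately certifies the credit.
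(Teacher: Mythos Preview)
Your proof is correct and follows essentially the same route as the paper's: both arguments use $d_\g(w)\ge 2$ to guarantee an $F$-edge at $w$, and then exhibit an $F$-edge of $w$ that is removed in a step after which $d(w)\le 1$, certifying a credit. You pick the last $F$-neighbor of $w$ to be matched (a slightly more direct organization, modulo the harmless imprecision that two $F$-neighbors of $w$ could be matched to each other in that last step), whereas the paper looks at the final step where $d(w)$ drops to $0$ and case-splits on whether the last surviving edge was an $F$-edge or the $\mopt$-edge; you also spell out the adaptation of \cref{obs:potentialTransfer} to the two branches of \onetwomingreedy, which the paper invokes without further comment.
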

\begin{proof}
Let~$ d_\g(w) $ denote the degree of~$ w $ in~\g and denote the current degree of~$ w $ by~$ d(w) $.
By Lemma~\ref{obs:potentialTransfer} we have $ d_\g(w)\geq2 $.

Since~$ w $ is a path endpoint, node~$ w $ never gets matched.
Consider the step when the degree of~$ w $ drops from~$ d(w) $ to~0, where~$ d(w)\in\{1,2\} $.

If~$ d(w)=2 $ holds, then an~$ F $-edge incident with~$ w $ gets removed and we obtain~$ c_w\geq1 $.

If~$ d(w)=1 $ holds, then either an~$ F $-edge incident with~$ w $ gets removed and we have~$ c_w\geq1 $, or the~\mopt-edge of~$ w $ gets removed.
In the latter case the degree of~$ w $ dropped from~$ d_\g(w)\geq2 $ to~$ d(w)=1 $ when an~$ F $-edge incident with~$ w $ was removed, and again we have~$ c_w\geq1 $.

Finally, observe that since path~$ X $ has two endpoints, path~$ X $ receives at least~$ c_X\geq2 $ credits.
\qed
\end{proof}

\section{Maximum Degree $ \maxdeg=3 $}
\label{tightPerfBounds}

In this section we show
\begin{theorem}
\label{thm:mingreedyreg}
The \onetwomingreedy algorithm achieves approximation ratio at least~$ \frac{\maxdeg-1}{2\maxdeg-3}=\twoOverThree $ for graphs of maximum degree~$ \maxdeg=3 $.
\end{theorem}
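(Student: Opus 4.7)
The plan is to complete the verification of the balance bounds \cref{eqn:balanceboundsinglereg} and \cref{eqn:balanceboundaugreg} for $\maxdeg = 3$. By \cref{lemma:thebasicboundshold} and \cref{prop:numcreditsreg} we already have the trivial bounds $d_X \leq 2(\maxdeg - 1) = 4$ for a singleton $X$, $d_X \leq 2m_X(\maxdeg - 2) = 2m_X$ for a path $X$, and $c_X \geq 2$ for a path. The remaining task is a saving of $2$ coins per component: showing $d_X \leq 2$ for every singleton, and $c_X - d_X \geq 4 - 2m_X$ for every path.

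For singletons $\{u, v\}$, I would case-split on which branch of \onetwomingreedy chose $\{u, v\}$. In the ``all degrees at least $3$'' branch, an F-neighbor $w$ of $u$ or $v$ can lose at most two edges in this step and hence drops to degree $\leq 1$ only if $w$ is a common F-neighbor of both $u$ and $v$; so only common F-neighbors contribute debits. Then I would exploit the maximality of $\mopt$: having two common F-neighbors both serving as path endpoints would allow swapping $\{u, v\}$ out of $\mopt$ in favor of edges $(u, w)$ and $(v, w')$ and re-matching the freed partners, contradicting that $\mopt$ is maximum. In the min-degree branch, one of $u, v$ has current degree $\leq 2$ and thus at most one remaining F-edge, which already caps the total debits from the singleton.

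For paths, I plan to prove the $+2$ coin surplus endpoint by endpoint, using $d_\g(w) \in \{2, 3\}$ from \cref{obs:potentialTransfer}. When $d_\g(w) = 3$, I would sharpen \cref{prop:numcreditsreg} by tracing both F-edges incident to $w$ through their removal steps, showing $w$ receives a second credit. When $d_\g(w) = 2$, only one credit is guaranteed at $w$, but the endpoint's low $\g$-degree forces the adjacent \m-edge to have reduced F-degree as well, saving a debit that I would reassign to the path's budget. Aggregating the two endpoints yields the required surplus of $+2$ coins.

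The delicate case is the \onetwopath ($m_X = 1$), where the entire $+2$ coin surplus has to be absorbed by two endpoints and the single \m-edge $\{u, v\}$, with no internal \m-edges available for amortization. The main obstacle will be pairing each potential debit leaving $\{u, v\}$ with a credit at $w_1$ or $w_2$ through a structural analysis of the graph state just before \onetwomingreedy selects $\{u, v\}$; getting this correspondence to hold in every execution branch, via a careful combination of $\mopt$-maximality and the degree-descent behavior of the algorithm, is where the bulk of the work should go.
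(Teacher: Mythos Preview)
Your overall plan (verify the balance bounds by finding a saving of two coins per component) matches the paper, but several of the concrete arguments you propose do not go through.

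\textbf{Singletons, swap argument.} Your \mopt-maximality swap is wrong as stated. Path endpoints are \emph{\mopt-covered} (they carry an \mopt-edge but no \m-edge), so if $w,w'$ are two common $F$-neighbours of the singleton $\{u,v\}$ you cannot simply drop $\{u,v\}$ from \mopt and insert $\{u,w\},\{v,w'\}$: you would also have to delete the \mopt-edges at $w$ and $w'$, and there is no reason their partners can be ``rematched''. The net change is $-3+2$, a \emph{smaller} matching, so no contradiction with maximality of \mopt arises. The paper obtains the bound here not from \mopt at all but from the algorithm's dynamics: after the step some endpoint has degree~$1$, so the \emph{next} selected node $y$ also has degree~$1$, hence had two edges to $u$ and $v$ removed; since $y$ is \m-covered those two $F$-edges are non-transfers, giving the two missing debits.

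\textbf{Singletons, min-degree branch.} Saying ``one of $u,v$ has current degree $\le 2$, hence at most one $F$-edge'' bounds only that node's debits. If $d(u)=2$ you get $d_u\le 1$ but still $d_v\le 2$, so only $d_X\le 3$, not $d_X\le 2$. The paper closes this gap (\cref{lemma:lowDebs}\cref{lemma:lowDebsD}) by again looking at the next step: if $u$ pays one debit and $v$ pays two, the three receiving endpoints cannot all avoid being selected next, a contradiction.

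\textbf{Paths, endpoint-by-endpoint.} The dichotomy on $d_\g(w)$ does not yield the surplus you need. When $d_\g(w)=3$, it is \emph{not} true that $w$ always receives two credits: if the first $F$-edge at $w$ is removed while $d(w)=3$ (dropping $w$ to degree~$2$), that edge is not a transfer by \cref{def:transfer}, and the remaining $F$-edge and \mopt-edge give only one credit. When $d_\g(w)=2$, the ``adjacent \m-edge has reduced $F$-degree'' observation is not an \emph{extra} saving: the \mopt-neighbour $a$ of $w$ is an interior path node and always has at most $\maxdeg-2=1$ incident $F$-edge regardless of $d_\g(w)$; this is already the baseline in \cref{def:edgedeb}.

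The paper's route is structurally different: it analyses the \emph{creation step} of the component, cases on the current degree $d(u)$ of the selected node, and repeatedly exploits that a degree-$1$ node (forced to exist after certain steps) must be selected next and is \m-covered. The static quantities $d_\g(w)$ are too coarse; the argument lives in the algorithm's step-by-step degree drops.
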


In the proof we verify \cref{eqn:balanceboundsinglereg,eqn:balanceboundaugreg}, which for~$ \maxdeg=3 $ reduce to
\begin{align}
-d_X&\geq-2&\text{for each singleton~$ X $ and}\label{eqn:balanceboundsinglereg3}\tag{\ref{eqn:balanceboundsinglereg}$^3$}\\
c_X-d_X&\geq2-2\w{X}+2&\text{for each path~$ X $.}\label{eqn:balanceboundaugreg3}\tag{\ref{eqn:balanceboundaugreg}$^3$}
\end{align}
Coins will be moved using transfers only.
Therefore \cref{def:edgedeb,eqn:mintwocreds} hold:
in case only transfers are used, we have already verified both bounds in \cref{lemma:thebasicboundshold,prop:numcreditsreg}.

\subsection{Two Debits are Missing}

Recall that the minimum balance of a singleton~$ X $ is~$ -2(\maxdeg-1){=}-4 $, i.e.\ singleton~$ X $ pays at most four debits.
Our plan to verify \cref{eqn:balanceboundsinglereg3} is to show that~$ X $ pays two debits less than maximum:
we say that two debits are \emph{missing}.

Similarly, the minimum balance of a path~$ X $ is~$ 2-D_X=2-2\w{X} $, where~$ D_X $ is the maximum number of debits payed by~$ X $.
Again, if~$ X $ has two missing debits then~$ X $ is balanced as required in \cref{eqn:balanceboundaugreg3}.
However, it might be the case that~$ X $ has less than two missing debits.
In this case the balance of~$ X $ will be increased by additional credits.

\paragraph{Organization of the Proof.}
We start with the proof of two missing debits for singletons in \cref{lemma:lowDebs} (in \cref{lemma:lowDebs}\cref{lemma:lowDebsD}, in particular);
this result concludes the proof of \cref{eqn:balanceboundsinglereg3}.

More generally, \cref{lemma:lowDebs} also applies to paths and identifies those cases in which paths have two missing debits as well.
\cref{lemma:lowDebs} also prepares the analysis of the special case in which a path~$ X $ has less than two missing debits:
we will argue that~$ X $ has exactly one missing debit and in \cref{lemma:maxdeg3Graphs} in \cref{sect:onemissingdebandanextracred} we prove that the extra debit is compensated by an additional credit to~$ X $.
Hence \cref{lemma:lowDebs,lemma:maxdeg3Graphs} conclude the proof of \cref{eqn:balanceboundaugreg3}.

\begin{lemma}
\label{lemma:lowDebs}
Assume that degrees are bounded by at most~$ \maxdeg=3 $.
Consider the creation step of component~$ X $, and assume that the \onetwomingreedy algorithm selects node~$ u $ with current degree~$ d(u) $.
Component~$X$ has two missing debits if
\begin{enumerate}[label=\alph*),topsep=0mm,noitemsep]
\item\label{lemma:lowDebsA}
we have~$ d(u)=1 $
\item\label{lemma:lowDebsB}
we have~$ d(u)=\maxdeg=3 $
\item\label{lemma:lowDebsD}
component~$ X $ is a singleton.
\end{enumerate}
\end{lemma}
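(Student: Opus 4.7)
The plan is to bound $d_X$ by counting, for each $\m$-covered node $x$ of~$X$, the $F$-edges incident with $x$ at its matching step, and arguing that many of these fail the transfer condition---either because the non-matched endpoint $w$ is not a path endpoint, or because $w$'s current degree does not drop to at most~$1$ at the relevant step.

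For case (a), $d(u)=1$: only the edge $\{u,v\}$ is still incident with $u$ at the creation step, so $u$ pays no debits. For a singleton, $v$ has at most $\maxdeg-1=2$ incident $F$-edges, giving two missing debits immediately. For a path, the plan is to first rule out that $X$ is a \onetwopath: otherwise $u$'s $\mopt$-partner $u'$ would be an endpoint, yet $\{u,u'\}$ has been removed (since $u$'s only remaining edge is $\{u,v\}$), forcing $u'$ to have been matched earlier and contradicting that $u'$ is unmatched. With $\w{X}\geq 2$ established, the zero contribution from $u$ is combined with a case split on whether $v$'s $\mopt$-edge is still present---using, in the latter case, information about the earlier step that matched $v$'s $\mopt$-partner---to locate the second missing debit.

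For case (b), $d(u)=\maxdeg=3$: the creation step falls in the ``all degrees~$\geq 3$'' branch of \onetwomingreedy, so every non-isolated node currently has degree exactly $3$. For any $F$-edge $\{u,w\}$ or $\{v,y\}$, the other endpoint starts at degree $3$ and loses at most two edges (to $u$ and~$v$ combined). If $w$ is not a common neighbor of $u$ and~$v$, its degree drops only to~$2$ and the transfer condition fails. If $w$ is a common neighbor, its degree drops to~$1$; the subsequent min-degree steps then force $w$'s single remaining edge to be selected as an $\m$-edge incident with~$w$, so $w$ is $\m$-covered and not a path endpoint. Thus $\{u,v\}$ contributes no debits, which already yields two missing debits for~$X$ whether singleton or path.

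For case (d), $X$ a singleton: $D_X=2(\maxdeg-1)=4$, so the goal is $d_X\leq 2$. The subcases $d(u)\in\{1,3\}$ reduce to (a) and (b). For $d(u)=2$, node $u$ has exactly one $F$-edge $\{u,z\}$; using that all current degrees are at least~$2$ (since $u$ is minimum-degree) and the structural constraint that $v$'s other neighbors cannot have $\mopt$-edges back to $u$ or~$v$ (which would conflict with $\{u,v\}\in\mopt$), a short case distinction on whether $z$ is a neighbor of $v$ and on the current degree of $v$ shows that at most two of the $F$-edges incident with $\{u,v\}$ can become transfers. The main technical obstacle lies in the non-local arguments for cases (a) (path subcase) and (b) (common-neighbor subcase), where the second missing debit must be traced to a step other than the creation step; in each, the challenge is to track the forced min-degree selections in neighboring steps and rule out adversarial configurations that would turn a key low-degree node into a path endpoint.
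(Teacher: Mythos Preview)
Your argument for case~(b) has a genuine gap. You claim that if $w$ is a common neighbor of $u$ and $v$ (so $d'(w)=1$ after the creation step), then ``the subsequent min-degree steps force $w$'s single remaining edge to be selected as an $M$-edge incident with~$w$,'' making $w$ $M$-covered. This is not forced: there may be several common neighbors of $u$ and~$v$, each dropping to degree~$1$, and the (adversarial) selection at step $s{+}1$ can pick one of them, say $y\neq w$. If $w$'s one remaining neighbor is then matched without $w$ ever being selected, $w$ ends up unmatched---a genuine path endpoint---and $(v,w)$ or $(u,w)$ \emph{is} a debit. So your conclusion ``$\{u,v\}$ contributes no debits'' is false in general.

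The paper's proof does not try to rule out debits from $\{u,v\}$. Instead, assuming some debit $(x,w)$ exists (so $d'(w)=1$), it looks at the degree-$1$ node $y\neq w$ that \emph{is} selected at step $s{+}1$. Since $y$ also dropped from degree~$3$ to~$1$, both edges $\{u,y\}$ and $\{v,y\}$ were removed at step~$s$; as $y$ is $M$-covered, neither is a transfer. If $y\notin X$, both are $F$-edges, giving one missing debit at each of $u$ and~$v$. If $y\in X$, one of them is the $\mopt$-edge and the other (say $\{u,y\}$) is an $F$-edge, giving a missing debit at~$u$ and at~$y$---both nodes of~$X$. Either way one obtains two missing debits for~$X$, even though $\{u,v\}$ may well pay a debit to~$w$.

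Two smaller remarks. Your elaborate path subcase in~(a) is unnecessary: at the creation step of a path~$X$, the selected node $u$ still has both its $M$-edge $\{u,v\}$ and its $\mopt$-edge (the $\mopt$-neighbor lies in~$X$ and is not yet matched), so $d(u)\geq 2$ and the case $d(u)=1$ is vacuous for paths. In~(d), your ``short case distinction'' is headed in the right direction but is not spelled out; the paper argues by contradiction that if only one debit were missing (so $u$ pays one debit and $v$ pays two), then after step~$s$ some path endpoint has degree exactly~$1$ and is the unique minimum-degree node, forcing the algorithm to select and match it---contradicting that path endpoints are never matched.
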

\begin{proof}

Let $ X $ be created in step $ s $ when $ u $ is matched with, say, node $ v $.
If neither~$ u $ nor~$ v $ pays a debit, then two debits are missing and we are done.
Hence in the rest of the proof we may assume that at least one of~$ u $ and~$ v $ pays a debit, which we denote by~\dedge{x,w} for~$ x{\in}\{u,v\} $.

\smallskip

We prove \ref{lemma:lowDebsA}.
If $ d(u)=1 $ holds at step $ s $, then no~$ F $-edges are incident with~$ u $.
Hence~$ u $ has two missing debits by Definition~\ref{def:transfer}.

\smallskip

We prove \ref{lemma:lowDebsB}.
Since~\dedge{x,w} is a transfer, step $ s $ removes edge $ \edge{x,w} $ from \g and after step $ s $ the degree of~$ w $ is~$ d'(w)\leq1 $.
Observe that in step $ s $ at most two edges incident with $ w $ are removed.
Since the degree of~$ w $ is at least~$ d(w)\geq d(u)=3 $ before step~$ s $,
we have $ d'(w)=1 $ at step $ s+1 $.
But since endpoint~$ w $ is never selected, another \degreeone node~$ y\neq w $ is selected next in step~$ s+1 $.
Observe that the degree of $ y $ also drops from~$ d(y)=3 $ to~$ d'(y)=1 $ in step~$ s $, namely when incident edges~$ \edge{u,y}$ and~$\edge{v,y} $ are removed.

\begin{itemize}
\item 
If {\boldmath\bfseries$ y $ belongs to a component~$ Y\neq X $} other than~$ X $, then both~$ \edge{u,y}$ and~$\edge{v,y} $ are $ F $-edges.
But~$ \edge{u,y}$ and~$\edge{v,y} $ are not debits, since $ u,v$ and~$y $ are~\m-covered. 
Hence we have found a missing debit for each of~$ u$ and~$v $.
\item 
If {\boldmath\bfseries$ y $ is a node of $ X $} (note that in this case~$ X $ must be a path), then nodes~$ u,v$, and~$y $ form a triangle and one of~$ \edge{u,y}$ and~$\edge{v,y} $, say~$\edge{u,y}$, is an~$ F $-edge. 
Since both~$ u$ and~$y $ are~\m-covered, edge~$\edge{u,y}$ is not a transfer and
hence both~$ u$ and~$ y $ have a missing debit.
\end{itemize}

\smallskip

We prove \ref{lemma:lowDebsD}.
We may assume that~$ d(u)=2 $ holds at creation of singleton~$ X $, since \cref{lemma:lowDebsA,lemma:lowDebsB} apply in particular to singletons.
Observe that a debit is missing from node~$ u $.
In order to obtain a contradiction, we assume that this is the only missing debit from~$ X $.
Consequently, node~$ u $ pays exactly one debit, say to $ w_u $, and~$ v $ pays exactly two debits, say to~$ w_v$ and~$w_v'$.

\pagebreak

Since at creation of~$ X $ node~$ u $ has degree~$ d(u)=2 $ and is adjacent to~$v$, node~$u$ is adjacent to at most one of~$ w_v$ and~$w_v'$.
Since edges~$\edge{u,w_u}$, $\edge{v,w_v}$, and~$\edge{v,w_v'}$ are transfers, Definition~\ref{def:transfer} implies that the degrees of~$w_u$, $w_v$ and,~$w_v'$ are at most~$ 1 $ after step~$ s $.
We distinguish two cases.

\begin{itemize}
\item 
If~{\bfseries\boldmath$u$ is adjacent to neither~$ w_v$ nor~$w_v'$}, then~$w_u$, $w_v$, and~$w_v'$ have degree exactly~$ 1 $ after step~$ s $:
their degrees were at least the minimum degree of~$d(u) = 2$ before~\edge{u,v} was picked, and their degrees dropped to at most one afterwards.

\item 
Now consider the case that~{\bfseries\boldmath$u$ is adjacent to~$w_v$ or~$ w_v' $}, say~$w_u = w_v$ holds.
Then the degree of~$w_v'$ drops by at most one when edge~$\edge{u,v}$ is picked, and hence~$w_v'$ has degree exactly~$ 1 $ afterwards.
\end{itemize}

In both cases, no other degrees than those of~$ w_u,w_v $, and~$ w_v' $ dropped in step $ s $.
Since in step $ s+1 $ one of these endpoints has degree exactly one, one of~$ w_u$,~$w_v$, or~$w_v' $ is selected and matched next.
A contradiction, since path endpoints are never matched by \onetwomingreedy.
\qed
\end{proof}

\subsection{One Missing Debit and an Additional Credit}
\label{sect:onemissingdebandanextracred}
To finish the proof of \cref{thm:mingreedyreg} it remains to verify \cref{eqn:balanceboundaugreg3} for any given path~$ X $.
Recall that the balance of~$ X $ is at least~$ c_X-d_X\geq2-D_X=2-2\w{X} $.
To prove a balance of at least~$ c_X-d_X\stackrel{!}{\geq}2-2\w{X}+2 $ it suffices to show that two debits are missing for~$ X $.

\medskip

Since by \cref{obs:potentialTransfer} path~$ X $ is created in a step selecting a node~$ u $ of current degree~$ d(u)\geq2 $, we analyze the cases that in the creation step of~$ X $ we have~$ d(u)=2 $ or~$ d(u)=3 $.

In the latter case, two debits are missing by \cref{lemma:lowDebs}\cref{lemma:lowDebsB}.
Hence, for the rest of the proof we may assume that~$ d(u)=2 $ holds.
Observe that no~$ F $-edge is incident with~$ u $ when being selected, hence a debit is missing for~$ u $.

If an additional debit is missing for~$ X $, then we are done once again.
So assume from here on that the \emph{only} missing debit for~$ X $ is missing for~$ u $.
Then~$ d_X=D_X-1=2\w{X}-1 $ holds.
Rather than proving a contradiction, in \cref{lemma:maxdeg3Graphs} we show that~$ X $ receives at least \emph{three} credits:
we obtain~$ c_X-d_X\geq3-2\w{X}+1 $, as required in \cref{eqn:balanceboundaugreg3}.

\begin{lemma}
\label{lemma:maxdeg3Graphs}
Assume that degrees are bounded by at most~$ \maxdeg=3 $.
Let~$ X $ be a path which pays~$ d_X=2\w{X}-1 $ debits.
Then~$ X $ receives at least~$ c_X\geq3 $ credits.
\end{lemma}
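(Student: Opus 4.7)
The setup preceding the lemma fixes the creation step~$s$ of~$X$ at which \onetwomingreedy picks some node~$u$ with $d(u)=2$ and matches it to~$v$; moreover, $u$~pays no debit, while $v$ and every other $M$-covered node of~$X$ pay their respective maxima of $\maxdeg-2=1$ debits. Hence $u$'s two current edges at~$s$ are the $M$-edge $\{u,v\}$ and the $M^*$-edge $\{u,u^*\}$ to $u$'s $M^*$-partner $u^*$ on~$X$ (so $u$ has no current $F$-edge), and $v$ has current degree~$3$ with an $F$-edge $\{v,y\}$ to some \mmoptpath endpoint $y$ that realizes $v$'s debit.

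My plan is to combine forced-matching contradictions with a direct credit count. The first phase locks down the structure at the $u$-side of $X$ by examining \onetwomingreedy's minimum-degree pick at step~$s{+}1$, noting that the only post-$s$ degree drops occur at $u^*,v^*,y$. I would establish in order: (a)~$u^*$ is an endpoint of $X$, say $u^*=w_1$ (otherwise $y$ is the unique degree-$1$ node after step~$s$, and since $y$ is a path endpoint, \onetwomingreedy is forced to match a path endpoint at~$s{+}1$); (b)~$y=w_1$ (otherwise $y$ and possibly $w_1$ present path-endpoint degree-$1$ candidates at $s{+}1$ and any adversarial pick matches some path endpoint); and (c)~$d(w_1)=2$ (otherwise $w_1$ has post-$s$ degree~$1$ and is the unique minimum-degree node at~$s{+}1$). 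These three facts force the triangle on $\{u,v,w_1\}$ with $F$-edge $\{v,w_1\}$ and yield the first credit $c_{w_1}=1$ via the transfer $(v,w_1)$.

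The second phase mirrors the analysis at the $w_2$-side. Applying the minimum-degree forced-matching argument at the successive creation steps of $X$'s further $M$-edges~--- in particular at the step matching $w_2$'s $M^*$-partner on~$X$~--- forces $d(w_2)=3$, so that $w_2$ has two $F$-neighbors $b_1,b_2$. A further order constraint must hold: the $M^*$-edge at $w_2$ must not be the last of $w_2$'s three edges to be removed, for otherwise the first $F$-edge removal would drop $w_2$'s degree only from~$3$ to~$2$ and contribute no credit; propagating the forced-matching argument through the remaining algorithmic steps then produces a moment whose only minimum-degree-$1$ candidates are path endpoints, contradiction. Under this order both $F$-edge removals transfer coins to~$w_2$, yielding $c_{w_2}=2$. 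Combining the two phases gives $c_X=c_{w_1}+c_{w_2}\geq 1+2=3$.

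The main obstacle is the second phase: propagating the forced-matching argument through the $\w{X}-1$ successive $M$-edge creation steps of $X$, each time carefully tracking the degree drops at $X$'s endpoints and at the $F$-neighbors of the just-matched pair, and verifying that the algorithm always has some non-endpoint minimum-degree node available precisely when the degree and order conditions at~$w_2$ are met, and none available otherwise.
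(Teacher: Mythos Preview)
Your first phase breaks down at claim~(a) when $\w{X}\geq2$. You argue that if $u^*$ is not an endpoint then $y$ is the \emph{unique} degree-$1$ node after step~$s$, forcing the algorithm to match the path endpoint~$y$. But this is not so: at step~$s$ the only nodes whose degree drops are $u^*$, $v^*$, and $y$, and for $\w{X}\geq2$ both $u^*$ and $v^*$ may be $M$-covered interior nodes of~$X$. Nothing prevents $d(u^*)=2$ (or $d(v^*)=2$) at step~$s$, in which case $u^*$ (resp.~$v^*$) drops to degree~$1$ and is a perfectly legal, $M$-covered selection for \onetwomingreedy at step~$s{+}1$. No contradiction arises, and you cannot conclude that $u^*$ is an endpoint or that the triangle $\{u,v,w_1\}$ is forced. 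The same objection kills claim~(b): even granting that $u^*=w_1$, for $\w{X}\geq2$ the node $v^*$ is $M$-covered and may have post-$s$ degree~$1$, so again the algorithm need not pick an endpoint.

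The paper avoids this trap by \emph{not} anchoring the argument at the creation step when $\w{X}\geq2$. Instead it jumps to the (possibly much later) step at which the $M^*$-edge of an endpoint~$w$ is removed, i.e.\ the step matching $w$'s $M^*$-neighbor~$x$ with its $M$-partner~$x'$. At that moment $x$ and $x'$ sit at the very end of~$X$, so all their remaining neighbours other than each other are necessarily path endpoints --- there are no interior $M$-covered nodes of~$X$ left on that side to absorb the minimum-degree selection. The forced-matching contradiction then goes through cleanly in a single step, with no need to propagate through the intermediate $M$-edge picks of~$X$ (which, incidentally, are neither consecutive in time nor in path order, a further difficulty your Phase~2 sketch does not address).
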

\begin{proof}
We assume that~$ c_X<3 $ holds and show a contradiction.

By assumption and as a consequence of \cref{prop:numcreditsreg}, each endpoint of~$ X $ receives exactly one credit.
Moreover, we have already argued before \cref{lemma:maxdeg3Graphs} that the node~$ u $ selected to create~$ X $ has degree~$ d(u)=2 $ at creation, and that~$u$ has the only missing debit of~$X$. 
Consequently, all other \m-covered nodes of $ X $ have exactly
one debit, since~$ d_X=2\w{X}-1 $ holds.
Let~$ u $ be matched with node~$ v $.

\medskip
\pagebreak
First we consider the case that~$ X $ is a~\oneOverTwo-path with~{\boldmath$ m_X=1 $}.
Let~$w_u$ and~$w_v$ be the endpoints of~$X$ such that we have~$\edge{u,w_u}, \edge{v,w_v} \in \mopt$. 
Note that this implies~$w_u \neq w_v$.
Node~$v$ pays the only debit, say to the path endpoint~$y$.
Observe that in the creation step of~$ X $ the degrees of~$w_u,w_v$, and~$y$ all drop, and no other degrees drop.

If the degree of~$ w_v $ is still at least two after~$ X $ was created, then~$ w_v $ receives two credits over two still incident~$ F $-edges.
A contradiction to our assumption that~$ c_X<3 $ holds.

Hence we may assume from here on that after creation of~$ X $ the degree of~$ w_v $ is at most~$ d'(w)\leq1 $.
But when~\edge{u,v} is picked we have~$ d(u)=2 $, i.e.\ endpoint~$ w_v $ is not incident with~$ u $ and the degree of~$ w_v $ drops by at most one.
Consequently, endpoint~$w_v$ has degree exactly~$ d'(w)=1 $ after creation, which is the new minimum degree in the graph.
But since only the degrees of~$ w_u,w_v $, and~$ y $ dropped in the creation step, one of~$ w_u,w_v$ and~$ y $ is matched in the next step.
A contradiction is obtained since path endpoints are never matched.
The possible configurations are depicted in \cref{fig:degDropContradiction2}.

\begin{figure}[htbp!]
\centering
\begin{tikzpicture}

\node (u) {$ u $};
\node[right of=u] (v) {$ v $};
\node[left of=u,dashed] (w1) {$ w_u $};
\node[right of=v,dashed] (w2) {$ w_v $};
\draw[mg] (u) -- (v);
\draw[opt] (u) -- (w1);
\draw[opt] (v) -- (w2);
\node[above of=v,dashed] (vdebit) {$ y $};
\node[above of=w1] (w1credit) {};
\node[above of=w2] (w2credit) {};
\draw[transfer] (v) -- (vdebit);
\draw[transfer] (w1credit) -- (w1);
\draw[transfer] (w2credit) -- (w2);

\node (u) at ($(u)+(5,0)$) {$ u $};
\node[right of=u] (v) {$ v $};
\node[left of=u] (w1) {$ w_u $};
\node[right of=v,dashed] (w2) {$ w_v $};
\draw[mg] (u) -- (v);
\draw[opt] (u) -- (w1);
\draw[opt] (v) -- (w2);
\node[above of=w2] (w2credit) {};
\draw[transfer] (w2credit) -- (w2);
\draw[transfer] (v) edge[bend left] (w1);
\end{tikzpicture}
\caption[\onetwomingreedy: Picking an ``End Edge'' of a Path (I)]{
Creating a~\oneOverTwo-path which pays one debit and receives exactly two credits:
dashed endpoints have degree one after $ u $ and~$ v $ are matched
(not all edges are drawn)
}
\label{fig:degDropContradiction2}
\end{figure}
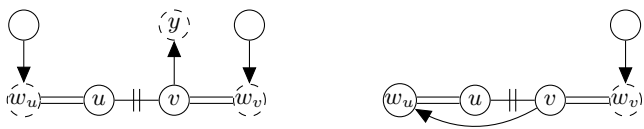

\medskip

Now consider the case that path~$ X $ has~{\boldmath$ \w{X}\geq2 $} edges of~\m.
After edge~\edge{u,v} is picked in the creation step of~$ X $, at least one path endpoint~$ w $ of~$ X $ is still connected with its unique~\mopt-neighbor, call it~$ x $.
We consider the step when~$x$ becomes matched, say with its~\m-neighbor~$ x' $.
In this step, we denote the current degree of a node~$ z $ as~$ d(z) $.

Recall that~$ x $ pays a debit, say to endpoint~$ y $.
Since~$w$ and~$ y $ are endpoints and are thus never matched, both~$ w $ and~$ y $ are not yet isolated.
Thus node~$x$ is still connected with~$ x',w $, and~$ y $ and has degree at least~$ d(x)\geq3 $.

Also, we may assume that~$w$ has degree at most~$ d(w)\leq2 $.
To see this, assume that the degree of~$ w $ is still~$\maxdeg = 3$ and observe that the two~$ F $-edges incident with~$w$ become credits, which contradicts our assumption that~$ c_X<3 $ holds.

Moreover, we argue that the degree of~$ w $ is exactly~$ d(w)=2 $.
Why?
Recall that~$ u $ has the only missing debit from~$ X $.
Hence~$x'$ pays a debit, i.e.\ node~$ x' $ is adjacent to an endpoint of an path.
Since~$ x' $ is also adjacent to~$ x $, node~$ x' $ has degree at least~$ d(x')\geq2 $.
But the degree of~$ x' $ is not larger than the degree~$ d(w) $ of~$ w $, i.e.\ it is at most~$ d(x')\leq d(w)\leq2 $.
Therefore the degree of~$ x' $ is exactly~$ d(x')=2 $.
Consequently, endpoint~$ w $ has degree exactly~$ d(w)=2 $ as well.

All still present neighbors of~$x$ and~$x'$ are endpoints of paths, and only their degrees drop when edge~\edge{x',x} is picked.
The possible configurations are depicted in \cref{fig:degDropContradiction}.
We distinguish two cases.

\begin{itemize}
\item 
If~{\boldmath\bfseries$x'$ is not adjacent to~$w$}, then the degree of~$ w $ drops by exactly one to exactly (the three leftmost configurations in \cref{fig:degDropContradiction}).
Hence either~$w$ or another path endpoint is selected in the next step.
A contradiction, since a path endpoint is never matched.

\item 
Lastly, assume that~{\bfseries\boldmath$x'$ and~$w$ are adjacent} (the rightmost configuration in \cref{fig:degDropContradiction}).
Then~$w$ becomes isolated in that step.
We consider the recipient~$y$ of the debit from~$x$.
Since~$ x' $ is adjacent to~$ w $ and~$ x $ and has degree exactly~$ d(x')=2 $, endpoint~$ y $ is not adjacent to~$ x' $.
Therefore the degree of~$ y $ drops by exactly one.
Since the degree of~$ y $ drops from at least~$ d(y)\geq2 $ to at most~$ 1 $, by \cref{def:transfer} we get that the degree of~$y$ is exactly~$ 1 $ in the next step.
Furthermore, endpoint~$ y $ is now the only \degreeone node.
Hence~$ y $ is selected and matched next.
A contradiction, since~$ y $ is a path endpoint.
\end{itemize}
\qed
\end{proof}

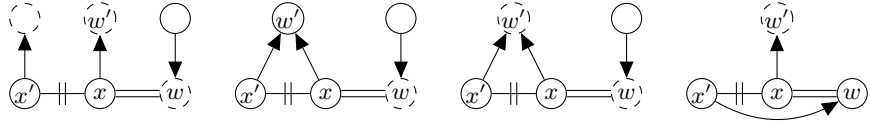
\begin{figure}[htbp!]
\centering
\begin{tikzpicture}

\node (u') {$ x' $};
\node[right of=u'] (v') {$ x $};
\node[right of=v',dashed] (w) {$ w $};
\draw[mg] (u') -- (v');
\draw[opt] (v') -- (w);
\node[above of=u',dashed] (u'debit) {};
\node[above of=v',dashed] (v'debit) {$ w' $};
\node[above of=w] (wcredit) {};
\draw[transfer] (u') -- (u'debit);
\draw[transfer] (v') -- (v'debit);
\draw[transfer] (wcredit) -- (w);

\node (u') at ($(w)+(1,0)$) {$ x' $};
\node[right of=u'] (v') {$ x $};
\node[right of=v',dashed] (w) {$ w $};
\draw[mg] (u') -- (v');
\draw[opt] (v') -- (w);
\node[] (u'v'debit) at ($(u')+(.5,1)$) {$ w' $};
\node[above of=w] (wcredit) {};
\draw[transfer] (u') -- (u'v'debit);
\draw[transfer] (v') -- (u'v'debit);
\draw[transfer] (wcredit) -- (w);

\node (u') at ($(w)+(1,0)$) {$ x' $};
\node[right of=u'] (v') {$ x $};
\node[right of=v',dashed] (w) {$ w $};
\draw[mg] (u') -- (v');
\draw[opt] (v') -- (w);
\node[dashed] (u'v'debit) at ($(u')+(.5,1)$) {$ w' $};
\node[above of=w] (wcredit) {};
\draw[transfer] (u') -- (u'v'debit);
\draw[transfer] (v') -- (u'v'debit);
\draw[transfer] (wcredit) -- (w);

\node (u') at ($(w)+(1,0)$) {$ x' $};
\node[right of=u'] (v') {$ x $};
\node[right of=v'] (w) {$ w $};
\draw[mg] (u') -- (v');
\draw[opt] (v') -- (w);
\node[above of=v',dashed] (v'debit) {$ w' $};
\draw[transfer] (v') -- (v'debit);
\draw (u') edge[transfer,bend right] (w);
\end{tikzpicture}
\caption[\onetwomingreedy: Picking an ``End Edge'' of a Path (II)]{
Removing the last \mopt-edge at an end of a path:
dashed endpoints have degree one after $ x $ and~$ x' $ are matched
(not all edges are drawn)
}
\label{fig:degDropContradiction}
\end{figure}

\section{Maximum Degree $ \maxdeg\geq4 $}
\label{sect:onetwomingreedy4}
In this section we show
\begin{theorem}
\label{thm:mingreedy4}
The \onetwomingreedy algorithm achieves approximation ratio at least~$ \frac{\maxdeg-1}{2\maxdeg-3} $ for graphs of maximum degree~$ \maxdeg\geq4 $.
\end{theorem}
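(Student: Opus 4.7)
The plan is to verify the balance bounds \cref{eqn:balanceboundsinglereg} for singletons and \cref{eqn:balanceboundaugreg} for paths. For $\maxdeg \geq 4$ the path bound requires an extra surplus of $2(\maxdeg-2)$ coins per path beyond the two endpoint credits, and transfers alone cannot always deliver this. I would therefore introduce \emph{donations} as a second coin-moving mechanism, defined so that the budget bounds \cref{def:edgedeb} and \cref{eqn:mintwocreds} remain valid once transfers and donations are jointly counted against each \m-edge and each path.

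For singletons the case is actually easier than for $\maxdeg = 3$, and a split on the minimum degree at creation of the singleton $X = \{u,v\}$ suffices. If $d(u) \leq 2$, then $u$ has at most one $F$-edge and $v$ at most $\maxdeg - 1$, so $d_X \leq \maxdeg \leq 2\maxdeg - 4$, leaving at least $\maxdeg-2\geq 2$ missing debits. If $d(u) \geq 3$, then \onetwomingreedy selected $\{u,v\}$ as an arbitrary edge under the ``minimum degree at least $3$'' rule, so every neighbor of $u$ or $v$ had degree $\geq 3$ before the pick and its degree drops by at most one, precluding a transfer by \cref{def:transfer}; all $2(\maxdeg - 1)$ potential debits are missing. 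This proves \cref{eqn:balanceboundsinglereg} with no donation machinery required.

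For paths, the $2(\maxdeg - 2)$ surplus is split between missing debits and donations. I would first generalize \cref{lemma:lowDebs}\cref{lemma:lowDebsA} and \cref{lemma:lowDebs}\cref{lemma:lowDebsB} to \maxdeg-fold statements: an \m-edge $\{x,x'\}$ of a path picked at $d(x) = 1$ contributes $\maxdeg - 2$ missing debits at $x$ trivially, and an \m-edge picked at $d(x) = \maxdeg$ contributes a missing debit for every $F$-neighbor of $\{x,x'\}$ whose degree does not drop to one, via the same ``a path endpoint would have to be matched next'' contradiction used for $\maxdeg = 3$. Summed over the two end \m-edges of a path $X$, these missing debits typically reach the required $2(\maxdeg - 2)$. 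When this counting falls short—most notably for a half-path whose sole \m-edge is picked at intermediate degree, and for the configurations analogous to \cref{sect:onemissingdebandanextracred}—I would define donations originating from \m-edges whose cap \cref{def:edgedeb} is not tight: chiefly from singletons created at minimum degree $\geq 3$ (which above carry $2(\maxdeg - 1)$ missing debits and can afford to spend up to $2(\maxdeg - 2)$) and from interior path \m-edges whose $F$-neighbors are all \m-covered.

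The main obstacle is the donation bookkeeping. Every coin donated from an \m-edge $\{y, y'\}$ must be charged injectively to a distinct missing transfer at $\{y, y'\}$, so that the per-edge cap \cref{def:edgedeb} is preserved once donations and transfers are added; and every path in deficit must be guaranteed a reachable donor at the moment of creation. I expect this to be handled by an injective charging from deficit-tokens carried by paths in deficit to surplus-tokens carried by missing transfers at donor edges, whose existence is certified step by step by a case analysis paralleling \cref{sect:onemissingdebandanextracred}, but tracking the additional flexibility afforded by $\maxdeg \geq 4$, where matched nodes can have many more neighbors and correspondingly more witness positions. The routing itself must use an $F$-edge fixed at the creation step of the deficit path, so that the donor component is identifiable from the local neighborhood seen by \onetwomingreedy when it triggered the deficit.
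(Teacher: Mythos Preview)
Your singleton argument for $d(u)\geq 3$ contains a concrete error. You assert that every neighbour of $u$ or $v$ has its degree drop by at most one, hence no transfer is created. But a common neighbour $w$ of $u$ and $v$ loses two edges; if $d(w)=3$ then $d'(w)=1$, and if $w$ is a path endpoint with $\{u,w\},\{v,w\}\in F$ both edges are transfers by \cref{def:transfer}. So a singleton created under the ``all degrees $\geq 3$'' rule \emph{can} pay debits, and your claim that all $2(\maxdeg-1)$ potential debits are missing is false. The paper proves the singleton bound by the same ``next selected node $z$ must also be a common neighbour, hence two $F$-edges $\{y,z\},\{y',z\}$ are non-transfers'' contradiction that it used for $\maxdeg=3$ (\cref{lemma:lowDebs}\cref{lemma:lowDebsB}), not by your direct degree-drop claim. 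This error also undermines your donation scheme, since you propose to finance donations precisely out of the (nonexistent) $2(\maxdeg-1)$-coin surplus at such singletons.

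More broadly, your plan diverges from the paper's mechanism in two structural ways and neither divergence is filled in. First, the paper introduces \emph{transfer cancellation} (\cref{def:cancel}): without it a single endpoint can absorb three credits and the source path can fail \cref{eqn:balanceboundaugreg} (the paper gives an explicit example, \cref{fig:threecreds}). You never mention cancellation, and nothing in your outline replaces it. Second, the paper's donations are not a global surplus-to-deficit charging: the donor is pinned down as the node $u'$ selected in the very next step after $X$ is created, and the whole proof rests on the joint bound $k+l\leq 2(\maxdeg-2)$ of \cref{lemma:kplusl}, where $k$ counts coins donated by $u'$ and $l$ counts debits paid by its mate $v'$. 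This temporal coupling is what makes \cref{def:edgedeb} survive the donation payment; your proposal to route donations from ``singletons created at minimum degree $\geq 3$'' or ``interior path $M$-edges whose $F$-neighbours are all $M$-covered'' has no such coupling, and you give no argument that a reachable donor with enough unused cap exists at the moment a deficit path is created. The ``injective charging'' paragraph is a statement of what would need to be true, not an argument that it is.
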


In the proof we have to verify bounds \cref{def:edgedeb}, \cref{eqn:mintwocreds}, \cref{eqn:balanceboundsinglereg}, and \cref{eqn:balanceboundaugreg}.
First, we recall these bounds in an overview of how we refine the analysis applied for~$ \maxdeg=3 $.
In particular, the overview is structured so as to highlight the two main concepts used in our analysis.
\begin{itemize}
\item
Canceling Transfers

Since~\maxdeg is now larger than~$ 3 $, a path endpoint might receive more credits than for~$ \maxdeg=3 $.
However, extra transfers degrade the balance of their source components.

To annihilate this drawback we \emph{cancel} extra transfers: certain transfers as per \cref{def:transfer} will no longer move coins.
The discussion is found in \cref{sect:cancel}.

In particular, we carefully cancel transfers such that \cref{prop:numcreditsreg} still applies:
each endpoint of a path~$ X $ still receives at least one credit.
Consequently, bound \cref{eqn:mintwocreds} of at least
\begin{align*}
c_X\geq2\tag{\ref{eqn:mintwocreds}}
\end{align*}
credits to~$ X $ will be satisfied.
\item 
Donations

We discuss how to make a path~$ X $ balanced.
Assuming that \cref{eqn:mintwocreds,def:edgedeb} hold, i.e.\ that the minimum balance of~$ X $ is~$ c_X-d_X\geq2-D_X $, observe that
\begin{align*}
c_X-d_X\geq2-D_X+2(\maxdeg-2)\tag{\ref{eqn:balanceboundaugreg}}
\end{align*}
holds if~$ X $ has at least~$ 2(\maxdeg-2) $ missing debits.
However, we will see that the number of missing debits from~$ X $ might smaller, even as small as a constant.

To increase the balance of~$ X $ we introduce a \emph{donation} to~$ X $, which will compensate excessively payed debits from~$ X $.
Summing up, path~$ X $ will have to pay~$ 2(\maxdeg-2) $ debits less than maximum, as desired.
Like transfers, donations are edges in~$ F $.
The discussion is found in \cref{sect:donations}.
\end{itemize}

The balance of the source component~$ Y $ of a donation is reduced, since coins have to be payed.
Is~$ Y $ balanced?
If~$ Y $ is a singleton, then we show that both nodes of~$ Y $ pay at most
\begin{align*}
d_Y\leq 2(\maxdeg-2)\tag{\ref{eqn:balanceboundsinglereg}}
\end{align*}
coins even if a donation must be payed.
For a path~$ Y $ we have to verify \cref{def:edgedeb} for each~\m-edge~\edge{y,y'} of~$ Y $, i.e.\ we have to show that nodes~$ y $ and~$ y' $ pay at most
\begin{align*}
2(\maxdeg-2)\tag{\ref{def:edgedeb}}
\end{align*}
coins even if a donation must be payed.

\paragraph{Organization of the Proof.}
In \cref{sect:cancel} we introduce transfer cancellations and verify bound \cref{eqn:mintwocreds}.
Thereafter we discuss donations in \cref{sect:donations}.
In \cref{sect:maxnumcoins} we develop bounds on the number of coins payed by edges with an outgoing donation.
Finally, in \cref{sect:mostcomps} we show that all components are balanced by proving \cref{def:edgedeb}, \cref{eqn:balanceboundsinglereg}, and \cref{eqn:balanceboundaugreg}.

\subsection{Canceling Extra Transfers}
\label{sect:cancel}
Unlike for~$ \maxdeg=3 $ where each path endpoint has at most two incident~$ F $-edges and therefore receives at most two credits, for~$ \maxdeg\geq4 $ we will see that a path endpoint might receive up to three credits.
But extra credits decrease the balances of their source components.
In order to push up their balances, we show how to cancel extra credits in \cref{def:cancel}.
Thereafter we verify \cref{eqn:mintwocreds} in \cref{prop:newminmaxcred}.

\paragraph{Example.}
We prepare an example of an endpoint~$ w $ which receives an extra credit.
Assume that~$ w $ receives~$ k $ credits from nodes matched in steps~$ 1,\dots,s $:
in any step~$ s+l $ with~$ l\geq1 $ we say that~$ w $ \emph{already receives}~$ k $ credits.

Consider \cref{fig:threecreds}, where \onetwomingreedy begins by creating the~$ \maxdeg-5 $ many~\oneOverTwo-paths drawn above path~$ X $, then proceeds to pick the two edges of~$ X $ in steps~$ \maxdeg-4 $ and~$ \maxdeg-3 $, and eventually picks the two singletons drawn left and right of~$ X $.
Endpoint~$ w $ receives three credits in total:
in step~$ \maxdeg-3 $ node~$ w $ already receives two credits from the nodes matched to create path~$ X $, and afterwards~$ w $ receives an additional credit.
\begin{figure}[htbp!]
\centering
\begin{tikzpicture}
\node[ultra thick] (u) {};
\node[right=of u] (v) {};
\node[right=of v,ultra thick]  (u') {};
\node[right=of u'] (v') {};
\node[left=of u] (w) {};
\node[right=of v'] (w') {};

\node[ultra thick] at ($ (w)+(-1.5,-.75) $) (sl1) {};
\node at ($ (w)+(-1.5,+.75) $) (sl2) {};

\node[ultra thick] at ($ (w')+(1.5,-.75) $) (sr1) {};
\node at ($ (w')+(1.5,+.75) $) (sr2) {};

\node at ($ (u)+(0,2.5) $) (bpl1) {$ w $};
\node[above=of bpl1] (bpl2) {};
\node[ultra thick] at ($ (bpl2)+(-.75,1) $) (bpl3) {};
\node[right=of bpl3] (bpl4) {};

\node at ($ (u')+(0,2.5) $) (bpr1) {};
\node[above=of bpr1]  (bpr2) {};
\node[ultra thick] at ($ (bpr2)+(-.75,1) $) (bpr3) {};
\node[right=of bpr3]  (bpr4) {};

\draw[opt,mg] (sl1) -- node[right=1mm,alabel] {\rotatebox{90}{\scriptsize $\!\!\maxdeg{-}2 $}} (sl2);
\draw[opt,mg] (sr1) -- node[right=1mm,alabel] {\rotatebox{90}{\scriptsize $ \!\!\maxdeg{-}1 $}} (sr2);
\draw[opt] (w) -- (u);
\draw[opt] (v) -- (u');
\draw[opt] (v') -- (w');
\draw[opt] (bpl1) -- (bpl2);
\draw[opt] (bpl3) -- (bpl4);
\draw[opt] (bpr1) -- (bpr2);
\draw[opt] (bpr3) -- (bpr4);

\draw[mg] (u) -- node[below=1mm,alabel] {{\scriptsize$ \!\maxdeg{-}4 $}} (v);
\draw[mg] (u') -- node[below=1mm,alabel] {{\scriptsize$ \!\maxdeg{-}3 $}} (v');
\draw[mg] (bpl2) -- node[midway,xshift=-3mm,yshift=-3mm,alabel] {\rotatebox{-50}{\scriptsize $\!\! 1 $}} (bpl3);
\draw[mg] (bpr2) -- node[midway,xshift=-3mm,yshift=-3mm,alabel] {\rotatebox{-50}{\scriptsize$\!\! \maxdeg{-}5 $}} (bpr3);

\draw[transfer,->] (bpl2) -- (bpl4);
\draw[transfer,->] (bpr2) -- (bpr4);
\draw[transfer,->] (sl1) -- (w);
\draw[transfer,->] (sl2) -- (w);
\draw[transfer,->] (sr1) -- (w');
\draw[transfer,->] (sr2) -- (w');
\draw[transfer,->] (u) -- (bpl1);
\draw[transfer,->] (u) -- (bpr1);
\draw[transfer,->] (v) -- (bpl1);
\draw[transfer,->] (v) edge[bend right=0] (bpr1);
\draw[transfer,->] (v') -- (bpl1);
\draw[transfer,->] (v') -- (bpr1);

\draw (sl1) edge[bend right=15] (u);
\draw (sl2) edge[bend left=15] (u);

\draw (sr1) edge[bend left=15] (v');
\draw (sr2) edge[bend right=15] (v');

\draw (u) edge[bend right=60] (u');

\node[alabel] at ($ (bpl2)+(1.5,0) $) {$ \dots $};

\node[alabel] at ($ (u)+(.4,.75) $) {\rotatebox{-20}{$ \dots $}};
\node[alabel] at ($ (v)+(0,.5) $) {\rotatebox{0}{$ \;\dots $}};
\node[alabel] at ($ (v')+(-1.15,1) $) {\rotatebox{45}{$ \;\dots $}};

\draw [decorate,decoration={brace,amplitude=1.666em}]
($ (w')+(0,-1.333) $) -- ($ (w)+(0,-1.333) $) node [alabel,midway,yshift=-2.5em] 
{$ X $};
\end{tikzpicture}
\caption[\onetwomingreedy: An Endpoint With an Extra Credit]{Endpoint~$ w $ receives three credits from path~$ X $
(\m-edges are numbered to indicate the order in which they are picked, where \onetwomingreedy selects fat nodes)}
\label{fig:threecreds}
\end{figure}
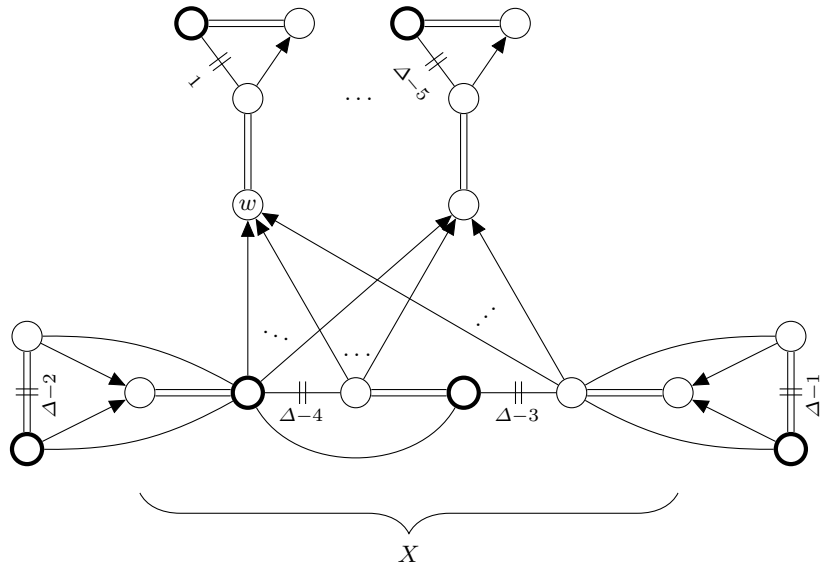

However, path~$ X $ is not balanced.
Why?
Path~$ X $ pays three debits to \emph{each} of the~$ \maxdeg-5 $ topmost~\mbox{\oneOverTwo-paths}, and receives only four credits from the two singletons.
Hence the balance of~$ X $ is as small as~$ c_X-d_X=4-3(\maxdeg-5)$, whereas balance at least~$c_X-d_X\geq2-2\w{X}(\maxdeg-2)+2(\maxdeg-2)=2-2(\maxdeg-2) $ is required by \cref{eqn:balanceboundaugreg}.

\paragraph{The Definition.}
Therefore we cancel an extra credit to~$ w $, thereby limiting the number of credits to~$ w $ to at most two (like for~$ \maxdeg=3 $).
To prepare the definition we first analyze the exact ``configuration'' of a path endpoint with more than two credits.
\pagebreak
\begin{lemma}
\label{prop:maxthreecreds}
As a consequence of \cref{def:transfer} of transfers,
\begin{enumerate}[noitemsep,topsep=0mm,label=\alph*)]
\item\label{prop:maxthreecreds_bound} each path endpoint receives at most three credits, and
\item\label{prop:maxthreecreds_situation} path endpoint~$ w $ receives three credits if and only if the the following holds:
the degree of~$ w $ drops
\begin{enumerate}[topsep=0mm,noitemsep,label=\roman*.]
\item\label{prop:maxthreecreds_situation:3to1}
from~$ d(w)=3 $ to~$ d(w)=1 $ when two incident~$ F $-edges are removed and 
\item\label{prop:maxthreecreds_situation:1to0}
from~$ d(w)=1 $ to~$ d(w)=0 $ when the last incident~$ F $-edge is removed.
\end{enumerate}
\end{enumerate}
\end{lemma}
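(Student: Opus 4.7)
The plan is to track step by step how the degree~$d(w)$ of endpoint~$w$ evolves and, for each step, bound the number of incident $F$-edges of~$w$ that qualify as transfers. Since~$w$ is a path endpoint, \onetwomingreedy never selects~$w$, so~$w$ loses an incident edge only when some node being matched is a neighbor of~$w$; therefore each step removes at most two edges incident to~$w$, i.e.\ $d(w)$ drops by at most~$2$ per step. By \cref{def:transfer}, a step contributes credits to~$w$ only if it both removes at least one $F$-edge at~$w$ and leaves~$d(w)\leq 1$ afterwards, and each such step contributes one credit per $F$-edge at~$w$ removed in it.

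For part~a), I would let~$s^\star$ be the first step in which~$d(w)$ drops to at most~$1$, and denote the degrees of~$w$ just before and just after~$s^\star$ by~$d^\star\geq 2$ and~$d'\leq 1$, respectively. Since at most two edges are removed in~$s^\star$, we have $d^\star-d'\leq 2$, so $d^\star\in\{2,3\}$ and $d^\star-d'\leq 2$ is an upper bound on the credits contributed by~$s^\star$. All steps prior to~$s^\star$ contribute zero credits by the choice of~$s^\star$. All steps strictly after~$s^\star$ together remove at most~$d'\leq 1$ further edges of~$w$, so contribute at most one credit in total. Thus~$w$ receives at most~$2+1=3$ credits.

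For part~b), the characterization follows by chasing equalities. For~$w$ to collect two credits in~$s^\star$, step~$s^\star$ must remove exactly two $F$-edges (so the \mopt-edge of~$w$ is not removed in~$s^\star$), hence $d^\star-d'=2$; for~$w$ to collect a further credit after~$s^\star$, at least one more edge must remain incident to~$w$ after~$s^\star$, forcing~$d'=1$ and therefore~$d^\star=3$; and the step after~$s^\star$ must remove an $F$-edge of~$w$, which is necessarily the last incident edge of~$w$, so~$d(w)$ drops from~$1$ to~$0$. These are exactly conditions~i.\ and~ii. Conversely, if both conditions hold, the step in~i.\ contributes two credits and the step in~ii.\ contributes one credit, totalling three.

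The main subtlety is the bookkeeping between the $F$-edges and the unique \mopt-edge of~$w$: the \mopt-edge contributes to~$d(w)$ until it is removed in the single step that matches the \mopt-neighbor of~$w$, and a step that removes only the \mopt-edge of~$w$ never generates a credit. The ``if and only if'' in part~b) then falls out once one recognizes that in the three-credit case the \mopt-edge of~$w$ cannot be removed in~$s^\star$ nor in the step after~$s^\star$, so it must have been removed in some earlier non-credit step---consistent with $d(w)$ reaching value~$3$ just before~$s^\star$.
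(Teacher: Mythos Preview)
Your proof is correct and follows essentially the same approach as the paper's. The only cosmetic difference is organizational: you pivot explicitly on the first step $s^\star$ at which $d(w)$ drops to at most~$1$ and bound credits before/at/after it, whereas the paper phrases the same bookkeeping via the contrapositive observation that if $d(w)=2$ ever occurs then fewer than three credits are possible; both routes yield the identical case split $d^\star\in\{2,3\}$ and the same equality chase for part~(b).
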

\begin{proof}
First we argue that an endpoint~$ w $ receives \emph{less} than three credits if there is a step~$ s $ when the current degree of~$ w $ is~$ d(w)=2 $.
By \cref{def:transfer}, no~$ F $-edge of~$ w $ removed before step~$ s $ is a credit to~$ w $.
Moreover, in step~$ s $ at most two~$ F $-edges are still incident with~$ w $.

Consequently, if endpoint~$ w $ receives at least three credits, then~$ w $ never has current degree~$ d(w)=2 $.
This implies the following two facts.
First, the degree of~$ w $ in~\g is~$ d_\g(w)\geq3 $, since the degree of~$ w $ in~\g is~$ d_\g(w)\geq2 $ by \cref{obs:potentialTransfer}.
Secondly, there is a step when the current degree of~$ w $ is~$ d(w)=3 $:
the degree~$ d(w) $ of~$ w $ never equals two, hence we get that~$ d(w) $ drops from larger than~$ d(w)>2 $ to below~$ d(w)<2 $, which is only possible if~$ d(w) $ drops from exactly~$ d(w)=3 $, since in each step at most two edges incident with~$ w $ are removed.

\cref{prop:maxthreecreds_bound}
In the step when~$ d(w)=3 $ holds, by \cref{def:transfer} of transfers endpoint~$ w $ still has zero credits.
Hence only the remaining three incident edges can be transfer~$ F $-edges.
Thus~$ w $ receives at most~$ 3 $ credits.

\cref{prop:maxthreecreds_situation}
In the step when~$ d(w)=3 $ holds, all remaining edges incident with~$ w $ must be~$ F $-edges, since otherwise~$ w $ would receive less than three credits.
The statement follows, since we have already argued that~$ w $ never has degree~$ d(w){=}2 $.
\qed
\end{proof}

If endpoint~$ w $ receives the maximum of three credits, then we cancel the ``third'' credit, i.e.\ the one coming from a node getting matched in the step when the degree of~$ w $ drops from one to zero, cf. \cref{prop:maxthreecreds}\cref{prop:maxthreecreds_situation}\cref{prop:maxthreecreds_situation:1to0}
(Observe that path~$ X $ in \cref{fig:threecreds} is now balanced, since the debits from its rightmost~\m-covered node are canceled and therefore the balance~$ c_X-d_X\geq4-2(\maxdeg-5)>2-2(\maxdeg-2) $ is now strictly larger than \cref{eqn:balanceboundaugreg} requires.)

\begin{definition}
\label{def:cancel}
Let~$ w $ be a path endpoint with current degree~$ d(w)=1 $ and one incident~$ F $-edge~\edge{v,w}.
If~$ w $ already receives two credits, then we \emph{cancel} transfer~\dedge{v,w}, i.e.\ edge~\edge{v,w} does \emph{not} move coins.
\end{definition}

The final set of transfers is given by \cref{def:transfer,def:cancel}.
We are now ready to prove \cref{eqn:mintwocreds}.

\begin{lemma}
\label{prop:newminmaxcred}
\begin{enumerate}[label=\alph*)]
\item\label{prop:newminmaxcred:atmost}
Each path endpoint~$ w $ receives at most two credits.
An endpoint for which a credit was canceled receives exactly two credits.
\hspace*{-1.5mm}\item\label{prop:newminmaxcred:atleast}
Each path endpoint~$ w $ receives at least one credit.
Therefore \cref{eqn:mintwocreds} holds.
\end{enumerate}
\end{lemma}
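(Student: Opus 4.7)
Both statements follow by combining the three-credit classification of \cref{prop:maxthreecreds} with the structural trigger of \cref{def:cancel}. The observation to build on is that the trigger ``$d(w)=1$, one incident $F$-edge, and $w$ already has two credits'' coincides exactly with the last step in the scenario of \cref{prop:maxthreecreds}\cref{prop:maxthreecreds_situation} where~$w$ would receive its third credit; so every ``extra'' credit is canceled, and cancellation never touches an endpoint holding only one credit.

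For the ``at most two'' claim in part~(a), I would first note from \cref{prop:maxthreecreds}\cref{prop:maxthreecreds_bound} that before any cancellations each endpoint receives at most three credits. If~$w$ would receive three, then by \cref{prop:maxthreecreds}\cref{prop:maxthreecreds_situation} its degree drops from~$3$ to~$1$ via removal of two incident $F$-edges and later from~$1$ to~$0$ via removal of the last one. At this later step, the trigger of \cref{def:cancel} is met (degree one, one remaining $F$-edge, two credits already received), so the third credit is canceled and~$w$ finishes with at most two credits. For the second sentence of part~(a), note that cancellation fires only after~$w$ has accumulated two credits, and the canceled edge is~$w$'s last incident $F$-edge, after which $d(w)=0$ precludes any further credits; so any~$w$ whose credit was canceled ends with exactly two credits.

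For part~(b), \cref{prop:numcreditsreg} already yields at least one credit per endpoint using only transfers. Since the trigger of \cref{def:cancel} requires~$w$ to already hold two credits, a cancellation can only turn a would-be three into two, never reducing an endpoint below one. Summing over the two endpoints of each path~$X$ then gives $c_X\geq 2$, i.e.\ \cref{eqn:mintwocreds}. I do not expect a genuine obstacle here; the proof is essentially a bookkeeping step, and the only care needed is to match the trigger of \cref{def:cancel} with case~\cref{prop:maxthreecreds_situation:1to0} of \cref{prop:maxthreecreds}\cref{prop:maxthreecreds_situation}, as sketched above.
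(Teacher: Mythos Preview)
Your proposal is correct and follows essentially the same approach as the paper: part~(a) by combining \cref{prop:maxthreecreds} with the trigger of \cref{def:cancel}, and part~(b) by observing that cancellation only fires once two credits are already in hand, so \cref{prop:numcreditsreg} still guarantees at least one credit per endpoint. Your write-up is in fact more explicit than the paper's (which dispatches (a) in a single sentence), but the underlying argument is identical.
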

\pagebreak
\begin{proof}
\cref{prop:newminmaxcred:atmost} is a direct consequence of \cref{prop:maxthreecreds,def:cancel}, since~$ w $ receives at most three credits, and if so then the third credit is canceled.
To prove \cref{prop:newminmaxcred:atleast} we only have to consider path endpoints with less than two credits.
In particular, we only have to study an endpoint~$ w $ for which no credit was canceled, since by \cref{prop:newminmaxcred:atmost} only such an endpoint can have less than two credits.
Now observe that \cref{prop:numcreditsreg} applies to~$ w $, no matter if a credit was canceled for~$ w $ or not:
by \cref{prop:numcreditsreg} endpoint~$ w $ receives at least one credit.
\qed
\end{proof}

\subsection{Moving Coins to Paths in Donations}
\label{sect:donations}
We prepare the discussion and introduce some notation, which will also be used in the remainder of \cref{sect:onetwomingreedy4}.
Let~$ X $ be a path and consider the creation step of~$ X $.
We denote the current degree of a node~$ x $ as~$ d(x) $.
The nodes matched in the creation step are called~$ u $ and~$ v $, where we assume that \onetwomingreedy selects~$ u $ and matches~$ u $ with neighbor~$ v $.
In the next step, we denote the current degree of~$ x $ as~$ d'(x) $, and we call the matched nodes~$ u' $ and~$ v' $, where we assume that~$ u' $ is selected and matched with neighbor~$ v' $.

\paragraph{Organization of this Section.}
We begin this section with an example of a path which pays many debits from nodes~$ u $ and~$ v $.
Thereafter, we sketch that nodes~$ u' $ and~$ v' $ are good candidates to compensate for the excessively payed debits from~$ u $ and~$ v $.
We conclude with the definition of donations---which move coins over~$ F $-edges---and a discussion of \emph{donation steps}, i.e.\ the steps matching source nodes of donations.

\paragraph{Example.}
Recall that to verify \cref{eqn:balanceboundaugreg} we would like to show that for each path at least~$ 2(\maxdeg-2) $ debits are missing.
However, even after canceling transfers a path might pay too many debits.
In particular, the example in \cref{fig:threecreds} can be changed slightly as follows, see \cref{fig:threecreds:single}:
like in \cref{fig:threecreds} the \onetwomingreedy algorithm first creates the topmost~$ \maxdeg-5 $ many~\oneOverTwo-path end creates the left and right singletons in the last two steps;
however, in steps~$ \maxdeg-4 $ and~$ \maxdeg-3 $ the algorithm creates a~\oneOverTwo-path~$ X $ and another singleton.
Path~$ X $ receives~$ c_{X}=4 $ credits and pays~$ d_{X}=2(\maxdeg-5) $ debits, i.e.\ its balance is~$ c_{X}-d_{X}=4-2(\maxdeg-5) $, whereas balance at least~$ c_{X}-d_{X}\geq2-2\w{X}(\maxdeg-2)+2(\maxdeg-2)=2 $ is required by \cref{eqn:balanceboundaugreg}.

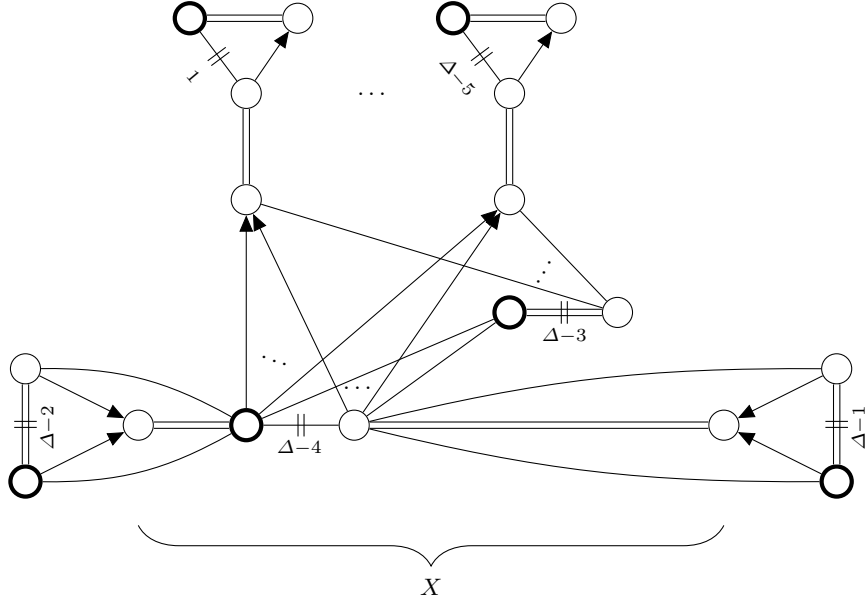
\begin{figure}[htbp!]
\centering
\begin{tikzpicture}

\node[ultra thick] (u) {};
\node[right=of u] (v) {};
\node[left=of u] (w) {};
\node[right=4.5 of v] (w') {};

\node[ultra thick] at ($ (w)+(-1.5,-.75) $) (sl1) {};
\node at ($ (w)+(-1.5,+.75) $) (sl2) {};

\node[ultra thick] at ($ (w')+(1.5,-.75) $) (sr1) {};
\node at ($ (w')+(1.5,+.75) $) (sr2) {};

\node at ($ (u)+(0,3) $) (bpl1) {};
\node[above=of bpl1] (bpl2) {};
\node[ultra thick] at ($ (bpl2)+(-.75,1) $) (bpl3) {};
\node[right=of bpl3] (bpl4) {};

\node at ($ (bpl1)+(3.5,0) $) (bpr1) {};
\node[above=of bpr1]  (bpr2) {};
\node[ultra thick] at ($ (bpr2)+(-.75,1) $) (bpr3) {};
\node[right=of bpr3]  (bpr4) {};

\draw[opt,mg] (sl1) -- node[right=1mm,alabel] {\rotatebox{90}{\scriptsize $\!\!\maxdeg{-}2 $}} (sl2);
\draw[opt,mg] (sr1) -- node[right=1mm,alabel] {\rotatebox{90}{\scriptsize $ \!\!\maxdeg{-}1 $}} (sr2);
\draw[opt] (w) -- (u);
\draw[opt] (v) -- (w');
\draw[opt] (bpl1) -- (bpl2);
\draw[opt] (bpl3) -- (bpl4);
\draw[opt] (bpr1) -- (bpr2);
\draw[opt] (bpr3) -- (bpr4);

\draw[mg] (u) -- node[below=1mm,alabel] {{\scriptsize$ \!\maxdeg{-}4 $}} (v);
\draw[mg] (bpl2) -- node[midway,xshift=-3mm,yshift=-3mm,alabel] {\rotatebox{-50}{\scriptsize $\!\! 1 $}} (bpl3);
\draw[mg] (bpr2) -- node[midway,xshift=-3mm,yshift=-3mm,alabel] {\rotatebox{-50}{\scriptsize$\!\! \maxdeg{-}5 $}} (bpr3);

\draw[transfer,->] (bpl2) -- (bpl4);
\draw[transfer,->] (bpr2) -- (bpr4);
\draw[transfer,->] (sl1) -- (w);
\draw[transfer,->] (sl2) -- (w);
\draw[transfer,->] (sr1) -- (w');
\draw[transfer,->] (sr2) -- (w');
\draw[transfer,->] (u) -- (bpl1);
\draw[transfer,->] (u) -- (bpr1);
\draw[transfer,->] (v) -- (bpl1);
\draw[transfer,->] (v) -- (bpr1);

\draw (sl1) edge[bend right=15] (u);
\draw (sl2) edge[bend left=15] (u);

\draw (sr1) edge[bend left=7] (v);
\draw (sr2) edge[bend right=7] (v);

\node[alabel] at ($ (bpl2)+(1.65,0) $) {$ \;\dots $};

\node[alabel] at ($ (u)+(.4,.85) $) {\rotatebox{-20}{$ \dots $}};
\node[alabel] at ($ (v)+(0,.5) $) {\rotatebox{0}{$ \;\,\dots $}};

\node[ultra thick] at ($ (bpr1)+(0,-1.5) $) (s2) {};
\node[right=of s2] (s1) {};

\draw[opt,mg] (s1) -- node[alabel,below=1mm] {\rotatebox{0}{\scriptsize$ \maxdeg{-}3 $}} (s2);

\draw (u) -- (s2);
\draw (v) -- (s2);

\draw[rounded corners=10pt] (s1) -- (bpl1);
\draw (s1) -- (bpr1);

\node[alabel] at ($ (s1)+(-1,.5) $) {\rotatebox{60}{$ \;\,\dots $}};

\draw [decorate,decoration={brace,amplitude=1.666em}]
($ (w')+(0,-1.333) $) -- ($ (w)+(0,-1.333) $) node [alabel,midway,yshift=-2.5em] 
{$ X $};
\end{tikzpicture}
\caption[\onetwomingreedy: A~\oneOverTwo-Path With too Many Debits]{A~\oneOverTwo-path~$ X $ which pays too many debits
(\m-edges are numbered to indicate the order in which they are picked, where \onetwomingreedy selects fat nodes)}
\label{fig:threecreds:single}
\end{figure}

\paragraph{Sketch.}
Hence we have to provide more coins to an unbalanced path~$ X $.
Therefore we focus on the creation step of~$ X $ as well as on the following step.
Recall that \cref{eqn:balanceboundaugreg} holds if nodes~$ u $ and~$ v $ pay~$ k=0 $ debits, since then at least~$ 2(\maxdeg-2) $ debits are missing in total.
If nodes~$ u $ and~$ v $ pay
\begin{align*}
k>0
\end{align*}
coins over debits, then it will turn out that~$ u' $ and~$ v' $ have missing debits, which can be used to compensate for excessively payed debits from~$ u $ and~$ v $.
Depending on whether~$ u' $ and~$ v' $ belong to~$ X $ or not we proceed as follows:
\begin{itemize}
\item
If~$ u' $ and~$ v' $ belong to~$ X $, then the numbers of missing debits for~$ u,v,u'$ and~$ v' $ will add up to at least~$ 2(\maxdeg-2) $:
the balance of~$ X $ is~$ c_X-d_X\geq 2-D_X+2(\maxdeg-2) $, as required in \cref{eqn:balanceboundaugreg}.
\item
If otherwise~$ u' $ and~$ v' $ belong to another component~$ Y\neq X $, then without~$ Y $ becoming unbalanced component~$ Y $ will be able to \emph{donate}~$ k $ coins to~$ u $ and~$ v $:
the~$ k $ debits payed by~$ u $ and~$ v $ are compensated and again the balance of~$ X $ is increased by at least~$ 2(\maxdeg-2) $ over~$ 2-D_X $.

We call the step in which~$ u' $ and~$ v' $ are matched the \emph{donation step of~$ X $}.
\end{itemize}

\paragraph{Degree-1 Endpoints After Creation.}
We argue that debits are missing for~$ u' $ and~$ v' $.
Recall that we need only consider the case that a debit leaves~$ u $ or~$ v $, say to endpoint~$ w $, since otherwise we have found sufficiently many missing debits for \cref{eqn:balanceboundaugreg} to hold.
\cref{def:transfer} of transfers requires that~$ w $ has degree at most~$ d'(w)\leq1 $ after the creation step of~$ X $.
The key to our proof is:

endpoint~$ w $ might have degree \emph{exactly}~$ d'(w)=1 $ after creation of~$ X $.

\begin{definition}
\label{def:deg1endpaftercreate}
Let~$ X $ be a path being created when \onetwomingreedy selects node~$ u $ and matches it with~$ v $.
Assume that~$ u $ or~$ v $ pays a debit, say to endpoint~$ w $.
If after creation of~$ X $ endpoint~$ w $ has degree exactly~$ d'(w)=1 $, then we say that \emph{a \degreeone endpoint exists after creation of~$ X $}.
\end{definition}
\pagebreak

\noindent
Assume that a \degreeone endpoint exists after creation of~$ X $, call it~$ w $.
Since endpoint~$ w $ will never be matched, the \onetwomingreedy algorithm selects another \degreeone node, namely node~$ u' $, next.
As desired, we have found missing debits for~$ u' $:
since~$ u' $ has degree~$ d'(u')=1 $ when being selected, node~$ u' $ pays zero debits by \cref{def:transfer}.

\medskip

In \cref{lemma:createpathdeg1} we identify the ``configurations'' in which a \degreeone endpoint exists after creation of a path.
In particular, in \cref{lemma:createpathdeg1}\cref{lemma:createpathdeg1:deg2} we identify the exact configurations in which \emph{no} \degreeone endpoint exists after creation.

\begin{lemma}
\label{lemma:createpathdeg1}
Consider the creation step of path~$ X $, where nodes~$ u $ and~$ v $ are matched and \onetwomingreedy selects~$ u $ with current degree~$ d(u) $.
Assume that one of~$ u $ or~$ v $ pays a debit~$ t $, say to node~$ w $.
\begin{enumerate}[topsep=0mm,noitemsep,label=\alph*)]
\item\label{lemma:createpathdeg1:deg3}
If~$ d(u){\geq}3 $ holds, then a \degreeone endpoint exists after creation of~$ X $.
\item\label{lemma:createpathdeg1:deg2}
If~$ d(u){=}2 $ holds, then a \degreeone endpoint exists after creation of~$ X $, unless the following holds (see \cref{fig:debsnodeg1endpoint} for an illustration of this exception):
\begin{enumerate}[topsep=0mm,noitemsep,label=\roman*.]
\item\label{lemma:createpathdeg1:w0}
after creation of~$ X $ endpoint~$ w $ has degree~$ d'(w)=0 $,
\item\label{lemma:createpathdeg1:w2}
at creation of~$ X $ endpoint~$ w $ has degree~$ d(w)=2 $,

\item\label{lemma:createpathdeg1:nodebu}
no debit leaves~$ u $,
\item\label{lemma:createpathdeg1:debv}
debit~$ t $ leaves~$ v $, i.e.\ we have~$ t=\dedge{v,w} $,
\item\label{lemma:createpathdeg1:uw}
we have~$ \edge{u,w}\in\mopt $,
\item\label{lemma:createpathdeg1:wi}
at creation of~$ X $ all other endpoints~$ w_1,\dots,w_k $ neighboring~$ v $ (i.e.\ we have~$ w\notin\{w_1,\dots,w_k\} $) have degree at least three, and
\item\label{lemma:createpathdeg1:onedeb}
node~$ v $ pays no other debits besides~$ t $.
\end{enumerate}

\end{enumerate}
\begin{figure}
\centering
\begin{tikzpicture}
\node (w) {$ w $};
\node[below=of w,ultra thick] (u) {$ u $};
\node at ($ (u)+(1.5,.75) $) (v) {$ v $};
\node at ($ (v)+(0,1) $) (w1) {$ w_1 $};
\node at ($ (v)+(+2,1) $) (wr) {$ w_k $};

\draw[opt] (w) -- (u);
\draw[opt] (v) -- ($ (v)+(1.5,0) $);

\draw[mg] (u) -- node[below=2mm,alabel] {$ X $} (v);

\draw[transfer,->] (v) -- (w);

\draw (v) -- (w1);
\draw (v) -- (wr);

\draw (w1) -- ($ (w1)+(-.5,1.5) $);
\draw (w1) -- ($ (w1)+(+.5,1.5) $);
\draw (wr) -- ($ (wr)+(-.5,1.5) $);
\draw (wr) -- ($ (wr)+(+.5,1.5) $);

\node[above=of w1,alabel] {$ \;\dots $};
\node[above=of wr,alabel] {$ \;\dots $};
\node[right=.4 of w1,alabel] {$ \;\dots $};

\node[right=1.5 of v,alabel] {$ \dots $};
\end{tikzpicture}
\caption[\onetwomingreedy: No \degreeone Endpoint Exists after Creation]{No \degreeone endpoint exists after creation of path~$ X $ (the~\mopt-neighbor of~$ v $ could be one of the~$ w_i $)}
\label{fig:debsnodeg1endpoint}
\end{figure}
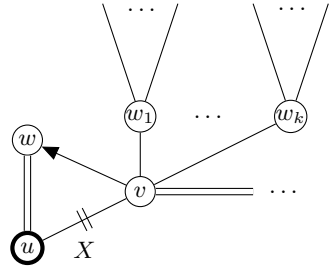
\end{lemma}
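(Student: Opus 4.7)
For part (a), I would use that \onetwomingreedy picks a node of degree at least $3$ only when every current degree is at least $3$; in particular $d(w) \ge 3$. At most the two potential edges $\edge{u,w}$ and $\edge{v,w}$ at $w$ can be removed in step $s$, so $d'(w) \ge d(w) - 2 \ge 1$, which combined with the upper bound $d'(w) \le 1$ from \cref{def:transfer} forces $d'(w) = 1$. Thus $w$ is itself a \degreeone endpoint after creation of $X$.

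For part (b), I assume no \degreeone endpoint exists after creation of $X$ and derive conditions (i)--(vii). I read the hypothesis broadly, as: no path endpoint has current degree exactly one after step $s$; this matches how the notion is used in the next lemma, where the only thing needed is that \onetwomingreedy be forced to select a \degreeone node in step $s+1$. Since $u$ has $d(u) = 2$ and is selected, the algorithm is in the minimum-degree branch, the current minimum degree is $2$, and in particular $d(w) \ge 2$. From \cref{def:transfer} we get $d'(w) \le 1$; the hypothesis excludes $d'(w) = 1$, giving (i) $d'(w) = 0$. Since only $\edge{u,w}$ and $\edge{v,w}$ can drop at $w$ in step $s$, we obtain $d(w) \le 2$, hence (ii) $d(w) = 2$ and $w$ is adjacent to both $u$ and $v$. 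The two neighbors of $u$ at step $s$ are therefore exactly $v$ and $w$. Because $X$ is a path and not a \oneonepath, by the structure of \mg ensured by the first lemma of \cref{sect:theChargingScheme} we have $\edge{u,v} \notin \mopt$; hence the \mopt-partner of $u$ in $X$ must be $w$, which is (v). Both edges at $u$ now lie in $\m \cup \mopt$, so $u$ has no incident $F$-edge and pays no debit, giving (iii); consequently the only debit $t$ has the form $t = \dedge{v,w}$, which is (iv).

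It remains to prove (vii) and (vi). For (vii), a second debit $\dedge{v,w'}$ with $w' \ne w$ would force $u$ not to be adjacent to $w'$ (since $u$'s only neighbors are $v$ and $w$), so only $\edge{v,w'}$ is removed at $w'$ in step $s$ and $d'(w') = d(w') - 1$. Combined with $d(w') \ge 2$ and $d'(w') \le 1$, this yields $d(w') = 2$ and $d'(w') = 1$, contradicting the hypothesis. For (vi), let $w_i \ne w$ be a path endpoint neighboring $v$. If $\edge{v,w_i} \in F$ and $d(w_i) = 2$, the same count gives $d'(w_i) = 1$, making $\dedge{v,w_i}$ a transfer and hence a second debit from $v$, which contradicts (vii). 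If instead $\edge{v,w_i} \in \mopt$, so $w_i = v^*$ is the \mopt-neighbor of $v$, and $d(v^*) = 2$, then again only $\edge{v,v^*}$ drops at $v^*$ (since $u$ is not adjacent to $v^*$), giving $d'(v^*) = 1$ and making $v^*$ itself a \degreeone path endpoint after step $s$, contradicting the hypothesis once more. In every case $d(w_i) \ge 3$.

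The delicate point is the sub-case $w_i = v^*$ of (vi): because $\edge{v,v^*} \in \mopt$ is never an $F$-edge, $v^*$ never arises as an explicit transfer recipient in the sense of \cref{def:transfer}, and (vi) cannot be inferred from a purely literal reading of \cref{def:deg1endpaftercreate}. The broad reading adopted above is what closes this case. Everywhere else the argument reduces to careful bookkeeping of which edges incident with $w$, $w'$, and $w_i$ can actually drop in step $s$, glued together by the structural observation $\edge{u,w} \in \mopt$ forced by $X$ being an augmenting path rather than a \oneonepath.
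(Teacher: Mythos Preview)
Your proof is correct and essentially matches the paper's: part~(a) is identical, and in part~(b) the only difference is that you derive (vii) before (vi) rather than the other way around, which works equally well. Your care about the broad reading of \cref{def:deg1endpaftercreate} in the sub-case $w_i = v^*$ is justified---the paper's own proof of (vi) silently uses the same broad reading when it concludes $d'(w_i)\geq 2$ from the hypothesis that no \degreeone endpoint exists.
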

\begin{proof}

By \cref{def:transfer}, endpoint~$ w $ has degree at most~$ d'(w)\leq1 $ in the step after creation of~$ X $.
Observe that before creation of~$ X $ endpoint~$ w $ is not isolated, since~$ t $ is an incoming transfer and hence edge~$ t $ is still incident with~$ w $.

\cref{lemma:createpathdeg1:deg3}
If~$ d(u)\geq3 $ holds, then the degree of~$ w $ drops from at least~$ d(w)\geq3 $ to at least~$ d'(w)\geq1 $, since at most two edges incident with~$ w $ are removed.
Hence~$ w $ has degree exactly~$ d'(w)=1 $ after creation of~$ X $.

\cref{lemma:createpathdeg1:deg2}
Now assume that~$ d(u)=2 $ holds.
Since we have~$ d'(w)\leq1 $ after creation of~$ X $, to prove the statement we only need to study the case that no \degreeone endpoint exists after creation, i.e.\ that endpoint~$ w $ has degree~$ d'(w)=0 $ after creation.
Observe that \cref{lemma:createpathdeg1:w0}\ holds and it remains to verify \cref{lemma:createpathdeg1:w2,lemma:createpathdeg1:nodebu,lemma:createpathdeg1:debv,lemma:createpathdeg1:uw,lemma:createpathdeg1:wi,lemma:createpathdeg1:onedeb}

\begin{itemize}
\item[\hspace*{-1cm}\cref{lemma:createpathdeg1:w2}]
Endpoint~$ w $ has degree~$ d(w)=2 $ at creation, since the degree is~$ d(w)\geq2 $ by \cref{obs:potentialTransfer} and if the degree was at least~$ d(w)\geq3 $ then it could not drop to~$ d'(w)=0 $ after creation, as is fact by \cref{lemma:createpathdeg1:w0}
\item[\hspace*{-1cm}\cref{lemma:createpathdeg1:nodebu}]
Since~$ u $ has degree~$ d(u)=2 $ at creation of~$ X $, node~$ u $ is not incident with an~$ F $-edge, as would be required in order to pay a debit.
\item[\hspace*{-1cm}\cref{lemma:createpathdeg1:debv}]
Since debit~$ t $ does not leave~$ u $ by \cref{lemma:createpathdeg1:nodebu}, debit~$ t $ leaves~$ v $.
\item[\hspace*{-1cm}\cref{lemma:createpathdeg1:uw}]
Since in the creation step of~$ X $ the degree of~$ w $ drops by two, an edge connecting~$ w $ with each of~$ u $ and~$ v $ is removed, and one of both edges must be the~\mopt-edge incident with~$ w $.
But we have~$ t=\dedge{v,w} $, i.e.\ edge~\edge{v,w} is an~$ F $-edge.
We obtain~$ \edge{u,w}\in\mopt $.
\item[\hspace*{-1cm}\cref{lemma:createpathdeg1:wi}]
Observe that for each~$ w_i $ we have~$ d'(w_i)\geq1 $ after creation of~$ X $, since before creation of~$ X $ we have~$ d(w_i)\geq d(u)=2 $ and~$ w_i $ is not connected with~$ u $.
Since we assumed that no \degreeone endpoint exists after creation of~$ X $ and~$ w_i $ is not isolated then, we obtain~$ d'(w_i)\geq2 $.
But an edge incident with~$ w_i $ is removed in the creation step of~$ X $.
Hence we get~$ d(w_i)\geq3 $.
\item[\hspace*{-1cm}\cref{lemma:createpathdeg1:onedeb}]
This follows from \cref{lemma:createpathdeg1:wi}
Why?
If all~$ w_i $ have degree at least three at creation of~$ X $, then no~$ w_i $ receives a transfer from~$ v $, since the degree of each~$ w_i $ drops by exactly one (recall that~$ u $ is adjacent only with~$ v $ and~$ w $).
Consequently, endpoint~$ w $ receives the only transfer from~$ v $.
\qed
\end{itemize}

\end{proof}

\paragraph{The Definition of Donations.}
Donations will only be needed to move coins to paths for which a \degreeone endpoint exists after creation.
Hence in the rest of this section we focus on this case.
(In \cref{lemma:nonbip4exceptional} we analyze the case that no \degreeone endpoint exists after creation.)

Assume that a \degreeone endpoint exists after creation of path~$ X $.
Since~$ u' $ is selected with current degree~$ d'(u')=1 $ and hence pays zero debits, node~$ u' $ is a preferred candidate to compensate the debits payed by~$ u $ and~$ v $.

To prepare the definition of donations, in \cref{lemma:u'props} we show that an edge~$ e $ connecting~$ u' $ with~$ u $ or~$ v $ is removed in the creation step of~$ X $.
In particular, if a donation needs to be given to~$ X $ from another component~$ Y\neq X $, then~$ e $ is an edge in~$ F $.

\begin{lemma}
\label{lemma:u'props}
Assume that a \degreeone endpoint exists after creation of path~$ X $.
The node~$ u' $ selected next by \onetwomingreedy is
\begin{enumerate}[topsep=0mm,noitemsep,label=\alph*)]
\item\label{lemma:u'props:mopt}
an~\mopt-neighbor of~$ u $ or~$ v $ in~$ X $ or
\item\label{lemma:u'props:f}
an~$ F $-neighbor of~$ u $ or~$ v $ in a component~$ Y\neq X $.
\end{enumerate}
\end{lemma}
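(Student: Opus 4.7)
The plan is to show that $u'$ has current degree exactly one, to observe that this forces $u'$ to be $\g$-adjacent to $u$ or $v$, and finally to classify the resulting adjacency by edge type.

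I would first argue that the step after the creation of $X$ runs in the else-branch of \onetwomingreedy with minimum non-zero degree one. Since the \degreeone endpoint $w$ satisfies $d'(w)=1$, the current minimum degree is at most one; because path endpoints are never matched, the node selected next is a \degreeone node $u'\neq w$ with $d'(u')=1$. At the creation step itself, however, $d(u')\geq 2$: either the if-branch was taken (so every node had degree at least three), or the else-branch was taken with $u$ of minimum degree $d(u)\geq 2$, giving $d(u')\geq d(u)\geq 2$. Thus $u'$'s degree strictly dropped during the creation step, and since only edges incident to $u$ and $v$ are removed in that step, $u'$ is $\g$-adjacent to $u$ or $v$ via some edge $e$ removed then.

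Next I would classify $e$. Since $\edge{u,v}\in\m$ and $u'\notin\{u,v\}$, we have $e\notin\m$, so $e\in\mopt\cup F$. If any such $e$ lies in \mopt, then $u'$ is the \mopt-partner of $u$ or $v$ and therefore a node of $X$, yielding conclusion~\cref{lemma:u'props:mopt}. Otherwise every $\g$-edge between $u'$ and $\{u,v\}$ lies in $F$, so $u'$ is an $F$-neighbor of $u$ or $v$, and it remains to establish $u'\notin X$ to obtain conclusion~\cref{lemma:u'props:f}.

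The main obstacle is ruling out the residual configuration that $u'\in X$ while being adjacent to $\{u,v\}$ only through $F$-edges. I would argue by contradiction, assuming $u'\in X$. Since $u'$ is matched in the step immediately after creation, $u'$ is an \m-covered internal node of $X$ and therefore has a unique \mopt-partner $u'_*\in X$. In the present case $u'_*\notin\{u,v\}$, for otherwise $\edge{u,u'}$ or $\edge{v,u'}$ would be an \mopt-edge. Because the creation step is the first step picking an \m-edge of $X$, node $u'_*$ is still unmatched at that point: either $u'_*$ is a path endpoint (and hence never matched), or $u'_*$'s \m-edge lies in $X$ and will be picked only later. Consequently, the \mopt-edge $\edge{u',u'_*}$ survives the creation step, providing $u'$ with at least two remaining neighbors ($u'_*$ and its imminent \m-partner $v'$), contradicting $d'(u')=1$. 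Therefore $u'\notin X$, placing $u'$ in some component $Y\neq X$ and establishing conclusion~\cref{lemma:u'props:f}.
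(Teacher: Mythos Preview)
Your argument is correct and follows essentially the same route as the paper: establish $d'(u')=1$, deduce that $u'$ lost an edge to $u$ or $v$, and then split on whether that edge lies in~$\mopt$ or in~$F$. Your contradiction for part~\cref{lemma:u'props:f} is a more explicit version of the paper's one-line observation that every \m-covered node of $X$ other than the \mopt-neighbors of $u$ and $v$ retains degree at least two after creation; you simply name the two surviving edges $\{u',u'_*\}$ and $\{u',v'\}$ directly (and implicitly use that in an augmenting path no edge lies in $\m\cap\mopt$, so $u'_*\neq v'$).
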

\begin{proof}
After creation of~$ X $ let~$ w $ be a \degreeone endpoint.
Since at creation of~$ X $ all degrees are at least two, thereafter both \degreeone nodes~$ w $ and~$ u' $ lost an edge connecting to~$ u $ or~$ v $.
Hence node~$ u' $ was connected to~$ u $ or~$ v $ by an~\mopt-edge or by an~$ F $-edge.

If~$ u' $ was connected by an~\mopt-edge, then node~$ u' $ belongs to~$ X $.
This proves \cref{lemma:u'props:mopt}.

To prove \cref{lemma:u'props:f}, assume that~$ u' $ is not an~\mopt-neighbor of~$ u $ or~$ v $.
Then~$ u' $ was connected only by an~$ F $-edge with~$ u $ or~$ v $.
It remains to show that~$ u' $ is not a node of~$ X $.
To see this, observe that after creation of~$ X $ all~\m-covered nodes of~$ X $ (other than the~\mopt-neighbors of~$ u $ and~$ v $) have degree at least two and recall that~$ u' $ is selected when it has degree~$ d'(u')=1 $.
\qed
\end{proof}

\begin{definition}
\label{def:donation}
Let nodes~$ u $ and~$ v $ be matched in the creation step of path~$ X $, where \onetwomingreedy selects node~$ u $ with current degree~$ d(u) $.
Assume that~$ u $ and~$ v $ pay~$ k>0 $ debits.
Let nodes~$ u' $ and~$ v' $ of component~$ Y\neq X $ be matched in the donation step of~$ X $, where \onetwomingreedy selects~$ u' $ with current degree~$ d'(u'){=}1 $.
Denote the~$ F $-edge connecting~$ u' $ with~$ u $ or~$ v $ as~\edge{u',x} for~$ x\in\{u,v\} $.
\begin{itemize}[noitemsep]
\item If~$ d(u)=2 $ holds, then edge~$ \edge{u',x}=\edge{u',v} $ is called a \emph{static donation} and moves~$ \maxdeg{-}3 $ coins to~$ v $.
\item If~$ d(u)\geq3 $ holds, then edge~\edge{u',x} is called a \emph{dynamic donation} and moves~$ k $ coins to~$ x $.
\end{itemize}
Donation~$ \edge{u',x} $ is denoted as a bold directed edge~$ \don{u',x} $ from the paying node~$ u' $ of~$ Y $ to the receiving node~$ x $ of~$ X $.
\end{definition}

\paragraph{Soundness of the Definition.}
Recall that our proof of \cref{eqn:balanceboundaugreg} crucially relies on bound \cref{def:edgedeb}, i.e.\ on the (yet unproven) fact that each~\m-edge of path~$ X $ pays at most~$ 2(\maxdeg{-}2) $ coins over outgoing debits and donations:
in total, all~\m-edges of~$ X $ pay at most~$ D_X=\w{X}\cdot2(\maxdeg-2) $ coins.
Since the endpoints of~$ X $ receive at least~$ c_X\geq2 $ credits by \cref{eqn:mintwocreds}, the balance of~$ X $ is at least~$ c_X-d_X\geq2-D_X $.
Our approach to use a donation to compensate \emph{all} coins payed by nodes~$ u $ and~$ v $ implies that the balance of~$ X $ is increased by at least~$ 2(\maxdeg-2) $:
the balance is at least~$ c_X-d_X\geq2-D_X +2(\maxdeg-2)$, as required in \cref{eqn:balanceboundaugreg}.

But to develop our \cref{def:donation} of donations we only considered coins payed by~$ u $ and~$ v $ over debits:
if~$ u $ or~$ v $ pays a donation whose coins are \emph{not} compensated, then our approach to verify \cref{eqn:balanceboundaugreg} fails.
However, \cref{prop:donationsaresafe}\cref{prop:donationsaresafe:disjoint} implies that neither~$ u $ nor~$ v $ pays a donation, i.e.\ our \cref{def:donation} of donations is consistent with our approach to analyze the creation step of~$ X $.
Moreover, \cref{prop:donationsaresafe}\cref{prop:donationsaresafe:exclusive} shows that coins donated to~$ X $ need not be shared with another path.

\begin{lemma}
\label{prop:donationsaresafe}
Let set~$ C=\{s_1,s_2,\dots\} $ with~$ s_1<s_2<\dots $ be the set of path creation steps, and set~$ D=\{s_1{+}1,\,s_2{+}1,\dots\} $ the set of donation steps.

\noindent
\begin{enumerate}[label=\alph*)]
\item\label{prop:donationsaresafe:disjoint}
We have~$ C\cap D=\emptyset $.
\hspace*{-1.5mm}\item\label{prop:donationsaresafe:exclusive}
The node selected in donation step~$ s_i{+}1 $ donates coins to exactly one node, which was matched in the creation step~$ s_i $.
\end{enumerate}
\end{lemma}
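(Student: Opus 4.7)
The plan is to handle (a) and (b) separately, using in both parts that at a donation step~$s_i+1$ the selected node~$u'$ has current degree exactly~$1$ (a~\degreeone endpoint exists after step~$s_i$, forcing the new minimum degree to be~$1$) and that, by \cref{lemma:u'props} in the donation setting,~$u'$ lies together with its partner~$v'$ in a component~$Y\neq X$.

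For part (a), I would work with the fixed matching~\mopt from the initial structural lemma, so that every component of~\mg is either a \oneonepath or an~\mmoptpath, and then split on~$u'$'s~\mopt-partner. If~$u'$'s~\mopt-partner is~$v'$, then~$\edge{u',v'}\in\m\cap\mopt$, so~$Y$ is a \oneonepath and step~$s_i+1$ creates a \oneonepath rather than a path, hence~$s_i+1\notin C$. Otherwise~$u'$ has an~\mopt-neighbor~$u''\neq v'$ which lies in the same component~$Y$. Because~$u'$ has current degree only~$1$ at step~$s_i+1$, the edge~$\edge{u',u''}$ must already have been removed; since~$u'$ was not matched before step~$s_i+1$, this removal must have been triggered by matching~$u''$ in some earlier step. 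The resulting~\m-edge at~$u''$ belongs to~$Y$ and was picked strictly before step~$s_i+1$, so the creation step of~$Y$ occurs earlier than~$s_i+1$, and again~$s_i+1\notin C$.

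For part (b), the statement is essentially a reading of \cref{def:donation}: the donation out of~$u'$ is made along a single~$F$-edge~$\edge{u',x}$ with~$x\in\{u,v\}$ (forced to be~$x=v$ in the static case~$d(u)=2$ since~$u$ has no~$F$-edges at all, and chosen from~$\{u,v\}$ in the dynamic case~$d(u)\geq3$). Hence the donation has a unique recipient~$x$, which lies in~$\{u,v\}$ and was matched in the creation step~$s_i$. Moreover,~$u'$ is selected by~\onetwomingreedy at most once, namely in step~$s_i+1$, so~$u'$ plays the donor role in exactly this one donation.

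The main technical point is the case analysis in~(a); specifically, the observation that the~\mopt-edge~$\edge{u',u''}$ (when~$u''\neq v'$) must have been removed strictly before step~$s_i+1$, which in turn forces~$u''$ to have been matched earlier and produces an~\m-edge of~$Y$ picked strictly before~$s_i+1$. Once this is in place, both parts follow cleanly from \cref{def:donation,lemma:u'props}.
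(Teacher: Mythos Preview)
Your argument is correct, but for part~(a) you work harder than necessary. The paper's proof of~(a) is essentially a one-liner: in any donation step the selected node~$u'$ has current degree~$d'(u')=1$, whereas by \cref{obs:potentialTransfer} the node selected in any path creation step has current degree at least~$2$ (it is incident with both an~\m-edge and an~\mopt-edge). Hence no step can be both, and~$C\cap D=\emptyset$ follows immediately. You instead perform a structural case analysis on the component~$Y$ of~$u'$, distinguishing whether~$Y$ is a \oneonepath or a path and, in the latter case, chasing the~\mopt-neighbor~$u''$ of~$u'$ to show that an~\m-edge of~$Y$ was picked strictly earlier. This is valid and does establish~$s_i+1\notin C$ (since the edge~\edge{u',v'} picked in step~$s_i+1$ lies in~$Y$, so~$s_i+1$ could only be the creation step of~$Y$), but it replaces a direct degree comparison with a noticeably longer combinatorial detour.

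For part~(b), your approach---reading the conclusion directly off \cref{def:donation}---is fine and arguably more transparent than the paper's. The paper instead shows~$s_{i-1}+1<s_i$ (again via the degree-$1$ versus degree-$\geq 2$ observation), so that~$s_i$ is the unique creation step immediately preceding the donation step~$s_i+1$; both arguments amount to the same content.
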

\begin{proof}
\cref{prop:donationsaresafe:disjoint}
Consider an arbitrary donation step~$ s_i{+}1 $.
Assume that \onetwomingreedy selects node~$ u_i' $, and recall that~$ u_i' $ is selected with current degree~$ d'(u_i')=1 $.
However, \cref{obs:potentialTransfer} shows that in each path creation step~$ s_j $ a node~$ u_j $ of current degree at least~$ d(u_j)\geq2 $ is selected.
We get~$ s_i{+}1\notin C $ and the statement follows.

\cref{prop:donationsaresafe:exclusive}
It suffices to show that~$ s_{i-1}{+}1<s_i $ holds, since then creation step~$ s_i $ is the only creation step between donation step~$ s_i{+}1 $ and the previous donation step~$ s_{i-1}{+}1 $.
The previous donation step~$ s_{i-1}{+}1 $ selects a node of degree one, whereas step~$ s_i $ selects a node of degree at least two in order to create a path.
Therefore~$ s_{i-1}{+}1\neq s_i $ holds.
Using~$ s_{i-1}<s_i $ we obtain~$ s_{i-1}{+}1<s_i $.
\qed
\end{proof}

\subsection{Combined Coins of Transfers and Donations}
\label{sect:maxnumcoins}
In this section we prove the following result, which will be applied in \cref{sect:mostcomps} to verify bounds \cref{eqn:balanceboundaugreg}, \cref{eqn:balanceboundsinglereg}, and \cref{def:edgedeb}.

\begin{lemma}
\label{lemma:kplusl}
Let~$ X $ be a path for which a \degreeone endpoint exists after creation, and assume that nodes~$ u $ and~$ v $ are matched in the creation step of~$ X $ and nodes~$ u' $ and~$ v' $ are matched in the next step.

Denote by~$ d_{u,v} $ the number of debits payed by~$ u $ and~$ v $, and by~$ d_{v'} $ the numbers of debits payed by~$ v' $ (recall that~$ u' $ pays no debits since~$ u' $ is selected with current degree one).
There exist (non-negative) integers~$ k $ and~$ l $ with
\begin{align*}
d_{u,v}\leq k ~~~\text{ and }~~~ d_{v'}\leq l 
\end{align*}
such that
\begin{align*}
k+l~\leq~2(\maxdeg-2)
\end{align*}
holds.
In particular, if~$ u' $ belongs to another component~$ Y\neq X $ and pays a donation to~$ X $, then we may choose~$ k $ as the number of donated coins.
\end{lemma}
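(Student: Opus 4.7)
The plan is to prove existence of $k$ and $l$ by case analysis on $d(u)$ at the creation step of $X$, and on whether $u' \in X$ (case (a) of \cref{lemma:u'props}) or $u' \in Y \neq X$ (case (b)). I always set $l := d_{v'}$, and pick $k$ either as $d_{u,v}$ (non-donation) or as the donation amount prescribed by \cref{def:donation}: $\maxdeg - 3$ for the static donation and $d_{u,v}$ for the dynamic donation. It then suffices to verify $k + d_{v'} \leq 2(\maxdeg - 2)$, which I attack by counting the $F$-edges incident to $\{u, v, v'\}$: each of these three nodes is both $\m$- and $\mopt$-covered inside its component (with the single exception of $v'$ when its component is a singleton), so each has at most $\maxdeg - 2$ incident $F$-edges.

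The case $d(u) = 2$ is immediate. Then $u$'s two current edges are its $\m$-edge $\{u,v\}$ and its $\mopt$-edge, so $u$ has no $F$-edges and pays no debits, whence $d_{u,v} = d_v$. If $u' \in Y \neq X$ (static donation) the $F$-edge $\{v, u'\}$ goes to the non-endpoint $u'$ and hence is not a debit, so $d_v \leq \maxdeg - 3$, matching the donation amount $k = \maxdeg - 3$; together with $d_{v'} \leq \maxdeg - 2$ this yields $k + l \leq 2(\maxdeg - 2) - 1$. If $u' \in X$ I use the direct estimates $d_{u,v}, d_{v'} \leq \maxdeg - 2$.

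The case $d(u) \geq 3$ is the main obstacle. Here \onetwomingreedy is in its ``if'' branch with every current degree at least three. From $d'(u') = 1$ one deduces $d(u') = 3$ with neighbours exactly $\{u, v, v'\}$; if moreover $u' \in Y \neq X$ then $Y$ must be a singleton, since $u'$'s $\mopt$-partner lies among its three neighbours and $u, v \in X$. The key structural observation is: for a path endpoint $a$ to drop to degree $\leq 1$ after the creation step, $a$ must lose at least $d(a) - 1 \geq 2$ edges to $\{u, v\}$, forcing $a$ to be adjacent to both $u$ and $v$ with $d(a) = 3$. Hence every debit from $\{u, v\}$ comes as a ``sister pair'' on a degree-$3$ shared neighbour of $u$ and $v$, and analogously every $v'$-debit-target must be adjacent to $v'$ and to at least one of $u, v$. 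Carrying out the $F$-slot accounting at $u$, $v$, and $v'$---the non-debit $F$-edges $\{u, u'\}$ and $\{v, u'\}$ (one or both present) and every $F$-edge from $u$ or $v$ to a $v'$-debit-target consume non-debit slots---and summing the three slot-budget inequalities, I obtain $d_{u,v} + d_{v'} \leq d(u) + d(v) - 6 \leq 2(\maxdeg - 2) - 2$, comfortably within the target. A minor subtlety requiring care in the accounting is that the endpoints of $X$ have $\mopt$-edges (not $F$-edges) to $u$ or $v$, shifting the slot budget by a constant without breaking the bound.
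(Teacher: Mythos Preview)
Your handling of the case $d(u)=2$ is essentially the paper's argument, modulo one slip: when $u'\in Y\neq X$ you write $d_{v'}\le\maxdeg-2$, but $Y$ may be a singleton, in which case $v'$ has only one non-$F$ edge and the correct bound is $d_{v'}\le\maxdeg-1$. This is harmless, since $(\maxdeg-3)+(\maxdeg-1)=2(\maxdeg-2)$ still meets the target. (Relatedly, your side remark that $Y$ must be a singleton whenever $d(u)\ge3$ and $u'\notin X$ is false: $u'$'s \mopt-partner need not be among its three \emph{current} neighbours, since it may have been matched in an earlier step.)

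The genuine gap is in the case $d(u)\ge3$. Your slot-accounting asserts that every $F$-edge from $\{u,v\}$ to a $v'$-debit-target is a \emph{non-debit} slot. This is false. Take a path endpoint $a$ with $d(a)=3$ that is the \mopt-partner of $v$ and simultaneously an $F$-neighbour of $u$ (so in the paper's notation $a\in W_1^1$). Then $\{u,a\}$ \emph{is} a debit from $u$; yet $a$ has received only one credit, so a subsequent transfer $(v',a)$ is \emph{not} cancelled and $a$ can be a $v'$-debit-target. Your inequality $d_{u,v}+d_{v'}\le d(u)+d(v)-6$ then fails. Concretely, at $\maxdeg=4$ with $d(u)=d(v)=4$: let $a=w_v\in W_1^1$ as above, let $b$ be an $F$-neighbour of $v$ with $d(b)=3$ and $b$ not adjacent to $u$ (so $b\in W_2$), and let $v'$ be $F$-adjacent to both $a$ and $b$. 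Then $d_{u,v}=1$ and $d_{v'}=2$, giving $d_{u,v}+d_{v'}=3>2=d(u)+d(v)-6$.

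The paper closes exactly this hole by (i) counting \emph{all} edges in $\mathcal{E}(W)$, not just $F$-edges---so the \mopt-edge $\{v,a\}$ contributes a second slot for the $W_1^1$-endpoint $a$---and (ii) invoking the cancellation rule (\cref{def:cancel}) to show that $v'$ never pays a debit to a $W_1^2$-endpoint, whence $d_{v'}\le|W_1^1|+|W_2|$. Your write-up never mentions cancellation, and without it the $W_1^1$/$W_1^2$ distinction cannot be exploited; the ``minor subtlety'' you flag about \mopt-edges is in fact the crux.
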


\paragraph{Organization of the Proof.}
To verify \cref{lemma:kplusl} we distinguish the following cases.
Assume that path~$ X $ is created when~$ u $ is selected with current degree~{\boldmath$ d(u){=}2 $}.
Then we bound~$ k+l $ whether~$ u' $ is a node of~$ X $ or of another component~$ Y\neq X $, where by \cref{def:donation} the analysis of the latter case involves the use of a static donation.
Assuming that path~$ X $ is created when~$ u $ is selected with current degree~{\boldmath$ d(u)\geq3 $}, then again we bound~$ k+l $ whether~$ u' $ belongs to~$ X $ or not, where by \cref{def:donation} in the latter case we have to take into account a dynamic donation.
Cases~$ d(u)=2 $ and~$ d(u)\geq3 $ are analyzed in \cref{lemma:kplusl:2} resp. \cref{lemma:kplusltools}.

In the proof of \cref{lemma:kplusl:2} it suffices to analyze (the number of debits leaving) each of nodes~$ u,v,u' $, and~$ v' $ individually.
However, in order to show \cref{lemma:kplusltools} we have to take into account that the bound~$ l $ depends``dynamically'' on the number~$ k $ of debits payed by~$ u $ and~$ v $.
Consequently, the proof is more involved than that of \cref{lemma:kplusl:2}.
In particular, we study the endpoints neighboring nodes~$ u $ and~$ v $ as well as nodes~$ u' $ and~$ v' $ (in~\g).
Therefore we introduce some helpful notation in \cref{def:endpointsets}.

\pagebreak
\begin{lemma}
\label{lemma:kplusl:2}
\cref{lemma:kplusl} holds in case \onetwomingreedy selects~$ u $ with current degree~$ d(u)=2 $.
In particular, we have the following:
\begin{enumerate}[topsep=0mm,noitemsep,label=\alph*)]
\item\label{lemma:kplusl:2:inx}
If~$ u' $ is an~\mopt-neighbor of~$ u $ or~$ v $, i.e.\ node~$ u' $ belongs to~$ X $, then we may choose~$ k $ and~$ l $ such that we have~$ d_{u,v}=k\leq\maxdeg-2 $ as well as~$ d_{v'}=l\leq\maxdeg-2 $.
\item\label{lemma:kplusl:2:outx}
If~$ u' $ belongs to component~$ Y\neq X $, then we may choose~$ k $ and~$ l $ such that we have~$ d_{u,v}\leq k=\maxdeg-3 $ as well as~$ d_{v'}=l\leq\maxdeg-1 $.
\end{enumerate}
In both cases we have~$ k+l\leq2(\maxdeg-2) $.
\end{lemma}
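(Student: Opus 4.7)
The plan is to use the assumption $d(u){=}2$ to pin down~$u$'s incident edges at step~$s$, forcing $d_u{=}0$, and then to reduce each sub-case to straightforward incidence counts in~$G$.

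The key preliminary observation is that at the creation step~$s$ of path~$X$, no $M$-edge of~$X$ has been picked yet. Consequently, the $\mopt$-partner~$w_u$ of~$u$ in~$X$ is unmatched at step~$s$: if~$w_u$ were interior, it could only be $M$-covered via its $M$-edge in~$X$, which is not yet in~$\m$; and if~$w_u$ is an endpoint of~$X$, it is never matched. Hence~$\edge{u,w_u}\in\mopt$ is still present at step~$s$. Combined with~$\edge{u,v}\in\m$ and~$d(u){=}2$, these are the only two remaining edges of~$u$, so~$u$ has no incident $F$-edge at step~$s$ and pays~$d_u{=}0$ debits. Since~$u'$ is selected with current degree~$1$, also~$d_{u'}{=}0$. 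Therefore~$d_{u,v}{=}d_v$ and the debits counted in~$d_{v'}$ come only from~$v'$.

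Next I would record a generic bound: any $M$-covered node of a path has at most~$\maxdeg-2$ $F$-edges in~$G$, since its $M$- and $\mopt$-edges consume two of its at most~$\maxdeg$ incident edges. This immediately yields~$d_v\le\maxdeg-2$. In Case~(a), where~$u'$ is an $\mopt$-neighbor of~$u$ or~$v$, partner~$v'$ is an interior $M$-covered node of~$X$, so the same bound gives~$d_{v'}\le\maxdeg-2$; setting~$k{=}d_{u,v}$ and~$l{=}d_{v'}$ yields~$k+l\le 2(\maxdeg-2)$.

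For Case~(b), \cref{lemma:u'props} forces~$u'$ to be an $F$-neighbor of~$u$ or~$v$; since~$u$ has no $F$-edge, the $F$-edge must be~$\edge{u',v}$, matching the static donation in \cref{def:donation} with~$k{=}\maxdeg-3$. Because~$u'$ is $M$-covered (not a path endpoint), edge~$\edge{u',v}$ cannot be a transfer and is therefore excluded from the at most~$\maxdeg-2$ $F$-edges at~$v$ that could pay debits, yielding~$d_v\le\maxdeg-3{=}k$. For~$v'$ in~$Y$: if~$Y$ is a path the generic bound gives~$d_{v'}\le\maxdeg-2$; if~$Y$ is a singleton then~$\edge{u',v'}\in\m\cap\mopt$ and~$v'$ has at most~$\maxdeg-1$ $F$-edges in~$G$, so~$d_{v'}\le\maxdeg-1$. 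Setting~$l{=}\maxdeg-1$ covers both sub-cases, and~$k+l=2(\maxdeg-2)$. The main obstacle is the preliminary observation that~$d_u=0$: it rests on recognizing that the creation step of~$X$ is the first step at which an $M$-edge of~$X$ is picked, so~$u$'s $\mopt$-partner is still unmatched and its $\mopt$-edge is necessarily the second remaining edge of~$u$. Once that is established, the rest is an edge-type tally at~$v$ and~$v'$ together with the observation that in Case~(b) the donation edge~$\edge{u',v}$ is not eligible as a debit.
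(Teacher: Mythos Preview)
Your proof is correct and follows essentially the same approach as the paper's: both establish $d_u=0$ from $d(u)=2$, then bound $d_v$ and $d_{v'}$ by counting available $F$-edges. Your justification that the $\mopt$-edge of $u$ is still present at the creation step is more explicit than the paper's (which simply asserts that $u$ has no incident $F$-edge), and in Case~(b) you add an unnecessary but harmless case split on whether $Y$ is a path or a singleton, whereas the paper uniformly observes that the $\m$-edge at $v'$ is never a debit, giving $d_{v'}\le\maxdeg-1$ directly. One small mismatch: in Case~(b) you set $l=\maxdeg-1$, while the lemma as stated sets $l=d_{v'}$; both choices satisfy the requirements of \cref{lemma:kplusl}, but to match the statement verbatim you should take $l=d_{v'}$.
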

\begin{proof}
To prepare the proof, consider the creation step of~$ X $.
Observe that~$ u $ is not incident with an~$ F $-edge, since we have~$ d(u)=2 $.
Hence~$ u $ pays zero debits.
Recall that also~$ u' $ pays no debits since \onetwomingreedy selects~$ u' $ with current degree~$ d'(u')=1 $.

We prove \cref{lemma:kplusl:2:inx}.
At creation of~$ X $ node~$ v $ is incident with at most~$ \maxdeg-2 $ edges of~$ F $, since~$ v $ is also incident with its~\m-edge and~\mopt-edge.
Hence~$ v $ pays at most~$ \maxdeg-2 $ debits.
We obtain~$ d_{u,v}\leq\maxdeg-2 $, since~$ u $ pays no debits.
Analogously, after creation of~$ X $ node~$ v' $ is incident with at most~$ \maxdeg-2 $ edges of~$ F $, since~$ v' $ is also a node of path~$ X $.
Here we get~$ d_{v'}\leq\maxdeg-2 $.
Now choose~$ k=d_{u,v} $ and~$ l=d_{v'} $.

We prove \cref{lemma:kplusl:2:outx}.
Since~$ u' $ does not belong to~$ X $ and is selected with degree~$ d'(u')=1 $, an~$ F $-edge~$ e $ incident with~$ u' $ is removed in the creation step of~$ X $.
Edge~$ e $ connects~$ u' $ with~$ v $, since~$ u $ is not incident with any~$ F $-edges when being selected by \onetwomingreedy.
In particular, edge~$ e $ is a static donation from~$ u' $ to~$ v $ by \cref{def:donation}.
But both~$ u' $ and~$ v $ are~\m-covered, thus~$ e $ is not a debit from~$ v $ and at most~$ \maxdeg-3 $ debits leave~$ v $.
We get~$ d_{u,v}\leq\maxdeg-3 $.
Now observe that the~\m-edge incident with~$ v' $ is not a debit from~$ v' $.
Hence we have~$ d_{v'}\leq\maxdeg-1 $.
Now choose~$ k $ as the number of donated coins~$ k=\maxdeg-3 $ as well as~$ l=d_{v'} $.
\qed
\end{proof}

\begin{definition}
\label{def:endpointsets}
Consider the creation step of path~$ X $, when \onetwomingreedy selects node~$ u $ with current degree~$ d(u)\geq3 $ and matches~$ u $ with~$ v $.
We denote
\begin{itemize}[topsep=0mm,noitemsep]
\item 
by~$ \mathcal{W} $ the set of path endpoints~$ w $ with current degree at least~$ d(w)\geq3 $,
\item
by~$ W\subseteq\mathcal{W} $ the set of~$ \mathcal{W} $-endpoints neighboring~$ u $ or~$ v $,
\item 
by~$ W_\delta=\{w\in W:d'(w)=\delta\} $ the set of~$ W $-endpoints which have degree~$ \delta $ after creation of~$ X $ for~$ \delta\in\{1,2\} $,
\item 
by~$ W_{\geq3}=W\setminus(W_1\cup W_2) $ the set of~$ W $-endpoints which have degree at least~$ 3 $ after creation of~$ X $,
\item 
by~$ W_1^f=\{w\in W_1:|\{\edge{w,u},\edge{w,v}\}\cap F|=f\,\} $ the set of~$ W_1 $-endpoints connected to~$ u $ and~$ v $ by~$ f $~edges of~$ F $ for~$ f\in\{1,2\} $, and
\item
by~$ \mathcal{E}(W)=\{\,\{x,y\}\in E:x\in\{u,v\},y\in W\,\} $ the set of edges connecting nodes~$ u $ and~$ v $ with~$ W $-endpoints.
\end{itemize}
Note that~$ W_1^1 $ and~$ W_1^2 $ form a partition of~$ W_1 $ and that~$ W_1 $,~$ W_2 $, and~$ W_{\geq3} $ form a partition of~$ W $, i.e.\ sets~$ \mathcal{W}\setminus W, W_{\geq3},W_2,W_1^2 $, and~$ W_1^1 $ are pairwise disjoint.
\end{definition}

\begin{lemma}
\label{lemma:kplusltools}

\cref{lemma:kplusl} holds in case \onetwomingreedy selects~$ u $ with current degree~$ d(u)\geq3 $.
In particular, we may choose~$ k $ and~$ l $ such that we have
\begin{align}
d_{u,v}~=~k&~=~2|W_1^2|+|W_1^1|~~~~\text{ and}\tag{a}\label{lemma:kplusltools:k}\\
d_{v'}~=~l\;&~\leq~|W_1^1|+|W_2|\tag{b}\label{lemma:kplusltools:l}
\end{align}
as well as
\begin{align}
k+l&~\leq~2|W_1^2|+2|W_1^1|+|W_2|\notag\\
&~\leq~|\mathcal{E}(W)|\tag{c}\label{lemma:kplusltools:e}\\
&~\leq~2(\maxdeg-2)\tag{d}\label{lemma:kplusltools:bound}\,.
\end{align}
\end{lemma}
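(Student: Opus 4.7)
I would proceed through the four assertions (a)--(d) in order.

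For (a), since \onetwomingreedy is in its ``arbitrary edge'' branch whenever $d(u)\ge 3$, every node has current degree at least three at creation of $X$. An $F$-edge $\{x,w\}$ with $x\in\{u,v\}$ is a debit only if $d'(w)\le 1$, and because $w$ loses at most one edge per edge it has to $\{u,v\}$ we need $f=2$ edges removed with $d(w)=3$, hence $w\in W_1$. Each $W_1^2$-endpoint has both its edges to $\{u,v\}$ in $F$ and thus contributes two debits, while each $W_1^1$-endpoint's other edge to $\{u,v\}$ is the unique \mopt-edge of $w$ and it contributes one debit. Summing yields $d_{u,v}=2|W_1^2|+|W_1^1|$; I set $k$ equal to this value, which matches the number of coins of a dynamic donation by \cref{def:donation}.

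For (b), the key structural observation is that $u'$ is selected with $d'(u')=1$, so in the donation step its only neighbor is $v'$. Thus no endpoint is adjacent to $u'$, and in the donation step every endpoint loses at most one edge (a possible $F$-edge to $v'$). A debit from $v'$ therefore requires $d'(w)\le 2$, which rules out all endpoints outside $W_1\cup W_2$---endpoints outside $W$ retained their original degree $\ge 3$ through the creation step. For $w\in W_1^2$ the single edge remaining at $w$ after creation is the \mopt-edge of $w$, not in $F$, so no debit. For $w\in W_1^1$ the single remaining edge is in $F$ and yields at most one debit. For $w\in W_2$ at most one of $w$'s remaining edges goes to $v'$. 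Summing yields $l=d_{v'}\le|W_1^1|+|W_2|$.

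Part (c) is then algebraic: the first inequality follows by adding (a) and (b). For the second, each $W_1$-endpoint has exactly two edges to $\{u,v\}$ (by the analysis in (a)) and each $W_2$-endpoint has at least one (since $W\subseteq\mathcal{W}$), giving $|\mathcal{E}(W)|\ge 2|W_1^2|+2|W_1^1|+|W_2|$.

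The main obstacle is (d), where I would case-split via \cref{lemma:u'props}. In case (a), $u'$ is an \mopt-neighbor of $u$ or $v$ in $X$; since $u'$ is matched, it must be \m-covered, which forces $m_X\ge 2$. Then $u_*,v_*\notin W$, the two \mopt-edges $\{u,u_*\},\{v,v_*\}$ lie outside $\mathcal{E}(W)$, and the $F$-count alone gives $|\mathcal{E}(W)|\le (d(u)-2)+(d(v)-2)\le 2(\maxdeg-2)$. In case (b), $u'\in Y\ne X$; the conditions $d(u')\ge 3$ and $d'(u')=1$ force $u'$ to lose $\ge 2$ edges in the creation step, so $d(u')=3$ with both $\{u',u\},\{u',v\}\in F$. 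These two $F$-edges exit $\{u,v\}$ into the non-endpoint $u'\notin W$, so at most $d(u)+d(v)-6$ of the $d(u)+d(v)-4$ total $F$-edges from $\{u,v\}$ can reach $W$; adding the (at most two) \mopt-edges $\{u,u_*\},\{v,v_*\}$ that may lie in $\mathcal{E}(W)$ still yields $|\mathcal{E}(W)|\le d(u)+d(v)-4\le 2(\maxdeg-2)$. The delicate subcase is case (b) with $m_X=1$, where $u_*=w_u$ and $v_*=w_v$ genuinely sit in $W$; the bookkeeping only closes because the degree-1 assumption on $u'$ forces the two $F$-edges $\{u',u\},\{u',v\}$ out of $\{u,v\}$ to a non-$W$ node, offsetting those two \mopt-edges.
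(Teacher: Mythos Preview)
Your argument for (b) contains a genuine gap in the $W_1^2$ case. You assert that for $w\in W_1^2$ the single edge remaining after creation is $w$'s $\mopt$-edge. That need not be so: the $\mopt$-neighbor $w^*$ of $w$ may lie in a different path $Z$ and may have been matched in an earlier step, so that $\{w,w^*\}$ is already gone at creation of $X$. In that scenario $w$ has degree~$3$ at creation with all three edges in $F$ (two to $\{u,v\}$ and one to some node $z$), and after creation the surviving edge $\{w,z\}$ is an $F$-edge. If $z=v'$, this edge \emph{is} a transfer under \cref{def:transfer}. The paper closes exactly this hole via the cancellation mechanism of \cref{def:cancel}: since $w$ already receives two credits from $u$ and $v$, the putative third credit from $v'$ is canceled. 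The paper even flags this point explicitly (``our \cref{def:cancel} of transfer cancellations is crucial''); your proof never invokes cancellation, so (b) does not go through as written.

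Your argument for (d) also has a flaw in case~(a) of your split. From $m_X\ge 2$ you conclude $u_*,v_*\notin W$, but that is false: if $\{u,v\}$ is an extremal $\m$-edge of $X$ (say $u=p_1$, $v=p_2$ in the path $p_0,\dots,p_{2m_X+1}$), then $u_*=p_0$ is an endpoint and lies in $W$, while $u'=v_*=p_3$. The bound $|\mathcal{E}(W)|\le 2(\maxdeg-2)$ still holds, but for the reason the paper gives and that you exploit in your case~(b): because $d(u')\ge 3$ drops to $d'(u')=1$, both edges $\{u,u'\}$ and $\{v,u'\}$ exist and neither lands in $W$ (as $u'$ is $\m$-covered), so together with $\{u,v\}$ they account for two non-$W$ edges at each of $u$ and $v$. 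This argument is uniform and does not require the case split via \cref{lemma:u'props}; your split introduces the error without buying anything.
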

\begin{proof}
We prove  \cref{lemma:kplusltools:k}.
Therefore we show that nodes~$ u $ and~$ v $ pay~$ d_{u,v}=2|W_1^2|+|W_1^1| $ debits.
To prove that we may choose~$ k=d_{u,v} $ we have to verify that if~$ X $ receives a donation then this donation moves~$ d_{u,v} $ coins.
By \cref{def:donation}, path~$ X $ receives a dynamic donation since~$ u $ was selected with current degree~$ d(u)\geq3 $.
Now observe that a dynamic donation moves exactly~$ d_{u,v} $ coins.

To count the number of debits payed by~$ u $ and~$ v $, we first study which endpoints do \emph{not} receive a transfer from~$ u $ or~$ v $.
First, recall that each endpoint~$ w\in\mathcal{W} $ has degree at least three before creation of~$ X $ and observe that~$ w $ has degree at least one thereafter, since degrees drop by at most two when~$ X $ is created.
Endpoints in~$ \mathcal{W}\setminus W $, i.e.\ endpoints which are not adjacent to~$ u $ or~$ v $, as well as endpoints in~$ W_{\geq3} $ still have degree at least three after creation of~$ X $ and hence do not receive credits from~$ u $ or~$ v $.
Also, endpoints in~$ W_2 $ receive no credits from~$ u $ or~$ v $, since their degree is at least two after creation.

Hence only endpoints in~$ W_1 $ might receive credits from~$ u $ or~$ v $, since their degree is exactly one after creation.
In the creation step of~$ X $, for each~$ w\in W_1 $, either two incident~$ F $-edges are removed, i.e.\ we have~$ w\in W_1^2 $, or one incident~$ F $-edge and an incident~\mopt-edge is removed, i.e.\ we have~$ w\in W_1^1 $:
since all these removed~$ F $-edges are transfers, we get that nodes~$ u $ and~$ v $ pay exactly~$ d_{u,v}=2|W_1^2|+|W_1^1| $ debits.

To show \cref{lemma:kplusltools:l}, we prove that node~$ v' $ pays at most~$ d_{v'}\leq|W_1^1|+|W_2| $ debits and set~$ l=d_{v'} $.
Which endpoints do \emph{not} receive transfers from~$ v' $?
(Recall that~$ u' $ pays no debits since~$ u' $ is selected when it has degree~$ d'(u')=1 $.)
Here, our \cref{def:cancel} of transfer cancellations is crucial:
observe that~$ v' $ does not pay transfers to endpoints in~$ W_1^2 $, since each of these endpoints already receives two credits from~$ u $ and~$ v $ and any additional credit from~$ v' $ is canceled by \cref{def:cancel}.
Furthermore, endpoints in~$ \mathcal{W}\setminus W $ have degree at least three after creation of~$ X $ and do not receive transfers from~$ v' $:
when~$ u' $ and~$ v' $ are matched, node~$ u' $ is selected with current degree~$ d'(u')=1 $, thus the degree of each endpoint in~$ \mathcal{W}\setminus W $ drops by at most one and to at least two.
Analogously, no endpoint in~$ W_{\geq3} $ receives a transfer from~$ v' $.

Hence~$ v' $ might pay debits only to endpoints in~$ W_1^1 $ and~$ W_2 $, and \cref{lemma:kplusltools:l} follows.
We note that, in particular, an endpoint~$ w\in W_1^1\cup W_2 $ receives \emph{less} than two credits from~$ u $ or~$ v $, i.e.\ further credits are not canceled, and we have~$ d'(w)\leq2 $, i.e.\ the degree of~$ w $ might drop to at most one when edge~\edge{v',w} is removed.

We prove \cref{lemma:kplusltools:e}, i.e.\ that~$ k+l $ is bounded from above by the number~$ |\mathcal{E}(W)| $ of edges connecting~$ u $ or~$ v $ with an endpoint before creation of~$ X $.
First, we argue that 
$$ |W_2|\leq |\mathcal{E}(W)|-2(|W_1^1|+|W_1^2|) $$
\pagebreak

\noindent
holds:
for each~$ w\in W_1^1 $ as well as for each~$ w\in W_1^2 $ two edges connecting~$ w $ with~$ u $ and~$ v $ are removed when in the creation step of~$ X $ the degree of~$ w $ drops from at least~$ d(w)\geq 3 $ to at most~$ d'(w)\leq1 $;
moreover, each endpoint in~$ W_2 $ is connected with~$ u $ or~$ v $ by at least one of the remaining edges.
Now we apply \cref{lemma:kplusltools:k,lemma:kplusltools:l} to obtain~$ k+l\leq2|W_1^2|+2|W_1^1|+|W_2|\leq|\mathcal{E}(W)| $, as claimed.

We prove \cref{lemma:kplusltools:bound}, i.e.\ that~$ |\mathcal{E}(W)| $ is bound from above by~$ 2(\maxdeg-2) $.
To see this, observe first that before creation of~$ X $ two edges incident with~$ u $ do not connect~$ u $ with an endpoint:
edge~\edge{u,v} is an~\m-edge and edge~\edge{u,u'} is removed when in the creation step of~$ X $ the degree of~$ u' $ drops from at least~$ d(u')\geq d(u)\geq3 $ to~$ d'(u')=1 $.
Analogously, edges~\edge{v,u} and~\edge{v,u'} do not connect~$ v $ with an endpoint.
Consequently, at most~$ \maxdeg-2 $ edges connect each of~$ u $ and~$ v $ with an endpoint.
The statement follows.
\qed
\end{proof}

\subsection{All Components Are Balanced}
\label{sect:mostcomps}
Recall that in order to prove \cref{thm:mingreedy4} we have to verify \cref{eqn:mintwocreds}, \cref{def:edgedeb},\linebreak
\cref{eqn:balanceboundaugreg}, and \cref{eqn:balanceboundsinglereg}.
Recall also that in \cref{prop:newminmaxcred} we have already shown\linebreak \cref{eqn:mintwocreds}, i.e.\ that each path receives at least two credits at its endpoints.

\smallskip

In \cref{lemma:maugandsingle} we show \cref{eqn:balanceboundsinglereg} for each singleton.

\smallskip

Also in \cref{lemma:maugandsingle}, we show \cref{def:edgedeb} for path~\m-edges.
Thereafter it remains to prove \cref{eqn:balanceboundaugreg} for each path.
In particular, we prove \cref{eqn:balanceboundaugreg} for each path for which a \degreeone endpoint exists after creation as well as for each path for which there does not exist a \degreeone endpoint after creation:
these proofs are provided in \cref{lemma:nonbip4augdeg1} resp. \cref{lemma:nonbip4exceptional}.

In the proof of \cref{eqn:balanceboundaugreg} we may assume that the nodes~$ u $ and~$ v $ matched to create a path~$ X $ pay at least one debit, since otherwise at least~$ 2(\maxdeg-2) $ debits are missing for~$ X $ and as required in \cref{eqn:balanceboundaugreg} the balance of~$ X $ is at least~$ c_X-d_X\geq2-D_X+2(\maxdeg-2) $, by \cref{eqn:mintwocreds} and \cref{def:edgedeb}.

\bigskip

We begin with the proof of \cref{def:edgedeb,eqn:balanceboundsinglereg}.

\begin{lemma}
\label{lemma:maugandsingle}
Bounds \cref{def:edgedeb,eqn:balanceboundsinglereg} hold, since for each~\m-edge~\edge{y,y'} of a path or a singleton, nodes~$ y $ and~$ y' $ pay at most~$ 2(\maxdeg-2) $ coins.
\end{lemma}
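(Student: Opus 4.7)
The plan is to distinguish according to whether \edge{y,y'} is the source of a donation at the step~$s$ in which \onetwomingreedy picks this~\m-edge; without loss of generality assume~$y$ is the node selected at step~$s$.

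\smallskip

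\emph{No donation leaves \edge{y,y'}.} For a path~\m-edge, \cref{lemma:thebasicboundshold} already bounds the debits from~$y$ and~$y'$ by~$2(\maxdeg-2)$ and nothing else needs to be counted. For a singleton we argue by cases on the current degree~$d(y)$, using the key observation that any~$F$-neighbor~$w$ of~$y$ or~$y'$ satisfies~$d(w)\geq d(y)$ by minimality and loses at most two incident edges in step~$s$ (one incident with~$y$, one incident with~$y'$). Hence~$w$ can qualify as a transfer target only if~$d(y)\leq3$, so for~$d(y)\geq4$ no debits are paid at all. For~$d(y)\in\{1,2\}$ node~$y$ has at most one incident~$F$-edge and~$y'$ at most~$\maxdeg-1$, giving a total of at most~$\maxdeg\leq2(\maxdeg-2)$ debits. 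For~$d(y)=3$ a transfer target~$w$ has~$d(w)\geq3$ and must lose two edges in step~$s$, so must be a common~$F$-neighbor of~$y$ and~$y'$; since~$y$ has only~$d(y)-1=2$ such~$F$-edges, at most two such~$w$ exist, each contributing one debit from~$y$ and one from~$y'$, for at most~$4\leq2(\maxdeg-2)$ debits.

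\smallskip

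\emph{A donation leaves \edge{y,y'}.} Only the selected node of a donation step pays a donation, and by \cref{def:donation} that node has current degree~$1$. So~$y$ is the donation source, has no incident~$F$-edge, and pays zero debits. Setting~$y=u'$ and~$y'=v'$ as in \cref{def:donation}, we invoke \cref{lemma:kplusl} for the receiving path~$X$ to obtain integers~$k,l$ with~$d_{v'}\leq l$ and~$k+l\leq2(\maxdeg-2)$, where~$k$ may be chosen as the number of coins donated by~$y$. The total coins paid by \edge{y,y'}, namely the~$k$ donated coins from~$y$ together with the debits~$d_{v'}$ from~$y'=v'$, is then at most~$k+l\leq2(\maxdeg-2)$.

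\smallskip

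\emph{Main obstacle.} The most delicate case is the donation-free singleton with~$d(y)=3$: a crude count permits up to~$\maxdeg+1$ debits, and only the structural observation that every transfer target must be a common~$F$-neighbor of~$y$ and~$y'$ brings the count down to~$4$, matching the bound tightly at~$\maxdeg=4$. All other cases reduce to direct applications of \cref{lemma:thebasicboundshold} or \cref{lemma:kplusl}.
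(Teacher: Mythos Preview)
There is a genuine gap in the donation-free singleton case when $d(y)\geq4$. Your ``key observation'' that $d(w)\geq d(y)$ by minimality does not hold for \onetwomingreedy: whenever every current degree is at least~$3$ the algorithm picks an \emph{arbitrary} edge, not one incident with a minimum-degree node (see \cref{algo:onetwomingreedy}). Thus if $d(y)=5$, say, you only know $d(w)\geq3$ for every other node~$w$, not $d(w)\geq5$. A common $F$-neighbour~$w$ of~$y$ and~$y'$ with $d(w)=3$ then drops to $d'(w)=1$ and \emph{is} a valid transfer target, so debits can be paid even when $d(y)\geq4$. Extending your $d(y)=3$ count verbatim gives only at most $2(d(y)-1)\leq2(\maxdeg-1)$ debits, which misses the required bound by two.

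The paper closes this gap with a different idea. Assuming at least $2(\maxdeg-2)+1$ debits, it first forces both $d(y),d(y')\geq\maxdeg-1\geq3$ (else one of them has at most $\maxdeg-3$ incident $F$-edges and two missing debits already). Then some transfer target~$w$ has $d'(w)=1$, so in the \emph{next} step \onetwomingreedy must select a \degreeone node~$z\neq w$; since $d(z)\geq3$ held before step~$s$, node~$z$ lost two edges, one to each of~$y$ and~$y'$. As~$z$ is \m-covered, the $F$-edges $\{y,z\}$ and $\{y',z\}$ are non-transfers, yielding one missing debit at each of~$y,y'$ --- a contradiction. Your remaining cases (donation via \cref{lemma:kplusl}, path via \cref{lemma:thebasicboundshold}, and the singleton sub-cases $d(y)\leq2$ and $d(y)=3$) are correct; only the $d(y)\geq4$ reduction to ``no debits'' via minimality is unsound for this algorithm.
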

\begin{proof}
We have to verify that~$ y $ and~$ y' $ pay at most~$ 2(\maxdeg-2) $ coins whether a donation must be payed or not.
Assume that \onetwomingreedy selects node~$ y $ and matches~$ y $ with neighbor~$ y' $.
Hence a donation might be payed from~$ y $, but not from~$ y' $.

\medskip

Assume that~$ y $ pays a donation.
The following argument applies whether~\edge{y,y'} is a path~\m-edge or a singleton.
By \cref{def:donation} of donations, edge~\edge{y,y'} is picked in the donation step of a path~$ X $ for which a \degreeone endpoint exists after creation.
By \cref{lemma:kplusl}, we have~$ k+l\leq2(\maxdeg-2) $ for the number~$ k $ of coins donated by~$ y $ and the number~$ l $ of debits payed by~$ y' $, i.e.\ both nodes pay at most~$ 2(\maxdeg-2) $ coins.

\medskip

From here on we assume that nodes~$ y $ and~$ y' $ pay only debits.
If~\edge{y,y'} is a path~\m-edge then observe that each of~$ y $ and~$ y' $ is incident with at most~$ \maxdeg-2 $ edges of~$ F $ and recall that only~$ F $-edges can be transfers.
Thus~$ y $ and~$ y' $ pay at most~$ 2(\maxdeg-2) $ coins.

\smallskip

If~\edge{y,y'} is the edge of a singleton~$ Y $, then each of~$ y $ and~$ y' $ has at most~$ \maxdeg-1 $ incident~$ F $-edges, i.e.\ we have~$ d_Y\leq2(\maxdeg-1) $.
In order to prove \cref{eqn:balanceboundsinglereg} with balance at least~$ -d_Y\geq-2(\maxdeg-2) $ it suffices to show that at least two debits are missing.
Assume otherwise, i.e.\ that~$ y $ and~$ y' $ pay at least~$ 2(\maxdeg-2)+1 $ debits.
Then each of~$ y $ and~$ y' $ has degree at least~$ \maxdeg-1 $ when~\edge{y,y'} is picked, since otherwise one of~$ y $ and~$ y' $ would have at most~$ \maxdeg-3 $ incident~$ F $-edges and hence at least two missing debits.
But since both~$ y $ and~$ y' $ have degree at least~$ \maxdeg-1\geq3 $ before creation and a debit leaves~$ y $ or~$ y' $, say to endpoint~$ w $, before creation endpoint~$ w $ has degree at least~$ d(w)\geq3 $ as well and we obtain that~$ d'(w)=1 $ holds after creation.
Hence a node other than~$ w $ is selected next after creation of~$ Y $, call it~$ z $.
But~$ z $ had degree at least~3 before creation of~$ Y $ as well, hence~$ z $ must be a \degreeone node after creation:
two edges connecting~$ y $ and~$ y' $ with~$ z $ are removed from the graph in the creation step of~$ Y $.
Both edges are not transfers, since~$ y,y' $, and~$ z $ are~\m-covered.
Consequently, each of~$ y $ and~$ y' $ has a missing debit, a contradiction.
\qed
\end{proof}

To complete the proof of \cref{thm:mingreedy4}, it remains to verify \cref{eqn:balanceboundaugreg}.

\begin{lemma}
\label{lemma:nonbip4augdeg1}
Bound \cref{eqn:balanceboundaugreg} holds for each path for which a \degreeone path endpoint exists after creation.
\end{lemma}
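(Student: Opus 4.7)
My plan is to split the analysis by the location of the node $u'$ selected by \onetwomingreedy in the donation step of $X$: by \cref{lemma:u'props}, either $u'$ is an \mopt-neighbor of $u$ or $v$ (so $u'\in X$) or $u'$ belongs to another component $Y\neq X$ and is connected to $u$ or $v$ by an $F$-edge. In both sub-cases I would first invoke \cref{lemma:kplusl} to obtain non-negative integers $k,l$ with $d_{u,v}\le k$, $d_{v'}\le l$, and $k+l\le 2(\maxdeg-2)$, and then combine the resulting bounds with $c_X\ge 2$ from \cref{prop:newminmaxcred} and the per-edge cap $2(\maxdeg-2)$ from \cref{lemma:maugandsingle}. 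The target bound \cref{eqn:balanceboundaugreg} arises from saving $2(\maxdeg-2)$ coins on two distinguished \m-edges.

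If $u'\in X$, then the \m-edge $\edge{u',v'}$ picked in the donation step lies in $X$ and differs from $\edge{u,v}$, so $\w{X}\ge 2$. Since $u'$ is selected with current degree one it pays no coin, so the two \m-edges $\edge{u,v}$ and $\edge{u',v'}$ together contribute at most $d_{u,v}+d_{v'}\le k+l\le 2(\maxdeg-2)$ coins to $d_X$. Adding the bound $2(\maxdeg-2)$ per remaining \m-edge from \cref{lemma:maugandsingle} gives $d_X\le D_X-2(\maxdeg-2)$, which together with $c_X\ge 2$ yields \cref{eqn:balanceboundaugreg}.

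If $u'\in Y\neq X$, then by \cref{def:donation} node $u'$ donates $k$ coins to $X$, and the ``in particular'' clause of \cref{lemma:kplusl} lets me take this $k$ to be the number of donated coins, whether the donation is static ($d(u)=2$) or dynamic ($d(u)\ge 3$). Since by \cref{prop:donationsaresafe} neither $u$ nor $v$ pays a donation, the \m-edge $\edge{u,v}$ contributes at most $k$ coins to $d_X$ and the remaining $\w{X}-1$ \m-edges at most $2(\maxdeg-2)$ each, giving $d_X\le k+D_X-2(\maxdeg-2)$; on the credit side $c_X\ge 2+k$, because the incoming donation comes on top of the two endpoint credits of \cref{eqn:mintwocreds}. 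The two $k$'s cancel in $c_X-d_X\ge 2-D_X+2(\maxdeg-2)$.

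The main obstacle I anticipate is the bookkeeping in the donation sub-case: I need the donated amount and the debits from $\edge{u,v}$ that it compensates to be measured by the \emph{same} quantity $k$, which is exactly the content of the ``in particular'' clause of \cref{lemma:kplusl}, and I need \cref{prop:donationsaresafe} to ensure that the donation step is paired with a unique creation step and that no donation leaves $u$ or $v$ to spoil the cancellation. Everything else is a routine aggregation of already-established per-edge and per-endpoint bounds.
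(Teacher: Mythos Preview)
Your proposal is correct and follows essentially the same route as the paper: split via \cref{lemma:u'props} into the two cases $u'\in X$ and $u'\in Y\neq X$, then apply \cref{lemma:kplusl} together with \cref{eqn:mintwocreds} and \cref{def:edgedeb} (i.e.\ \cref{lemma:maugandsingle}) to save $2(\maxdeg-2)$ coins. The only cosmetic difference is that in the donation case the paper nets the $k$ donated coins against $d_X$ (obtaining $d_X\le D_X-2(\maxdeg-2)$ directly), whereas you place them on the credit side as $c_X\ge 2+k$; the two accountings yield the same balance.
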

\begin{proof}
Let~$ X $ be a path for which a \degreeone endpoint exists after creation.
Since~$ X $ receives at least~$ c_X\geq2 $ credits by \cref{eqn:mintwocreds} and each~\m-edge of~$ X $ pays at most~$ 2(\maxdeg-2) $ coins by \cref{def:edgedeb}, to show \cref{eqn:balanceboundaugreg} we have to prove that the balance of~$ X $ is increased above~$ 2-D_X $ by at least~$ 2(\maxdeg-2) $.

Let~$ u' $ be the node selected next after creation of~$ X $, and recall that~$ u' $ is selected with degree~$ d'(u)=1 $ since there is a \degreeone endpoint after creation of~$ X $.
By \cref{lemma:u'props}, node~$ u' $ is the~\mopt-neighbor of~$ u $ or~$ v $, or~$ u' $ belongs to another component~$ Y\neq X $.
We distinguish these two cases.

Assume that~$ u' $ is the~\mopt-neighbor of~$ u $ or~$ v $, i.e.\ edge~\edge{u',v'} is an~\m-edge of~$ X $.
Observe that by \cref{def:edgedeb} the number of coins payed by nodes of the~$ X $-edges~\edge{u,v} and~\edge{u',v'} is bounded from above by~$ 4(\maxdeg-2) $.
However, letting~$ k $ be the number of debits payed by nodes~$ u $ and~$ v $, and letting~$ l $ be the number of debits payed by nodes~$ u' $ and~$ v' $, \cref{lemma:kplusl} shows that~$ k+l\leq2(\maxdeg-2) $ holds, i.e.\ edges~\edge{u,v} and~\edge{u',v'} pay at least~$ 2(\maxdeg-2) $ coins less than maximum.
Therefore we obtain~$ d_X\leq D_X-2(\maxdeg-2) $ and hence the balance of~$ X $ is at least~$ c_X-d_X\geq2-D_X+2(\maxdeg-2) $.

Now assume that~$ u' $ belongs to another component~$ Y\neq X $.
Hence~$ X $ receives a donation from~$ u' $.
Here, \cref{lemma:kplusl} shows that the number~$ k $ of coins received by~$ X $ over the donation satisfies~$ k\geq d_{u,v} $ for the number~$ d_{u,v} $ of debits payed by~$ u $ and~$ v $.
Since we have~$ d_{u,v}-k\leq0 $, we obtain~$ d_X\leq(\w{X}-1)\cdot2(\maxdeg-2)+d_{u,v}{-}k\leq (\w{X}-1)\cdot2(\maxdeg-2)=D_X-2(\maxdeg-2) $ and hence the balance of~$ X $ is at least~$ c_X-d_X\geq2-D_X+2(\maxdeg-2) $ again.
\qed
\end{proof}

\paragraph{No Degree-1 Endpoint Exists After Creation.}
\newcommand{\barX}{\ensuremath{{\bm\bar{X}}}\xspace}
\newcommand{\barx}{\ensuremath{{\bm\bar{x}}}\xspace}
\newcommand{\baru}{\ensuremath{{\bm\bar{u}}}\xspace}
\newcommand{\barv}{\ensuremath{{\bm\bar{v}}}\xspace}
\newcommand{\bard}{\ensuremath{{\bm\bar{d}}}\xspace}
\newcommand{\barw}{\ensuremath{{\bm\bar{w}}}\xspace}
By~$ \barX $ we denote path such that after creation of~$ \barX $ there does not exist a \degreeone endpoint.
Assume that \onetwomingreedy selects node~$ \baru $ to create~$ \barX $ and matches~$ \baru $ with~$ \barv $.
\cref{lemma:createpathdeg1}\cref{lemma:createpathdeg1:nodebu} shows that node~$ \baru $ pays no debits, \cref{lemma:createpathdeg1}\cref{lemma:createpathdeg1:debv,lemma:createpathdeg1:onedeb} show that node~$ \barv $ pays exactly one debit.
Therefore we can only bound the number of missing debits from~$ \baru $ and \barv by at least~$ 2(\maxdeg-2)-1 $.
Thus the number of coins payed by~$ {\barX} $ is at most
$$ d_{{\barX}}\leq D_{{\barX}}-(\,2(\maxdeg-2)-1\,)\,, $$
where~$ D_\barX=\w{\barX}\cdot2(\maxdeg-2) $ is the maximum possible number of coins payed by \barX, since each~\m-edge of~\barX pays at most~$ 2(\maxdeg-2) $ coins by \cref{def:edgedeb}.

Can we find an additional missing debit
However, unlike in the case that a \degreeone endpoint exists after creation, in order to find an additional missing debit for~$ {\barX} $ we cannot rely on the analysis of a \degreeone node matched after creation of~\barX.
In particular, no coins are donated to~\barX.

Therefore, using \cref{eqn:mintwocreds} we obtain that the balance of~\barX is at least
$$c_{{\barX}}-d_{{\barX}}\geq2-(D_{{\barX}}-(\,2(\maxdeg-2)-1\,))=1-D_{{\barX}}+2(\maxdeg-2)\,,$$
which fails to satisfy bound \cref{eqn:balanceboundaugreg} by one coin.
Thus, if we can find an additional credit or an additional missing debit, then~$ {\barX} $ is balanced and we are done.
Hence the following result completes the proof of \cref{thm:mingreedy4}.

\begin{lemma}
\label{lemma:nonbip4exceptional}
Bound \cref{eqn:balanceboundaugreg} holds for each path~\barX for which there does not exist a \degreeone path endpoint after creation, since~$ {\barX} $ receives~$ c_{{\barX}}\geq3 $ credits or an~\m-edge~\edge{{\barx},{\barx}'} of~$ {\barX} $ pays at most~$ 2(\maxdeg-2)-1 $ coins, where~$ \edge{{\barx},{\barx}'}\neq\edge{\baru,\barv} $ is not the~\m-edge of~\barX picked in the creation step.
\end{lemma}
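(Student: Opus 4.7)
The plan is to first apply \cref{lemma:createpathdeg1}\cref{lemma:createpathdeg1:deg2} to unpack the creation of $\barX$: \onetwomingreedy selects $\baru$ with $d(\baru)=2$ and matches it with $\barv$, where $\{\baru,\barw\}\in\mopt$, $\{\barv,\barw\}\in F$, and $d(\barw)=2$; moreover $\baru$ pays no debit and $\barv$'s sole debit is directed to $\barw$. Consequently $\barw$ is isolated after creation and receives exactly one credit (from $\barv$). It therefore suffices to produce either two further credits to $\barX$, or an $\m$-edge of $\barX$ distinct from $\{\baru,\barv\}$ paying at most $2(\maxdeg-2)-1$ coins. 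I split on $\w{\barX}$.

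If $\w{\barX}=1$, the other endpoint of $\barX$ is the $\mopt$-neighbor $z$ of $\barv$. Since $z$ is an endpoint neighboring $\barv$ distinct from $\barw$, \cref{lemma:createpathdeg1:wi} yields $d(z)\geq 3$, so after creation $z$ has degree at least two, and every remaining incident edge of $z$ belongs to $F$. As $z$'s degree decreases monotonically to zero, a short enumeration of the possible ``final descents'' through the values $\leq 1$ -- paralleling the counting used in \cref{prop:maxthreecreds} -- shows that $z$ accrues at least two credits after any cancellation per \cref{def:cancel}. Combined with $\barw$'s credit this gives $c_{\barX}\geq 3$.

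If $\w{\barX}\geq 2$, I let $z$ be the $\mopt$-neighbor of $\barv$ in $\barX$ and $z'$ its $\m$-partner in $\barX$, so $\{z,z'\}\neq\{\baru,\barv\}$ is an $\m$-edge of $\barX$. I aim to show $\{z,z'\}$ pays at most $2(\maxdeg-2)-1$ coins. Because $\baru$'s only neighbors are $\barv$ and $\barw\neq z$, creation of $\barX$ removes only the $\mopt$-edge $\{z,\barv\}$ incident with $z$, so $z$'s post-creation degree is $d(z)-1$. If $d(z)=2$ then this is $1$ and $\{z,z'\}$ is $z$'s sole remaining edge; hence when $\{z,z'\}$ is later picked, $z$ pays zero coins (no incident $F$-edge, hence no debit, and no donation either) while $z'$ pays at most $\maxdeg-2$ coins, for a total of at most $\maxdeg-2\leq 2(\maxdeg-2)-1$, valid for $\maxdeg\geq 4$. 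If $d(z)\geq 3$, I plan to inspect the step $s^*$ at which $\{z,z'\}$ is picked and leverage the hypothesis that no \degreeone endpoint exists after creation of $\barX$ to force at least one missing debit among the $F$-edges of $z$ and $z'$ at $s^*$.

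I expect the main obstacle to be this last sub-case, $d(z)\geq 3$ within $\w{\barX}\geq 2$. In the extremal configuration both $z$ and $z'$ could carry $\maxdeg-2$ incident $F$-edges at $s^*$, all becoming debits and exhausting the $2(\maxdeg-2)$ budget. Saving the required coin will hinge on correlating the requirements for those $F$-edges to be debits -- their other endpoints must be path endpoints whose degrees drop to $\leq 1$ at $s^*$ -- with \onetwomingreedy's min-degree selection rule at $s^*$: in the extremal case the node selected at $s^*$ would have degree strictly exceeding that of some of its $F$-neighboring endpoints, contradicting its selection. Making this precise parallels the case distinctions used in the proof of \cref{lemma:createpathdeg1}.
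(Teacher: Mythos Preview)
Your case $\w{\barX}=1$ is essentially the paper's argument, and your $\w{\barX}\geq 2$, $d(z)=2$ sub-case is correct (and cleaner than the paper's route there), once you observe that while $z$ sits at degree~$1$ no path can be created and hence no donation can be charged to $\{z,z'\}$.

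The genuine gap is the $d(z)\geq 3$ sub-case. Two problems:
\begin{itemize}
\item You only try to show that the \emph{specific} edge $\{z,z'\}$ (the one adjacent to the creation edge) is light. But the lemma is a disjunction: sometimes no $\m$-edge of $\barX$ distinct from $\{\baru,\barv\}$ is light and you must instead produce $c_{\barX}\geq 3$. The paper's proof does exactly this in several sub-cases of the dynamic-donation analysis (e.g.\ when the relevant endpoint lies in $W_1^2$ or $W_{\geq3}$ it exhibits two credits to that endpoint rather than a light edge). Your plan for $\w{\barX}\geq 2$ never attempts the credit route, so it cannot close these configurations.
\item Your sketch ignores donations paid by $\{z,z'\}$. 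The step $s^*$ at which $\{z,z'\}$ is picked may well be a donation step for some other path $X'$: if the selected node among $z,z'$ has degree~$1$ and the previous step created $X'$ with a degree-$1$ endpoint, a dynamic donation of up to $2(\maxdeg-2)$ coins leaves $\{z,z'\}$; the min-degree contradiction you gesture at does not bite here because \cref{lemma:kplusl} already allows $k+l=2(\maxdeg-2)$.
\end{itemize}

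The paper resolves both issues by a different choice of edge: it takes $\{\barx,\barx'\}$ to be the $\m$-edge picked at the step when the \emph{second} $\mopt$-edge incident with an endpoint of $\barX$ is removed. This ties $\{\barx,\barx'\}$ to a specific endpoint $\barw$ of $\barX$. In the no-donation and static-donation cases a local count gives the missing coin directly; in the dynamic-donation case (where $\{\barx,\barx'\}$ donates to a path $X'$ created with $d(u)\geq 3$) the paper invokes the partition $W_1^1,W_1^2,W_2,W_{\geq 3}$ from \cref{def:endpointsets} and does a case split on where $\barw$ lies: two of the cases yield a contradiction to $k+l=2(\maxdeg-2)$, and the other two yield two credits to $\barw$, hence $c_{\barX}\geq 3$. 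Your edge $\{z,z'\}$ carries no comparable structural link to an endpoint of $\barX$, so there is no evident way to run this dichotomy from it.
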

\begin{proof}
To prepare the proof, observe that to show~$ c_{{\barX}}\geq3 $ it suffices to identify an endpoint of~$ {\barX} $ which receives at least two credits, since by \cref{prop:numcreditsreg} the other endpoint of~$ {\barX} $ receives an additional credit.

We distinguish cases by the number~$ \w{\barX} $ of~\m-edges of~\barX.

\medskip

Assume that~$ {\barX} $ is a~\oneOverTwo-path with~{\boldmath$ \w{{\barX}}=1 $}.
We show that an endpoint of~$ {\barX} $ receives two credits, which proves~$ c_{{\barX}}\geq3 $.
First, we consider the creation step of~\barX and argue that in the subsequent step there is an endpoint of~$ {\barX} $ which is not isolated.
Since after creation of~\barX there does not exist a \degreeone endpoint, by \cref{lemma:createpathdeg1}\cref{lemma:createpathdeg1:deg2} the \onetwomingreedy algorithm selects a node~\baru of current degree exactly~$ \bard(\baru)=2 $.
Let~$ \barv $ denote the node matched with~$ \baru $ and observe that the~\mopt-neighbor of~$ \barv $, call it~$ \barw, $ has degree at least~$ \bard(\barw)\geq\bard(\baru)=2 $ as well.
Since~$ \w{\barX}=1 $ holds, node~$ \barw $ must be an endpoint of~$ {\barX} $, and since we have~$ \bard(\baru)=2 $ endpoint~$ \barw $ is not adjacent to~$ \baru $.
Thus the degree of~$ \barw $ drops by at most one and to at least~$ \bard'(\barw)\geq1 $ in the step after creation of~$ {\barX} $.

Moreover, after creation of~$ {\barX} $ all non-isolated path endpoints have degree at least two, since there does not exist a \degreeone endpoint.
This holds in particular for~$ \barw $.
But since after creation of~$ {\barX} $ endpoint~$ \barw $ is incident with at least two~$ F $-edges, endpoint~$ \barw $ eventually receives at least two credits, namely over those two~$ F $-edges of~$ \barw $ which are removed last from the graph.

\medskip

From here on, assume that~$ {\barX} $ has~{\boldmath$ \w{{\barX}}\geq2 $} edges of~\m.
We again denote the nodes matched to create~$ {\barX} $ by~$ \baru $ and~$ \barv $, where we assume that \onetwomingreedy selects~$ \baru $.
Recall that since no \degreeone endpoint exists after creation of~$ {\barX} $, by \cref{lemma:createpathdeg1}\cref{lemma:createpathdeg1:deg2} an~\mopt-edge of~$ {\barX} $ incident with an endpoint of~$ {\barX} $ is removed from the graph.

We consider the step~$ s $ when for the second time an~\mopt-edge incident with an endpoint of~$ {\barX} $ is removed from the graph.
Denote by~$ {\barx} $ the selected node and by~$ {\barx}' $ the neighbor matched with~$ {\barx} $.
Since~$ \w{{\barx}}\geq2 $ holds we have~$ \edge{\baru,\barv}\neq\edge{{\barx},{\barx}'} $.
We distinguish the cases that~$ {\barx} $ and~$ {\barx}' $ pay only debits, or~$ {\barx} $ pays a static or a dynamic donation.

\smallskip

First, assume that~$ {\barx} $ and~$ {\barx}' $ pay {\bfseries no donation} but only debits.
To show that edge~\edge{{\barx},{\barx}'} pays at most~$ 2(\maxdeg-2)-1 $ debits, assume the opposite.
Then each of~$ {\barx} $ and~$ {\barx}' $ pays~$ \maxdeg-2 $ debits, since each is incident with at most~$ \maxdeg-2 $ edges of~$ F $.
Hence each of~$ {\barx} $ and~$ {\barx}' $ has degree at least~$ \maxdeg-1 $ at step~$ s $, since each is incident with~$ \maxdeg-2 $ edges of~$ F $ as well as with edge~\edge{{\barx},{\barx}'}.
But a debit leaves~$ {\barx} $ or~$ {\barx}' $, say to endpoint~$ w $, hence by \cref{def:transfer} of transfers after step~$ s $ endpoint~$ w $ has degree at most one.
In particular, endpoint~$ w $ has degree exactly one after step~$ s $, since before step~$ s $ node~$ {\barx} $ has degree at least~$ \maxdeg-1\geq3 $ and hence endpoint~$ w $ has degree at least~$ 3 $ as well.

Thus a node other than~$ w $ is selected next, call it~$ y $.
But node~$ y $ had degree at least~$ 3 $ before step~$ s $ as well and since thereafter endpoint~$ w $ has degree exactly one, node~$ y $ is selected with current degree exactly one.
Consequently, edges~\edge{\barx,y} and~\edge{\barx',y} are removed from the graph in step~$ s $.
Both~\edge{\barx,y} and~\edge{\barx',y} are not transfers, since~$ {\barx},{\barx}' $, and~$ y $ are~\m-covered.
Now observe that at most one of~\edge{\barx,y} and~\edge{\barx',y} can be an~\mopt-edge, since otherwise two~\mopt-edges would be incident with~$ y $:
at least one of~\edge{\barx,y} and~\edge{\barx',y} is an~$ F $-edge, say~\edge{\barx,y}.
Since edge~\edge{\barx,y} is not a transfer, there is a debit missing for~$ {\barx} $ and edge~\edge{{\barx},{\barx}'} pays at most~$ 2(\maxdeg-2)-1 $ debits, as claimed.

\smallskip

Now assume that~\barx pays a {\bfseries static donation}.
By \cref{def:donation}, path~$ {\barX} $ pays~$ \maxdeg-3 $ coins over the donation from~$ {\barx} $.
Furthermore, node~$ {\barx}' $ pays at most~$ \maxdeg-2 $ debits, since at most~$ \maxdeg-2 $ many~$ F $-edges are incident with~$ {\barx}' $ when \onetwomingreedy picks edge~\edge{{\barx},{\barx}'}.
Again, the nodes in edge~\edge{{\barx},{\barx}'} pay at most~$ 2(\maxdeg-2)-1 $ coins.

\smallskip

Lastly, assume that~$ {\barx} $ pays a {\bfseries dynamic donation}~\don{{\barx},u} of~$ k $ coins.
By \cref{def:donation} of donations, node~$ u $ belongs to a path~$ X\neq\barX $ for which a \degreeone endpoint exists after creation, and \onetwomingreedy selects~$ u $ with current degree~$ d(u)\geq3 $ when creating~$ X $.
Edge~\edge{\barx,\barx'} is picked in the donation step of~$ X $ when~\barx has current degree~$ d'(\barx)=1 $.

Assume that~$ u $ is matched with~$ v $, and recall the definition of the sets~$ W,W_1$, $W_1^1,W_1^2,W_2 $, and~$ W_{\geq3} $ of endpoints neighboring~$ u $ and~$ v $, see \cref{def:endpointsets}.
By \cref{lemma:kplusl,lemma:kplusltools} we have the following:
nodes~$ u $ and~$ v $ pay~$ d_{u,v}=k=2|W_1^2|+|W_1^1| $ debits, node~\barx donates~$ k $ coins to~$ u $, node~$ {\barx}' $ pays at most~$ d_{{\barx}'}=l\leq|W_1^1|+|W_2| $ debits.
Hence~\barx and~$ \barx' $ pay at most~$ k+l\leq2(\maxdeg-2) $ coins in total.
If~$ k+l<2(\maxdeg-2) $ holds then an additional debit is missing from~\barx or~$ \barx' $ and we are done.

\smallskip

So assume from here on that~\barx and~$ \barx' $ pay~$ k+l=2(\maxdeg-2) $ coins.
As a consequence of \cref{lemma:kplusltools}\cref{lemma:kplusltools:e,lemma:kplusltools:bound} we get
$$ k+l=2|W_1^2|+2|W_1^1|+|W_2|=|\mathcal{E}(W)|=2(\maxdeg-2)\,. $$
Since~$ k=2|W_1^2|+|W_1^1| $ holds, we get that~$ \barx' $ pays exactly
$$ d_{{\barx}'}=l=|W_1^1|+|W_2| $$
debits.
To which endpoints?
Node~$ \barx' $ pays zero debits to endpoints in~$ \mathcal{W}\setminus W $ or~$ W_{\geq3} $:
in the step when \onetwomingreedy picks edge~\edge{\barx,\barx'} these endpoints have degree at least three, and since these endpoints are not adjacent to~\barx when~\barx is selected with current degree~$ d'(\barx)=1 $ in the donation step of~$ X $, the degrees of these endpoints drop by at most one and to at least two, i.e.\ these endpoints do not receive transfers from~$ \barx' $.
Therefore~$ \barx' $ might only pay debits to endpoints in~$ W_2\cup W_1 $.
But node~$ \barx' $ also pays no debits to nodes in~$ W_1^2 $, since in the step when \onetwomingreedy picks edge~\edge{\barx,\barx'} each endpoint in~$ W_1^2 $ already receives two credits from nodes~$ u $ and~$ v $ and any further credits from~$ \barx' $ are canceled.
Consequently, node~$ {\barx}' $ might pay debits only to nodes in~$ W_2 $ or~$ W_1^1 $.
Since sets~$ \mathcal{W}\setminus W,W_{\geq3},W_2,W_1^2 $, and~$ W_1^1 $ are pairwise disjoint and since we have~$ d_{\barx'}=|W_1^1|+|W_2| $, we obtain that~$ \barx' $ pays exactly one debit to each endpoint in~$ W_1^1 $ as well as to each endpoint in~$ W_2 $.

\medskip

In the rest of the proof we proceed as follows.
We show that either an endpoint of~\barX receives at least two credits, which proves~$ c_X\geq3 $, or we show a contradiction to our assumption that nodes~\barx and~$ \barx' $ pay~$ k+l=2(\maxdeg-2) $ coins, which proves that edge~\edge{\barx,\barx'} pays at most~$ 2(\maxdeg-2) $ coins.

Consider the step when \onetwomingreedy picks edge~\edge{\barx,\barx'}.
Recall that~\barx is selected with current degree~$ d'(\barx)=1 $, i.e.\ when the~\mopt-edge of~\barx is already removed from the graph.
Recall also that an~\mopt-edge incident with a path endpoint of~\barX is removed in this step.
This~\mopt-edge must be incident with~$ \barx' $ and we call it edge~\edge{\barx',\barw}.

We conduct a case analysis based on which of~$ W_1^1,W_1^2 $, or~$ W_2 $ endpoint~$ \barw $ belongs to.
(Recall that sets~$ W_1^1$ and~$ W_1^2$ form a partition of~$ W_1 $ and sets~$ W_1, W_2$, and~$W_{\geq3}$ form a partition of~$W $.)
Path~$ X $ is created when the nodes~$ u $ and~$ v $ being matched have degree at least~$ d(u)\geq3 $ resp.~$ d(v)\geq3 $, and after creation of~$ X $ the degree of~$ {\barx} $ is exactly~$ d'({\barx})=1 $.
Thus two edges~\edge{u,\barx} and~\edge{v,\barx} are removed from the graph in the creation step of~$ X $.

\begin{itemize}[leftmargin=2cm]
{
\setlength\itemindent{3.2cm}
\item[$ \barw\in\mathcal{W}\setminus W $ or~$ \barw\in W_{\geq3} $~:]
Here we show that~\barw receives at least two credits due to its large degree after creation of~$ X $.

Since~$ X $ receives a dynamic donation, by \cref{def:donation} of donations the degrees of~$ u $ and~$ v $ are at least three in the creation step of~$ X $.
Consequently, the degree of~\barw is at least~$ d(\barw)\geq3 $ as well.
If we have~$ \barw\in\mathcal{W}\setminus W $, then~\barw is not adjacent to~$ u $ or~$ v $ and hence no edges incident with~\barw are removed in the creation step of~$ X $.
Thus the degree of~\barw is at least~$ d'(w)\geq3 $ after creation of~$ X $.
If~$ \barw\in W_{\geq3} $ holds, then the degree of~\barw is at least~$ d'(w)\geq3 $ after creation of~$ X $ by \cref{def:endpointsets}.

In the step after creation of~$ X $, i.e.\ in the donation step of~$ X $, when \onetwomingreedy picks edge~\edge{\barx,\barx'}, the degree of endpoint~\barw drops by at most one, since~\barw is not adjacent with node~\barx which is selected with current degree~$ d'(\barx)=1 $.

Consider the step after edge~\edge{\barx,\barx'} is picked.
The degree of~$ \barw $ is at least two and the~\mopt-edge of~$ \barw $ is removed from the graph.
Hence~$ \barw $ is incident with at least two~$ F $-edges.
Now observe that~$ \barw $ eventually receives at least two credits, namely over the two of the~$ F $-edges of~$ \barw $ which are removed last from the graph.
Consequently, path~$ {\barX} $ receives at least~$ c_{{\barX}}\geq3 $ credits.
}
\item[$ \barw\in W_1^1 $~:]
We show a contradiction to our assumption that~$ k+l=2(\maxdeg-2) $ holds.

Consider the step when path~$ X $ is created.
Recall by \cref{def:endpointsets} of~$ W_1^1 $, that at least two edges incident with~$ \barw $ are removed, since~$ \barw $ has degree at least~$ d(\barw)\geq3 $ before and degree at most~$ d'(\barw)\leq1 $ afterwards.
Also by definition of~$ W_1^1 $, only one~$ F $-edge incident with~$ \barw $ is removed, hence the~\mopt-edge incident with~$ \barw $ is removed as well.
Consequently, endpoint~$ \barw $ belongs to~$ X $.

We obtain a contradiction, since endpoint~$ \barw $ belongs to path~$ {\barX} $ and~$ {\barX}{\neq} X $ holds by definition of the donation~\don{{\barx},u}.
\item[$ \barw\in W_1^2 $~:]
By \cref{def:endpointsets} of set~$ W_1^2 $, two~$ F $-edges~$ e_1 $ and~$ e_2 $ incident with~$ \barw $ are removed when path~$ X $ is created.
In particular, since after creation of~$ X $ the degree of~$ \barw $ is exactly one, edges~$ e_1 $ and~$ e_2 $ are credits to~\barw and both credits are never canceled.
Hence path~$ {\barX} $ receives at least~$ c_{{\barX}}\geq3 $ credits.
\item[$ \barw\in W_2 $~:]
Recall that~$ d_{{\barx}'}=l=|W_1^1|+|W_2| $ holds and that node~$ {\barx}' $ pays a debit to each endpoint in~$ W_1^1 $ and to each endpoint in~$ W_2 $.
Since we have~$ \barw\in W_2 $, node~$ {\barx}' $ pays a debit to~$ \barw $.
But only~$ F $-edges can be transfers, thus we have~$ \edge{{\barx}',\barw}\in F $.
A contradiction to endpoint~$ \barw $ being the~\mopt-neighbor of~$ {\barx}' $, i.e.\ to~$ \edge{{\barx}',\barw}\in\mopt $.
\qed
\end{itemize}
\end{proof}

\section{Inapproximability Results}
\label{hardinstances}

\mingreedy does not exploit knowledge gathered about the input in previous steps:
e.g. the neighbors of the selected node $ u $ are not remembered in order to ``explore'' the neighborhood of $ u $ later.
In a step of \mingreedy, an arbitrary node of minimum degree, who is located in an unknown place in the graph, is matched to an arbitrary neighbor.

A question arises naturally:
are the worst case performance guarantees given above for \mingreedy optimal, i.e.\ is there a greedy matching algorithm which always computes larger matchings than proven for \mingreedy?
In particular, a greedy matching algorithm in question may in each step utilize all previously gathered knowledge in very sophisticated node and edge selection routines.

\emph{Adaptive priority algorithms} \cite{DBLP:conf/soda/BorodinNR02} in the \emph{vertex model} \cite{DBLP:conf/soda/DavisI04} define a large class of deterministic greedy matching algorithms, which we denote as \greedymatching.
\greedymatching-algorithms  do not have resource constraints and formalize the essential properties of greedy algorithms:
to what extent can the input be unveiled in a single step, what are the possible irrevocable decisions for the constructed solution to be done after part of the input is revealed?
An \greedymatching-algorithm may gather and process much data about its input instances and deduce knowledge to be used in clever future steps.
Therefore \greedymatching-algorithms seem much stronger than \mingreedy.
In particular,~\greedymatching contains (the deterministic variants of) many prominent greedy matching algorithms, see \cref{lemma:apvcontainswhat}.

Nevertheless, we construct graphs with degrees bounded by \maxdeg for which a matching of size at most $ \desiredRatio+o(1) $ times optimal is computed.
So our \twoOverThree lower bound for $\maxdeg=3$ and our \desiredRatio lower bound for \maxdeg-regular graphs are tight:
the very simple \mingreedy algorithm has optimal worst case performance among \greedymatching-algorithms.
For graphs of degree at most \maxdeg our \weakerRatio lower bound shows that \mingreedy has good worst case performance.

For an \greedymatching algorithm $ A $ the input graph is represented as a set of adjacency lists, e.g. $ \{\langle u;v,w\rangle,\langle v;u,w\rangle, \langle w;u,v\rangle\} $ is the triangle on $ u,v,w $ (where an arbitrarily ordered list of neighbors appears after a semicolon).
An~\greedymatching-algorithm~$ A $ has a priori access to the number of nodes and starts with an empty matching.
In each step, algorithm $ A $ selects a node by specifying a total priority order on all possible adjacency lists.
From the given order, algorithm $ A $ receives the adjacency list which has highest priority and corresponds to a still non-isolated node $ u $, say $ \langle u;v,w,\dots\rangle $.
(A node is called \emph{isolated}, if it, or each of its neighbors, is matched.)
Lastly, algorithm $ A $ selects a non-isolated matching partner for $ u $ from the neighbor set $ \{v,w,\dots\}$ and then changes to the next step.

Matched nodes are not removed from the adjacency lists in~$ G $:
\cref{obs:eitherMatchedOrIsolated} shows that if~$ A $ remembers already matched nodes, then~$ A $ can submit priority orders on adjacency lists w.r.t. the ``reduced'' graph, i.e.\ w.r.t. the set of nodes which are not yet isolated.

\begin{observation}
\label{obs:eitherMatchedOrIsolated}
If an \greedymatching-algorithm receives a data item~$ \langle u;v,w,x,\dots\rangle $, then each neighbor~$ v,w,x,\dots $ of~$ u $ is either matched or not isolated.
\end{observation}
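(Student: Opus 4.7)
The plan is to derive the observation directly from two ingredients already on the table in the paragraph preceding the statement: (i) an \greedymatching-algorithm receives the data item of a node $u$ only when $u$ is currently non-isolated (it submits a total priority order on adjacency lists and is handed back the highest-priority list belonging to a non-isolated node); and (ii) by the displayed definition, a node is isolated iff it is matched or each of its neighbors is matched. Reading (ii) contrapositively, a non-isolated node must itself be unmatched and must have at least one unmatched neighbor.

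First I would fix an arbitrary listed neighbor of $u$, call it $v\in\{v,w,x,\dots\}$, and split into two cases. If $v$ is matched, the conclusion holds trivially. Otherwise $v$ is unmatched, and I want to show that $v$ is non-isolated. Suppose for contradiction that $v$ is isolated. The first disjunct of the isolation definition fails for $v$ since $v$ is unmatched, so the second disjunct must hold: every neighbor of $v$ is matched. But $u$ is a neighbor of $v$, which would force $u$ to be matched, contradicting the fact recorded in the previous paragraph that the non-isolated selected node $u$ is unmatched. Hence $v$ is non-isolated, and the observation holds for both cases.

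No substantial obstacle is anticipated: the argument is a one-line unpacking of the two definitions. The only care needed is to read ``isolated'' as an inclusive disjunction, and to invoke the selection rule for \greedymatching-algorithms (rather than any greedy-specific property) so that the non-isolation of $u$, and hence the unmatched status of $u$, is available for the contradiction step.
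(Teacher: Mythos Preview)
Your proposal is correct and follows essentially the same approach as the paper: both argue by contradiction, taking an unmatched but isolated neighbor, deducing that all its neighbors (in particular $u$) are matched, and clashing with the fact that the selected node $u$ is non-isolated. The only cosmetic difference is that the paper phrases the clash as ``$u$ is isolated'' versus ``$u$ is not isolated'' (using matched $\Rightarrow$ isolated), while you phrase it as ``$u$ is matched'' versus ``$u$ is unmatched'' (using non-isolated $\Rightarrow$ unmatched); these are the two sides of the same implication.
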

\begin{proof}
Let~$ y \in\{v,w,x,\dots\}$ and assume that~$ y $ is not matched but isolated.
Each neighbor of~$ y $ is matched and thus isolated.
Since~$ u $ is a neighbor of~$ y $ and the data item of~$ u $ is received, node~$ u $ is not isolated.
A contradiction.
\qed
\end{proof}

Leaving adjacency lists unchanged increases the power of $ A $:
a neighbor of an already matched node $ v $ may be requested and hence $ A $ is able to explore the neighborhood of $ v $ and is not oblivious to the parts of \g being processed.

\pagebreak
\begin{lemma}
\label{lemma:apvcontainswhat}
The class \greedymatching contains (all deterministic variants of) \greedy, \karpsipser, \mrg, \mingreedy, \shuffle, and all vertex iterative algorithms.
\end{lemma}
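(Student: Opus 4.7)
The plan is to exhibit, for each algorithm $A$ in the list, an explicit priority function on adjacency lists together with a neighbor-selection rule that together simulate a deterministic variant of $A$ within the \greedymatching framework. A key enabling remark is that an \greedymatching-algorithm may store the sequence of received adjacency lists and of matches it has made; together with \cref{obs:eitherMatchedOrIsolated} this means that at the beginning of every step it effectively knows the \emph{reduced} graph on still non-isolated nodes, and so it may evaluate any predicate (reduced degree, whether a given neighbor is unmatched, etc.) purely when specifying the priority order, which is exactly the information needed to encode the step's decision.

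I would handle the algorithms in groups. For \greedy and \mrg the simulation is immediate: pre-commit to a fixed total order on node identifiers, submit the priority that picks the $\leq$-smallest still non-isolated node $u$, and then select the unique partner prescribed by the deterministic tie-breaker (smallest unmatched neighbor by the same order). For \mingreedy and \karpsipser I would define the priority on $\langle u; v_1,\dots,v_k\rangle$ as the number of $v_i$ that the algorithm has not yet seen matched, breaking ties by node identifier; \karpsipser simply promotes adjacency lists whose reduced count equals~$1$ above all others. In both cases this count is computable from the algorithm's memory alone, so the priority is a legal submission, and the received $u$ is indeed of minimum reduced degree. The partner is then any still-unmatched neighbor by the fixed order.

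For \shuffle I would fix, at the start of the computation, a single permutation $\pi$ of $V$ (using that $|V|$ is known a priori), and use $\pi$ both as the priority on adjacency lists (the $\pi$-smallest non-isolated $u$ wins) and as the partner rule ($\pi$-smallest unmatched neighbor of $u$). Vertex iterative algorithms were defined in \cite{DBLP:conf/focs/GoelT12} as exactly the class of algorithms that in each step choose a node according to some priority rule and then an incident edge according to some neighbor-selection rule, so they fit the two-phase structure of a \greedymatching step verbatim; \shuffle is then the special case where both rules are governed by~$\pi$.

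The main point to verify throughout is that each original step of $A$ can indeed be realized by exactly one priority submission followed by exactly one partner choice, with no intermediate queries to the input. This is precisely what \cref{obs:eitherMatchedOrIsolated} buys us: because the algorithm knows the matched set, any rule of $A$ that depends on the reduced graph (minimum degree, degree-one preference, $\pi$-first unmatched neighbor) can be baked into the priority function and the deterministic partner choice, so no separate round of ``exploration'' is required. The only mildly delicate case is \karpsipser, where one must note that preferring reduced-degree-one adjacency lists does not conflict with \cref{obs:eitherMatchedOrIsolated}, since the returned list always corresponds to a non-isolated node and thus still has at least one unmatched neighbor to match to.
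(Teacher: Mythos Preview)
Your proposal has the right spirit---encode each algorithm's step into a priority order on adjacency lists---but there are two genuine gaps.

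First, the lemma asks for \emph{all} deterministic variants, not just one per algorithm. Your treatment of \greedy and \mrg (``pre-commit to a fixed total order on node identifiers, pick the $\leq$-smallest non-isolated node, then its smallest unmatched neighbor'') simulates exactly one variant. A deterministic variant of \greedy is given by an arbitrary (adaptive) priority on \emph{edges}; there is no reason such a variant must coincide with ``smallest node, then smallest neighbor'' for any fixed node order. The paper handles this by a translation step you omit: given the edge priority $e_1,e_2,\dots$ (built recursively under the assumption that higher-priority edges are absent), each $e_i=\{u,v\}$ is replaced by the block of all adjacency lists $\langle u;v,w,x,\dots\rangle$ ranging over every possible completion of $u$'s neighborhood, so that if $e_i$ is present it is picked before $e_{i+1}$. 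The analogous translation (priority-list entry $\mapsto$ block of adjacency lists) is what makes the argument go through for all variants of each algorithm; your direct priority function on adjacency lists captures only the node-centric variants.

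Second, your description of vertex iterative algorithms is not the definition in \cite{DBLP:conf/focs/GoelT12} as recounted in the paper. A vertex iterative algorithm does not receive a node's full adjacency list and then pick a neighbor; it \emph{probes} candidate neighbors one at a time, learning only whether each probed edge exists, may abandon a node after failed probes, and never revisits it. This does not ``fit the two-phase structure of a \greedymatching step verbatim.'' The simulation requires encoding the entire probe sequence $(a,b),(a,c),\dots,(e,f),\dots$ of a round into a single priority order, again via the translation $(u,v)\mapsto$ all adjacency lists $\langle u;v,\dots\rangle$, so that the highest-priority list actually received tells the simulator which probe first succeeded. Your argument for \mingreedy and \shuffle is essentially correct, but for \greedy, \karpsipser (in its non-degree-one branch), and vertex iterative algorithms you need the translation idea rather than a direct priority function.
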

\begin{proof}
We have to implement any deterministic variant of one of the given algorithms as an~\greedymatching-algorithm.

In a given round of \greedy, a priority order~$ e_1,e_2,e_3,\dots $ on edges has to be built adaptively depending on previous computations.
This list can be built edge by edge like this:
The sub-list~$ e_2,e_3,\dots $ is built under the assumption that~$ e_1 $ is not in the graph, the sub-list $ e_3,\dots $ is built under the assumption that both~$ e_1,e_2 $ are not in the graph, etc.
After \greedy adds the highest priority edge in the graph to the solution, a new round starts and a new priority list is built adaptively like above.
We have to translate~$ e_1,e_2,e_3,\dots $ into a priority order on adjacency lists.
Therefore edge~$ e_i=\edge{u,v} $ is replaced by a list containing an adjacency list~$ \langle u;v,w,x,\dots\rangle $ for each possible choice of~$ w,x,\dots $, making sure that if~$ e_i $ is in the graph it is picked before~$ e_{i+1} $.
If \greedy receives the adjacency list~$ \langle u;v,w,x,\dots\rangle $, node~$ u $ is matched with~$ v $.

The argument for the other algorithms in analogous.
In each round, a priority list is built recursively under the assumptions that high priority entries are not in the graph, and then translated to a list of adjacency lists.

\karpsipser works like \greedy but prefers an edge incident with a degree-1 node, if such an edge exists.
So edges incident with degree-1 nodes are moved to the front of the priority order, i.e.\ the priority order on adjacency lists starts with adjacency lists~$ \langle u;v,w,x,\dots\rangle $ for all possible choices of~$ u $ and~$ v,w,x,\dots $ where only~$ v $ is not already matched.

To implement~\mrg, a priority list on nodes~$ u_1,u_2,\dots $ has to be translated.
Node~$ u_i $ is replaced by a list of adjacency lists~$ \langle u_i;v,w,x,\dots\rangle $ for each possible choice of~$ v,w,x,\dots $
If~\mrg receives an adjacency list, say for node~$ u $, an arbitrary non-isolated neighbor of~$ u $ is to be matched with~$ u $.

To implement \mingreedy, a priority list on node degrees~$ 1,2,\dots $ has to be translated.
Degree~$ i $ is replaced by a list of adjacency lists~$ \langle u;v_1,\dots,v_i,w,x,\dots\rangle $ for each possible combination of~$ u $ and~$ v_1,\dots,v_i,w,x,\dots $ where only~$ v_1,\dots,v_i $ are not already matched.
As for~\mrg, for a received adjacency list, say for node~$ u $, an arbitrary non-isolated neighbor of~$ u $ is matched with~$ u $.

\shuffle does not compute priority lists in each round, but a node permutation~$ u_1,u_2,\dots,u_n $ is computed once at the start, using the number~$ n $ of nodes in the graph.
A node~$ u_i $ is replaced by a list of adjacency lists $ \langle u_i;v,w,x,\dots\rangle $ for each possible choice of~$ v,w,x,\dots $
This priority order is used in each round.
If \shuffle receives an adjacency list for a node~$ u_i $, then from the still non-isolated neighbors of~$ u_i $ the first one in~$ u_1,u_2,\dots,u_n $ is matched with~$ u_i $.

A vertex iterative algorithm \cite{DBLP:conf/focs/GoelT12} considers nodes one at a time, and probes each node~$ u $ for neighbors.
The probing for~$ u $ ends after the first successful probe, say for neighbor~$ v $.
Then nodes~$ u $ and~$ v $ are matched.
The probing for~$ u $ also ends after all possible neighbors have been tested without success.
Furthermore, the probing for~$ u $ ends if the algorithm decides to stop probing for further neighbors of~$ u $.
Once the probing for~$ u $ ends, node~$ u $ is never considered again and~$ u $ is never probed as the neighbor of any other node considered later.
In a round of of the algorithm, we denote by~$ (a,b),(a,c),\dots,(a,d),(e,f),(e,g),\dots,(e,h),\dots $ that node~$ a $ is to be probed for neighbors~$ b,c,\dots,d $, then~$ e $ is to be probed for neighbors~$ f,g,\dots,h $, etc.
A pair~$ (u,v) $ is translated to a list of adjacency lists $ \langle u;v,w,x,\dots\rangle $ for each possible choice of~$ w,x,\dots $.
If an adjacency list $ \langle u;v,w,x,\dots\rangle $ is received, then node~$ u $ is matched with~$ v $.
\qed
\end{proof}

To proof that no~\greedymatching-algorithm can guarantee approximation ratio better than~$ \desiredRatio+o(1) $, we first present in \cref{apvUpperBounds} a construction for~\greedymatching-algorithms without access to the number of nodes in the graph, and then adapt the construction in \cref{apvUpperBounds2} to the full class of~\greedymatching-algorithms.

\begin{lemma}
\label{apvUpperBounds}
Let $ A$ be an $\greedymatching $-algorithm without access to the number of nodes.
There is an input graph of degree at most \maxdeg for which $ A $ computes a matching of size at most \desiredRatio times optimal.
\end{lemma}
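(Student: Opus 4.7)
Since $A$ does not know $n$, we may construct the hard instance adversarially in response to $A$'s priority-order queries, committing to adjacency lists only as they are needed and completing the graph (and choosing \mopt) at the end. The target instance will be a disjoint union of two types of small gadgets: ``singleton-gadgets'' realising local ratio $1$ and ``\onetwopath-gadgets'' realising local ratio $\oneOverTwo$, mixed in the proportion $1$ to $\maxdeg{-}2$. A short calculation shows that if we use $a$ singleton-gadgets and $a(\maxdeg{-}2)$ \onetwopath-gadgets, then $|\m|=a(\maxdeg{-}1)$ and $|\mopt|=a(2\maxdeg{-}3)$, yielding exactly the global ratio \desiredRatio.

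First I would fix each gadget explicitly so that it has max degree \maxdeg, high internal symmetry, and the property that whichever node $A$ requests first inside the gadget (and whichever legal neighbour $A$ matches it with) the gadget makes the intended contribution to $(|\m|,|\mopt|)$. Concretely, a singleton-gadget is an edge $\edge{u,v}$ whose endpoints are padded to degree \maxdeg by dummy neighbours living in distinct later-built gadgets, while a \onetwopath-gadget has a central edge whose endpoints are each attached to a future \mopt-endpoint and then padded with $F$-edges. Gadgets of the same type are mutually isomorphic, so $A$'s priorities cannot distinguish one from another; the algorithm is effectively reduced to the deterministic choice of ``which role am I now being forced to play.''

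The adversary's play is then: whenever $A$ submits a priority order and requests the highest-priority list corresponding to a not-yet-isolated node, the adversary scans the order and assigns the first feasible list to a fresh gadget-node, spawning a new (partially revealed) copy of the corresponding gadget type when necessary; whenever $A$ commits to matching two nodes they are registered together as the inner edge of a singleton-gadget or the middle edge of a \onetwopath-gadget, depending on their gadget type. When $A$ halts, the adversary completes every gadget's unrevealed adjacency information to a graph of maximum degree \maxdeg and defines \mopt to use the inner edge of each singleton-gadget and the two outer \mopt-edges of each \onetwopath-gadget. By construction $|\m|/|\mopt|\leq\desiredRatio$.

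The main obstacle is consistency: we must show that the adversary can always answer $A$'s request, i.e.\ that the highest-priority not-yet-isolated adjacency list in any order submitted by $A$ can indeed be realised by some fresh gadget-node in a graph still extendible to one of max degree \maxdeg. This is handled by a standard pairing/flexibility argument. The adversary keeps a large pool of partially-revealed gadgets of each type, cross-linking them only in the final completion step; since $A$ never learns $n$, new gadgets can be spawned on demand to supply whatever adjacency pattern is requested, while degree bounds are preserved because padding edges are resolved lazily at completion time. The delicate part of the argument is bookkeeping the partial revelations so that no previously-committed adjacency list becomes inconsistent with later requests — proving this rigorously is the technical crux of the lemma.
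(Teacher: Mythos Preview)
Your high-level plan --- build the instance adversarially, on the fly, exploiting that $A$ does not know $n$ --- is the right one and matches the paper's strategy.  However, the concrete construction you sketch has a genuine gap that, as written, would not yield the claimed bound.

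The central problem is the role of your ``singleton-gadgets.''  In such a gadget you have $|\m|=|\mopt|=1$, so its local ratio is~$1$: it \emph{helps} the algorithm.  An adversary trying to minimise $A$'s ratio would never voluntarily create one; if the adversary could freely choose the gadget type after each of $A$'s moves, it would always declare the matched edge to be the middle of a \onetwopath and drive the ratio to~$\oneOverTwo$, contradicting the lower bound proved elsewhere in the paper.  What forces the ratio up from~$\oneOverTwo$ to~\desiredRatio is the degree constraint~$\maxdeg$, and your write-up never explains where this constraint bites.  You assert a fixed mix ``$1$ to $\maxdeg{-}2$'' of singleton- versus \onetwopath-gadgets, but you give no mechanism that forces this proportion: $A$ chooses the priority order, the adversary must return the highest-priority list actually present, and nothing in your sketch prevents the mix from drifting toward whichever side benefits $A$ (or, conversely, toward all \onetwopaths, which would be too strong).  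The phrase ``handled by a standard pairing/flexibility argument'' is exactly where the real work lies, and it is not standard.

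A second, related gap is that your \onetwopath-gadget is under-specified.  You need that \emph{whichever} node of the gadget $A$ first uncovers, and \emph{whichever} neighbour $A$ matches it to, the resulting contribution is $(1,2)$ to $(|\m|,|\mopt|)$.  A naive four-node path does not have this property (if $A$ first sees an endpoint it can play optimally), and once you start padding with $F$-edges to other gadgets you must track where those edges land and why they do not let $A$ escape.

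For comparison, the paper does not use a disjoint union of two gadget types at all.  It builds essentially one connected ``core'' (a small center with six nodes) and, over $\maxdeg-3$ rounds of the regular game, attaches triangles to two designated center nodes $a,c$; each attached triangle contributes $(1,2)$ to $(|\m|,|\mopt|)$, and in the endgame the center itself contributes $(2,3)$.  The ratio~\desiredRatio arises precisely because $a$ and $c$ can absorb at most $\maxdeg-3$ triangle-attachments before hitting degree~$\maxdeg$; this is where the degree bound enters.  The paper's case analysis (types~1--3, cases~1--4c) is what replaces your ``consistency'' hand-wave: it shows exactly how the adversary answers every possible highest-priority list while keeping all previously revealed data consistent and all degrees at most~$\maxdeg$.
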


\begin{proof}
We describe the construction of a hard input instance \g for algorithm $ A $ as a game played between $ A $ and an adversary $ B $.
As $ A $ unveils \g only bit by bit, adversary $ B $ may actually construct \g on the fly, thereby reacting to the various moves of $ A $ such that \g has a much larger matching than the solution of $ A $.
Of course, all adjacency lists presented by $ B $ during the whole game have to be consistent with the final graph \g constructed by $ B $.

The game consists of the \emph{regular game}, which lasts for $ s=\maxdeg-3 $ steps, followed by the \emph{endgame}, which has two steps.

During the regular game, adversary $ B $ maintains the following invariant:
each node $ v $ that is not yet isolated has an adjacency list of one of the following types.
\begin{itemize}[itemindent=2.5em]
\item[Type 1:]
$ \langle v; v_{1}, \dots, v_{d} \rangle$ where $ v$ and $v_{1}, \dots, v_{d} $ are \emph{unknown}, i.e.\ they did not occur in a previously received adjacency list, and $ 3\leq d\leq \maxdeg $.
\item[Type 2:]
$ \langle v; v_{1},v_{2} \rangle$ where $ v$ and $v_{1}, v_{2} $ are unknown.
\item[Type 3:]
$ \langle v; v_{1},v_{2},v_{3} \rangle$ where $ v$ and $v_1,v_2 $ are unknown and $ v_3 $ is \emph{known}, i.e.\ node $ v_3 $ was received by $ A $ in a previous adjacency list.
\end{itemize}
Observe, that all nodes in \g have degree at least two.

Consider the very first step of $ A $.
Since all nodes are still unknown, all nodes have adjacency lists of type 1 or 2.
Hence the invariant holds.
Consider step $ i $ and assume that the invariant holds.
Adversary $ B $ presents the highest ranked adjacency list that is of type 1, 2 or 3.
Call that adjacency list $ a_i $.

\begin{figure}
\centering
\begin{minipage}[t]{.32\textwidth}
\begin{tikzpicture}
\tikzstyle{every node} = [circle, fill=white,draw=black,minimum size=13pt,inner sep=0pt];

\node[] (v) {$ v $};
\node[] (w) at ($(v) + (3,0)$) {$ v_1 $};
\node[] (x) at ($(v)+(0,-1.5)$){$ v_2 $};
\node[] (z) at ($(x)+(1,0)$){$ v_3 $};
\node[] (y) at ($(z)+(2,0)$) {$ v_d $};
\node[draw=none] at ($(y)+(-1,0)$) {$ \dots $};

\draw
(w) edge[bend right=15] (x)
(w) edge[bend right=15] (z)
(w) edge[] (y)
(w) edge[,bend right=15] (x)
(v) edge[mg] (w)
(v) edge[opt] (x)
(v) edge[,bend left=15] (z)
(w) edge[bend right=15] (z)
(w) edge[opt   ] (y)
(v) edge[,bend left=15] (y)
;

\draw[white] ($(v)+(-.4,+.4)$) rectangle ($(y)+(+.4,-.4)$);
\end{tikzpicture}
\caption{
A connected component of a hard instance
}
\label{unknownHigh_}
\end{minipage}
~
\begin{minipage}[t]{.65\textwidth}
\begin{tikzpicture}

\node[] (v') {$ m_1 $};
\node[very thick] (w') at ($(v') + (1,0)$) {$ r_1 $};
\node[left of=v'] (x') {$ l_1 $};
\node[,right of=w',,fill=lightgray] (g') {$ u_1 $};
\node[alabel] (vdots') at ($(x')+(0,-.65)$) {$\vdots$};
\node[alabel] (vdots') at ($(g')+(-0,-.65)$) {$\vdots$};
\node[] (v) at ($(v') + (0,-1.5)$) {$ m_k $};
\node[very thick] (w) at ($(v) + (1,0)$) {$ r_k $};
\node[left of=v] (x) {$ l_k $};
\node[right of=w,,fill=lightgray] (g) {$ u_k $};
\node[,fill=lightgray] at ($(g') + (2,0)$) (c) {$ a $};
\node[fill=lightgray]  at ($(c) + (0,-1.5)$) (c') {$ c $};
\node[fill=lightgray,] at ($(c)+(2,0)$) (c'') {$ b $};
\node[fill=lightgray] (d) at ($(c)+(0,-.75)$) {};
\node[fill=lightgray,right of=d] (d') {};
\node[fill=lightgray] (d'') at ($(c')+(2,0)$) {$ d $};

\draw[lightgray,thick,rounded corners=6pt] ($(c)+(-.4,.4)$) rectangle ($(d'')+(.4,-.4)$);
\node[draw=none,fill=none,text=gray] at ($(d')+(.5,0)$) {\rotatebox{90}{center}};

\draw
(g') edge[,bend right=20] (c')
(g) edge[,bend left=20] (c)
(g) edge[] (c')
(g') edge[] (c)
(w') edge[,bend right=35] (x')
(w') edge[   ,bend right=35] (x')
(v') edge[] (w')
(v') edge[] (x')
(w') edge[] (g')
(v') edge[mg] (w')
(v') edge[opt] (x')
(w') edge[] (g')
(w') edge[opt] (g')
(g') edge[] (c)
(w) edge[,bend right=35] (x)
(w) edge[   ,bend right=35] (x)
(v) edge[] (w)
(v) edge[] (x)
(w) edge[] (g)
(v) edge[mg] (w)
(v) edge[opt] (x)
(w) edge[] (g)
(w) edge[opt] (g)
(g) edge[] (c')
(g') edge[bend right=20,] (c')
(g) edge[bend left=20] (c)
(c) edge[] (c'')
(c) edge[] (c'')
(c) edge[] (d)
(c) edge[opt] (d)
(d) edge[] (c')
(d) edge[] (c')
(c') edge[] (d'')
(c') edge[] (d'')
(c'') edge[] (d'')
(c'') edge[opt] (d'')
(c) edge[,bend left=15] (d')
(c) edge[,bend left=15] (d')
(c') edge[,bend right=15] (d')
(c') edge[opt,bend right=15] (d')
;

\draw[dash pattern=on 2.5pt off 2.5pt] ($(w')+(0,-.4)$) .. controls  ($(w')+(.075,-.9)$) and ($(v)+(-.05,1.2)$)  .. ($(v)+(0,.6)$);

\end{tikzpicture}
\caption{
The core of a hard instance
(Gray nodes are unknown,
fat frontier nodes.
The dashed edge is an example for \edge{m_i=v,r_j=v_3} in case 3.)
}
\label{known_}
\end{minipage}
\end{figure}

\textbf{Case 1:}
$ a_i=\langle v; v_{1}, \dots, v_{d} \rangle $ is a type-1 adjacency list.
Since all nodes in $ a_i $ are unknown, we may w.l.o.g. assume that $ A $ matches $ v $ with $ v_1 $.
Adversary $ B $ constructs the connected component $ C $ depicted in \cref{unknownHigh_} which consists only of nodes of types 1 and 2.
All nodes of $ C $ are isolated in the next step, hence the invariant is maintained.
Observe that within $ C $ the maximum matching \mopt scores two edges (the double edges $ \edge{v,v_2},\edge{v_1,v_d} $ in \cref{unknownHigh_}) whereas the matching 
\m computed by $ A $ scores just one edge (the crossed edge $ \edge{v,v_1} $).

Since in Case 1 algorithm $ A $ requests only unknown nodes, adversary $ B $ is able to trick $ A $ into unveiling part of \g from which $ A $ cannot gather knowledge about the rest of \g.
Can $ A $ act smarter?
Assume that $ A $ has already received the adjacency lists of the \emph{middle nodes} $ m_1,$ $\dots,$ $m_k $ of the triangles $ \{l_j,m_j,r_j\} $ of known nodes connected by \emph{frontier nodes} $ r_j $ and unknown nodes $ u_j $ to the still unknown \emph{center} of \g, see \cref{known_}.
If $ A $ requests an unknown node with two unknown neighbors, then $ B $ easily tricks $ A $ by constructing a new triangle $ \{l_i,m_i,r_i\} $.

\textbf{Case 2:}
$ a_i=\langle v; v_{1}, v_{2} \rangle $ is a type-2 adjacency list.
Again, all nodes of $ a_i $ are unknown and we may assume that $ A $ matches $ v$ with $v_1 $.
Adversary $ B $ constructs a triangle $ \{l_i,m_i,r_i\} $ with $ l_i=v_2,m_i=v,r_i=v_1 $ and inserts the edge $ \edge{r_i,u_i} $, with a new unknown node $ u_i $, to connect the triangle to the unknown center.
Observe that before nodes $ m_i,r_i $ are matched, nodes $ m_i,l_i $ are of type 2 and $ r_i,u_i $ are of type 1.
After matching $m_i,r_i$, nodes $ l_i,m_i,r_i $ are isolated and $ u_i $ turns into a type-3 node.
Hence the invariant still holds.
Again, \mopt scores two edges, namely $ \edge{l_i,m_i},\edge{r_i,u_i} $, and $ M $ scores the edge $ \edge{m_i,r_i} $.

Now assume that $ A $ tries to explore the neighborhood of known nodes.
Observe that the only adjacency lists with a known node are of type 3 and have exactly one unknown node:
since the known nodes $ l_j,m_j,r_j $ are already isolated, an unknown node can only be explored in the neighborhood of frontier nodes.
Again, adversary $ B $ tricks $ A $ with a new triangle $ \{l_i,m_i,r_i\} $.

\textbf{Case 3:}
$ a_i=\langle v; v_{1}, v_{2},v_{3} \rangle $ is a type-3 adjacency list.
Since $ v_3 $ is known, $ v_3 $ occurred in a previously presented adjacency list.
Observe that in our construction so far, the only type-3 nodes are unknown neighbors of known frontier nodes.
So $ v $ is the neighbor of a frontier node $ r_j=v_3 $ with $ j<i $.

Is algorithm successful in exploring the unknown neighbor $ u_j $ of $ r_j $, i.e.\ does $ v=u_j $ hold?
Not necessarily, since $ B $ may on the fly construct further neighbors of $ r_j $.
Why?
Since $ r_j $ gets matched as soon as it becomes known, algorithm $ A $ never gets to see the adjacency list of $ r_j $ and consequently $ A $ can never tell if it already knows all neighbors of $ r_j $.
(Adversary $ B $ uses this trick here as well as in the end game.)

Since $ v_3=r_j $ is matched and $ v_1,v_2 $ are unknown, we may assume that $ A $ matches $ v $ with $ v_1 $.
Adversary $ B $ behaves exactly as in case 2 and constructs the triangle $ \{l_i=v_2,m_i=v,r_i=v_1\} $ and inserts the edge \edge{r_i,u_i} where $ u_i $ is a new unknown node.
To complete the devious trick, adversary $ B $ also inserts the edge $ \edge{m_i=v,r_j=v_3}$ (see e.g. the dashed edge in \cref{known_}).
Before $ m_i,r_i $ are matched, node $ l_i $ is of type 2, nodes $ r_i,u_i $ are of type 1 and $ m_i=v $ is of type 3.
After matching $m_i,r_i$, nodes $ l_i,m_i,r_i $ are isolated and $ u_i $ turns into a type-3 node.
$ u_j $ is still of type 3.
Hence the invariant still holds.
As in case 2, \mopt scores $ \edge{l_i,m_i},\edge{r_i,u_i} $ and \m scores $ \edge{m_i,r_i} $.

This concludes the regular game.
In the first step of the endgame adversary $ B $ makes algorithm $ A $ match $ a$ with $b $.
Hence in the next and last step algorithm $ A $ matches $ c $.
So algorithm $ A $ scores two edges in the center, whereas three edges are optimal.
As desired, we get
$$ |\m|=s+2=\maxdeg-1 ~~~~~~\mbox{ and }~~~~~~ |\mopt|=2s+3 =2\maxdeg-3\,.$$

Observe that our invariant still holds in the first step of the endgame.
Again, adversary $ B $ presents the highest ranked adjacency list of type 1, 2 or 3.
Observe that $ a$ and $c $ are the only type-1 nodes left, since the $ u_j $ have known neighbors and are of type 3 and all other nodes have degree two and are of type 2.
The degree of $ a$ and $c $ is $ \delta\leq3+s $, since both $ a,c $ have three center neighbors and each step of the regular game adds at most one neighbor to $ a$ respectively $c $.
Let $ a_{\maxdeg-2} $ be the adjacency list received in step $ s+1=\maxdeg-2 $.

\textbf{Case 4a:}
$ a_{\maxdeg-2}=\langle v; v_{1}, \dots, v_{\delta} \rangle $ is a type-1 adjacency list.
Since all nodes of $ a_{\maxdeg-2} $ are unknown we may assume that $ A $ matches $ v $ with $ v_1 $.
Adversary $ B $ chooses $ v =a$, $ v_1=b $ and $ v_2,\dots,v_\delta $ as the remaining neighbors of $ a $.

\textbf{Case 4b:}
$ a_{\maxdeg-2}=\langle v; v_{1}, v_{2} \rangle $ is a type-2 adjacency list.
Since all nodes of $ a_{\maxdeg-2} $ are unknown we may assume that $ A $ matches $ v $ with $ v_1 $.
Adversary $ B $ sets $ v =b$, $ v_1=a $ and $ v_2=d $.

\textbf{Case 4c:}
$ a_{\maxdeg-2}=\langle v; v_{1}, v_{2},v_{3} \rangle $ is a type-3 adjacency list.
As in case 3, the known node $ v_3 $ is some matched frontier node $ r_j,j<\maxdeg-2 $ and we may assume that $ A $ matches $v $ with $ v_1 $, since $ v_1,v_2 $ are unknown.
As in case 3, adversary $ B $ does \emph{not} present the adjacency list of the unknown node $ u_j $.
Instead, $ B $ makes $ b $ a neighbor of $ r_j $ by inserting \edge{v_3{=}r_j,b}---now $ b $ has three neighbors---and sets $ v{=}b,v_1{=}a$ and $v_2{=}d $.

Adversary $ B $ does not violate degree constraints.
Nodes introduced in case 1 have degree at most $ d\leq\maxdeg $.
All other degrees are at most three, but for $ a,c $ and frontier nodes $ r_j $.
As discussed, nodes $ a,c $ have degree at most $ \delta=	3+s\leq\maxdeg $.
Frontier node have degree at most $ 3+(s-1)+1=\maxdeg $, since in each but the first step of the regular game and in step $ s+1 $ at most one incident edge is added.
\qed
\end{proof}

\begin{theorem}
\label{apvUpperBounds2}
Let~$ A $ be an \greedymatching-algorithm.
There is a graph~$ G $ of degree at most \maxdeg for which $ A $ computes a matching of size at most $ \desiredRatio+\varepsilon $ times optimal for any $ \varepsilon>0 $.
\end{theorem}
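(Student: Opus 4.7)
The plan is to reduce to \cref{apvUpperBounds} by taking $N$ disjoint copies of the single-instance hard example, where $N$ depends on~$\varepsilon$. The adversary $B$ first commits to a graph size of $n = N\cdot c_{\maxdeg}$ (adding a constant number of isolated-edge padding components if needed to reach exactly $n$), where $c_{\maxdeg}$ is an upper bound on the size of a single hard instance produced by \cref{apvUpperBounds}, and announces $n$ to $A$ at the start. Then $B$ runs the single-copy adversary strategy on each of the $N$ copies independently and in parallel.

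When $A$ submits a priority order on adjacency lists, $B$ scans the order and returns the highest-priority entry that is consistent with the current state of some copy's type-1/2/3 invariant; $B$ then updates that copy's state exactly as in the single-copy proof. Because all ``unknown'' nodes in \cref{apvUpperBounds} are assigned labels by $B$ only upon revelation, $B$ maintains disjoint pools of labels per copy and can always honor the response while keeping the copies independent and the final graph~$G$ well-defined. By \cref{apvUpperBounds}, each full copy contributes $\maxdeg-1$ to $|M|$ and $2\maxdeg-3$ to $|M^*|$, so without padding we already have $|M|/|M^*| = \desiredRatio$; the padding edges contribute equally to both sums, shifting the ratio above $\desiredRatio$ by at most $O(1/N)$, which is absorbed into~$\varepsilon$ by choosing $N$ large enough.

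The main obstacle is ensuring cross-copy consistency: $B$ must guarantee that its sequence of answers is simultaneously realizable as adjacency lists in one graph on exactly $n$ nodes. This is a careful but essentially mechanical adaptation of the single-copy invariant-maintenance argument, since the adversary only ever reveals local structure (a triangle plus at most one auxiliary node per regular step, and the central gadget $a,b,c,d$ in the endgame) that never references nodes from other copies. One subtle point is when $A$'s priority order presents an adjacency list whose label set spans multiple copies or otherwise cannot correspond to any edge in~$G$: here $B$ simply skips that entry as vacuously absent and proceeds to the next entry, preserving the invariant. A second subtlety is that the single-copy construction has variable size depending on which of cases 1--3 the algorithm triggers in each step; $c_{\maxdeg}$ is taken as the worst-case upper bound and the unused nodes in under-filled copies are absorbed as additional padding, contributing again to $|M|$ and $|M^*|$ equally and affecting the ratio only by $O(1/N)$.
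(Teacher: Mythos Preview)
Your high-level idea---run many copies of the \cref{apvUpperBounds} construction so that the algorithm's a-priori knowledge of $n$ is diluted---is exactly what the paper does. But your specific execution differs from the paper's in a way that creates a genuine gap.

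You pre-partition the $n$ labels into disjoint pools, one per copy, and then skip any adjacency list whose labels span two pools. This is not a black-box application of \cref{apvUpperBounds}: the single-copy adversary there is obligated to return the \emph{globally} highest-priority type-1/2/3 list, and the analysis uses that. Once you skip cross-pool lists you are playing a strictly weaker strategy and you leak the partition to $A$ round by round; you would have to re-argue that $A$ cannot exploit this, which you do not do. Relatedly, because the single-copy construction has variable node count (case-1 components can have anywhere from $4$ to $\maxdeg+1$ nodes), a pre-fixed pool may run out of labels before its copy completes, or have many labels left over---and the leftover labels must still be made into a consistent graph that agrees with every earlier skip. Your ``isolated-edge'' padding has degree-$1$ nodes, which are \emph{not} of type~1/2/3; if $A$ ever placed a degree-$1$ list high in its order and you skipped it, you can no longer introduce such a node later.

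The paper sidesteps both problems by making the construction sequential rather than parallel: it never pre-partitions labels. The adversary keeps exactly one \emph{active} center at a time, always returns the globally highest-priority type-1/2/3 list (so \cref{apvUpperBounds} applies verbatim at each step), and assigns freshly revealed labels to the current active center on the fly. When that center's hub nodes reach degree $\maxdeg$ the next request triggers its endgame, it becomes inactive, and a new center opens. The adversary tracks the running node count and, once within $O(\maxdeg)$ of the announced $n$, enters a global endgame that uses the remaining labels to build small complete bipartite components (all degrees at least two, so the invariant is preserved). Only the last center and the $O(1)$ bipartite pieces can be solved optimally by $A$; the $\Omega(t)$ inactive centers each force ratio exactly $\frac{\maxdeg-1}{2\maxdeg-3}$, and the claim follows.
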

\begin{proof}
We modify the adversary~$ B $ who constructs hard inputs in the proof of \cref{apvUpperBounds}.
First, the modified adversary $ B' $ announces the number $ t\maxdeg $ of nodes, where $ t$ is a large integer.
Using so many nodes~$ B' $ constructs~$ \Omega(t) $ connected components, each with~$ O(\maxdeg) $ nodes and approximation ratio no better than~\desiredRatio.
A negligible portion of the graph might be solved optimally by~$ A $.
In the proof we frequently refer to the adjacency list types and cases found in the proof of \cref{apvUpperBounds} (types 1, 2, and 3, and cases 1, 2, 3, and 4a-c).

Again, the game between $ A $ and $ B' $ is split up into the regular game and the endgame.
Like the adversary~$ B $ from the proof of \cref{apvUpperBounds}, in each round of the regular game the construction of $ B' $ keeps up the invariant that all non-isolated nodes in the graph have adjacency lists of type 1, 2 or, 3 and $ B' $ returns the highest priority adjacency list having one of these types.
However, depending on the requests of $ A $, not only one but several centers $ C_1,C_2,\dots $ might be constructed, each in its own connected component of~$ G $.
Each center $ C_i $ is defined as in the proof of \cref{apvUpperBounds}, with nodes $ a_i,b_i,c_i,d_i $ and two more nodes unique to~$ C_i $, see \cref{known_}.
Each $ C_i $ will get attached to it a maximum number of triangles, which are not connected to any other center.
Thereafter, the nodes of~$ C_i $ are supposed to be matched in the same order as in the proof of \cref{apvUpperBounds}, i.e.\ when no more triangles are attached, nodes $ a_i $ and $ b_i $ are matched to each other before $ c_i $ is matched.
Once $ c_i $ is matched, all edges in the connected component of $ C_i $ are removed.

Assume that $ B' $ has already created the centers $ C_1,\dots,C_l $.
Call $ C_i $ \emph{active} if $ a_i$ is not yet matched with $b_i $, and \emph{inactive} otherwise.
The construction will ensure that $ C_1,\dots,C_{l-1} $ are inactive;
$ C_l $ might still be active.
(We note here that after center $ C_l $ becomes inactive, there are nodes in the rest of the connected component $ K $ of $ C_l $ which do \emph{not} have adjacency lists of types 1, 2 or 3.
However, all these nodes are neighbors of $ c_l $ and adversary $ B' $ adds no more nodes to $ K $.
Thus $ A $ scores at most one more edge in $ K $.
Therefore, the additional adjacency list types do not have effect on the rest of the construction and we do not discuss these types explicitly.)

Assume that in the next round $ A $ receives a \textbf{type-2} adjacency list of a node with two unknown neighbors.
Assume that $ C_l $ is already inactive, then $ B' $ creates the next center $ C_{l+1} $ and connects a new type-2 triangle to $ a_{l+1},c_{l+1} $ as described in case 2.
If otherwise $ C_l $ is still active, let $ \delta $ be the number of neighbors of $ a_l,c_l $ constructed so far and recall that we demand $ \delta\leq\maxdeg $.
If $ \delta<\maxdeg $, then $ B' $ connects a new type-2 triangle to $ a_l,c_l $ as described in case 2.
If $ \delta=\maxdeg $, then $ B' $ makes $ A $ match $ a_l $ with $ b_l $ as described in case 4b, thereby inactivating $ C_l $.

Assume that in the next round $ A $ receives a \textbf{type-3} adjacency list.
By construction, the received node is unknown and among its three neighbors there is exactly one known node $ v_3 $, where $ v_3=r $ is a frontier node $ r $ in the connected component of the still active center $ C_l $. 
Let $ \delta $ be the number of neighbors of $ a_l,c_l $ constructed so far.
If $ \delta<\maxdeg $, then $ B' $ connects a new type-3 triangle to $ a_l,c_l $ as described in case 3.
If $ \delta=\maxdeg $, then $ B' $ makes $ A $ match $ a_l $ with $ b_l $ as described in case 4c, and inactivates $ C_l $.

Assume that in the next round $ A $ receives a \textbf{type-1} adjacency list.
If the degree of the received node is smaller than \maxdeg, then $ B' $ proceeds as in case 1 and creates a new type-1 connected component.
Now assume that the received node has degree \maxdeg.
If nodes $ a_l,c_l $ have degree less than \maxdeg, then again $ B' $ creates a new type-1 connected component.
If 
nodes $ a_l,c_l $ have degree $ \delta=\maxdeg $, then $ B' $ makes $ A $ match $ a_l $ with $ b_l $ as described in case 4a, and thereby inactivates $ C_l $.

Why is $ C_l $ inactivated before~$ B' $ constructs the next active center?
In type-2 and type-3 rounds an increasing number of triangles is connected to $ C_l $ until the degrees of $ a_l,c_l $ are \maxdeg.
(In intermediate type-1 rounds only type-1 connected components are constructed.)
Thereafter, $ C_l $ is inactivated in the first type-2 or type-3 round or the first type-1 round in which an adjacency list of a degree-\maxdeg node is received.
(In intermediate type-1 rounds with nodes of degree less than~\maxdeg only type-1 connected components are constructed.)

The endgame begins as soon as $ B' $ has constructed $ k\geq t\maxdeg-6\maxdeg$ nodes.
Let~$ \nu = t\maxdeg-k $ be the number of nodes still to be constructed.
Observe that $ 6\maxdeg\geq\nu\geq2\maxdeg $ holds, since in each round no more than $4\maxdeg$ additional nodes are introduced (e.g. if~$ \maxdeg=3 $ holds and a new triangle is connected to a new center).
Since $ B' $ has committed to a number of exactly $ t\maxdeg $ nodes, in the first round of the endgame $ B' $ utilizes all remaining $ \nu $ nodes to create additional connected components $ \Gamma_1,\dots,\Gamma_c $, each being a complete bipartite graph with $ 2 $ nodes on the right side and between~2 and~\maxdeg (both~2 and~\maxdeg included) nodes on the left. 
The left sides are as large as possible such that $ c=O(1) $ is constant and all nodes in the~$ \Gamma_i $ have degree at least two.
All nodes in  $ \Gamma_1,\dots,\Gamma_c $ are still unknown, in particular all nodes have adjacency lists only of types 1 or 2.
Since right sides have two nodes, each $ \Gamma_i $ has at most two edges in a maximum matching, making an additional constant number $ 2c $ of optimal edges in total.
We assume that $ A $ performs optimally in all $ \Gamma_i $, thereby scoring~$ 2c $ edges.

Also in the endgame, algorithm $ A $ matches still unmatched nodes in the already inactive centers $ C_1,\dots,C_{l-1} $.
We assume that $ A $ performs optimally also in the connected component of the last center $ C_l $.

What is the approximation ratio of $ A $?
Observe that each type-1 component has~$ O(\maxdeg) $ nodes.
The same is true for each connected component with a (constant size) center~$ C_i $, since~$ O(\maxdeg) $ triangles (of constant size) are attached to~$ C_i $.
So as claimed,~$\Omega(t)$ connected components are constructed.
Recall that $ A $ scores only one out of two edges in each type-1 connected component.
On the other hand, in the connected component of an inactive center~$ A $ achieves approximation ratio exactly~\desiredRatio.
In all other components,~algorithm $ A $ even performs optimally.
Therefore, to bound the performance of $ A $ we may assume that no type-1 components are constructed.
Since only $ C_l $ might be active at the end of the regular game, there are at least $ l-1=\Omega(t) $ inactive centers, hence the approximation ratio of $ A $ is at most
\begin{align*}
\frac{(l-1)\cdot~~(\maxdeg-1)+(2\maxdeg-3)+2c}{(l-1)\cdot(2\maxdeg-3)+(2\maxdeg-3)+2c}\,,
\end{align*}
since $ A $ performs optimally in $ C_l $ and in $ \Gamma_1,\dots,\Gamma_c $.
Letting $ t\to\infty $ we get~$ l\to\infty $ and this ratio is dominated by $ \frac{(l-1)\cdot(\maxdeg-1)}{(l-1)\cdot(2\maxdeg-3)} $.
The statement follows.
\qed
\end{proof}

\section{Conclusion}
We have analyzed the worst case approximation ratio of the well-known \mingreedy algorithm on graphs of bounded degree.
Our performance guarantees of \twoOverThree for graphs of degree at most three and of \desiredRatio for \maxdeg-regular graphs are tight.
In particular, \mingreedy is optimal in the large class of \greedymatching-algorithms, which contains many prominent greedy matching algorithms.
We also proved a performance guarantee of \weakerRatio for graphs of degree at most \maxdeg, and we conjecture that also in this case \mingreedy is optimal among \greedymatching-algorithms and achieves a worst case approximation ratio of at least \desiredRatio.

Our worst case performance guarantees are stronger than the best known worst case bounds on the expected approximation ratio for the well-known greedy matching algorithms \greedy, \mrg and \shuffle, if degrees are small.

\inlineheading{Open Questions.}
Is \mingreedy optimal among \greedymatching-algorithms on graphs of degree at most \maxdeg?

What bounds for \mingreedy can be shown for more restricted graph classes, e.g. bipartite graphs?

Recall that the expected approximation ratio of the randomized \mingreedy algorithm is $ \oneOverTwo+o(1) $ w.h.p. on graphs of arbitrarily large degree.
Does randomized \mingreedy have an expected approximation ratio strictly better than \desiredRatio if degrees are bounded by \maxdeg?
~\\

\noindent
\inlineheading{Acknowledgements.}
I thank Georg Schnitger for many helpful discussions.

\end{document}